\documentclass{lmcs}
\usepackage{amsmath}
\usepackage{amssymb}
\usepackage{euscript}
\usepackage{hyperref}
\usepackage{amscd, mathabx}
\usepackage{pstricks,pst-plot,pst-node}
\usepackage{color}
\usepackage{stmaryrd}

\begin{document}

\newcommand{\pfclm}[1]{\vspace{3mm}\noindent {\bf Proof of Claim\ #1}.}
\newcommand{\dom}{\mbox{dom\ }}
\newcommand{\ran}{\mbox{ran\ }}
\newcommand{\res}{\!\restriction\!}  
\newcommand{\<}{\langle}
\renewcommand{\>}{\rangle}
\newcommand{\nbhd}[1]{{\llbracket #1\rrbracket}}
\newcommand{\nbhdbar}[1]{\overline{{\llbracket #1\rrbracket}}}

\newcommand{\re}{r.e.~}
\newcommand{\CC}{{\mathcal C}}
\newcommand{\DD}{{\mathcal D}}
\newcommand{\M}{{\mathcal M}}
\newcommand{\N}{\mathbb{N}}
\newcommand{\Q}{\mathbb{Q}}
\newcommand{\Z}{\mathbb{Z}}
\newcommand{\R}{\mathbb{R}}
\newcommand{\C}{\mathbb{C}}
\newcommand{\sgn}{\operatorname{sgn}}
\newcommand{\F}{{\EuScript F}}
\newcommand{\D}{{\EuScript D}}
\newcommand{\G}{{\EuScript G}}
\newcommand{\J}{{\EuScript J}}
\newcommand{\baire}{{\mathcal N}}
\newcommand{\bairehex}{{\mathcal N}^{\#}}
\newcommand{\Ftilde}{\tilde{F}}
\newcommand{\DT}{\operatorname{DyaTree}}
\newcommand{\rs}[1]{\upharpoonright #1}

\newcommand{\PA}{\mathsf{PA}}
\newcommand{\concat}{\symbol{94}}

\newcommand{\Dy}{\mathrm{Dy}}
\newcommand{\Ds}{\mathrm{D}^*}

\title[A Foundation of Real Computation] {A Recursion Theoretic Foundation of Computation over Real Numbers}
\author{Keng Meng Ng, Nazanin R. Tavana and Yue Yang}
\address{Division of Mathematical Sciences, School of Physical \& Mathematical Sciences,
Nanyang Technological University, 21 Nanyang Link, Singapore 637371}
\address{Amirkabir University of Technology, 424 Hafez Ave, Tehran, Iran, P.O. Box: 15875-4413}
\address{Department of Mathematics, National University of Singapore, Block S17,
10 Lower Kent Ridge Road, Singapore 119076}

\email{kmng@ntu.edu.sg}
\email{nrtavana@aut.ac.ir}
\email{matyangy@nus.edu.sg}

\begin{abstract}
We define a class of computable functions over real numbers using functional schemes similar to the class of primitive and partial recursive functions defined by G\"odel \cite{Godel:1931, Godel:1934} and Kleene \cite{Kleene:1936}.
We show that this class of functions can also be characterized by master-slave machines, which are Turing machine like devices.
The proof of the characterization gives a normal form theorem in the style of Kleene \cite{Kleene:1936}.
Furthermore, this characterization is a natural combination of two most influential theories of computation over real numbers, namely,
the type-two theory of effectivity (TTE) (see, for example, Weihrauch \cite{Weihrauch:2000}) and the Blum-Shub-Smale \cite{Blum.Shub.ea:1989} model of computation (BSS).
Under this notion of computability, the recursive (or computable) subsets of real numbers are exactly effective $\Delta^0_2$ sets.
\end{abstract}

\thanks{The first author is partially supported by the MOE grant MOE-RG26/13.
The last author is partially supported by MOE grant MOE-2019-t2-2-121.}

\maketitle

\section{Introduction}
The original motivation of this paper was to understand the notion of algorithm in its general form, i.e.,
not necessarily over domain $\N$ --- the set of natural numbers\footnote{In this paper,
we use both $\N$ and $\omega$ to denote the set of natural numbers.}.
By studying algorithms over real numbers and making comparisons with their counterpart over $\N$,
we came across a notion of computability over real numbers---actually two similar notions,
one over Baire Space \footnote{In this paper, we use both $\baire$ and $\omega^{\omega}$ to denote the Baire space.}
and the other over $\R$ identified as the set of equivalence classes of Cauchy sequences.
The most interesting feature of this notion is its resemblance to the classical notion of ``computability over natural numbers''.
In fact, it can be viewed as a natural lift of the classical notion from many different aspects (recursive definitions, machine model
and definablility).
Before we elaborate the details, let us make some short comments on the study of algorithms and how it motivated us.

The concept of algorithm was formalized in 1930s. It is one of the greatest intellectual achievements in history.
In fact there are several equivalent formalizations based on different insights on effectiveness.
They all apply to the domain of natural numbers or objects which can be coded by natural numbers.
Furthermore, the formalizations themselves can be formulated within first-order Peano arithmetic.
All formulations captured the finite and discrete nature of algorithms,
and revealed the intrinsic link between computability and natural numbers.

However, the informal concept of algorithms applies not only the natural numbers but also to other domains.
For example, Newton's method of finding roots is an algorithm over real numbers and real-valued functions.
Algorithms are used in other areas of natural sciences too, for example, laboratory procedures for experimental scientists.
Most people would agree that (at least for a fixed domain) there is a clear intuition about what algorithms are.
What we hope to formalize is ``that clear intuition'' over real numbers.

Among the informal descriptions of algorithm, the one below gives the best illustration of our approach:

\begin{defi} [Informal] \label{informal}
An {\em algorithm} is a finite set of instructions such that
\begin{enumerate} 
\item each instruction is ``effective'' (that is, definite or clearly stated or sufficiently simple etc);
\item there is a clear organization of the instructions so that from input to output, we know how to go from one instruction to another.
\end{enumerate}
\end{defi}
\noindent What we like about this description is the separation of the instruction part and the overall organization part.
The instructions in item (1) may depend on domains and one needs to justify the effectiveness of them;
whereas the organization part in item (2) is really the heart of algorithms,
and it should be ``absolute'' when we move from one domain to another.

When the domain is the set of real numbers, part (1) actually depend on the underlying topology,
that is why we consider both the Baire space $\omega^{\omega}$ and the real numbers $\R$.
However, part (2) is done in the same way as over $\N$.
Over each of the domains, we formalize the notion of computability in two ways, namely, by functional schemes and by machines;
and show that the two formalizations actually give us the same class of functions.
We will come back to the discussion of algorithms in the concluding section.

\section{Computability Theory over Baire Spaces} \label{sec:Baire}

We choose Baire space $\baire=\omega^{\omega}$ as our first example,
because it will shed light on computation over real numbers while avoiding the tacky issues like dealing with equivalence classes of Cauchy sequences.
It is the first step of formalizing computability on higher types, an area closer to logicians than computational mathematicians.

As we observed earlier, computability and natural numbers are intrinsically linked together.
Therefore, we always need to have natural numbers at our disposal.
For this reason, when we talk about computability over a fixed domain $D$,
we consider the class of functions $\mathcal{F}$ from mixed types to mixed types, more precisely,
$\mathcal{F}$ consists of functions from $Y\to Z$ where $Y\in \{\N^p\times D^q, \N^p, D^q\}$ and $Z\in \{\N, D\}$,
where $p$ and $q$ are positive integers.
In particular functions from $\N$ to $\N$ and from $D$ to $D$ are included.
As functions with codomain $\N^p\times D^q$ can be viewed as juxtaposition of $p+q$ many functions in $\mathcal{F}$,
we will not study them explicitly.

For notational simplicity, we sometimes discuss only functions with domain $\N\times D$ instead of $\N^p\times D^q$.
Note that for domains like Baire space, $\N$ can be naturally embedded into them.
However, working with mixed types is necessary in more general domains, for example, computation on finite rings.
In the remaining of this section, the domain $D$ will be the Baire space $\baire$,
and $\mathcal{F}=\{f: f$ is a function from $\N^p\times \baire^q$ to either $\N$ or $\baire\}$.

\subsection{Formalizing computability using functional schemes}
We begin with a concept used by computable analysts, which can be traced back to the notion of relativised computation by Turing \cite{Turing:1939}.

\begin{defi}
We say that a partial function $F: \baire \to \baire$ is {\em TTE-computable} over $\baire$ if there is an oracle Turing machine $M$ such that
for all $x$ in $\baire$, $F(x)\downarrow =y$ if and only if for all $i$ in $\N$, $M^x(i)\downarrow=y(i)$;
and $F(x)\uparrow$ if and only if for some $i$ in $\N$, $M^x(i)\uparrow$.
(Here we used the standard notation that $M^x$ is the machine $M$ with $x$ as its oracle, and $\downarrow$ means ``is defined'' and $\uparrow$ means ``is undefined''.)
\end{defi}

One can also define TTE-computable functions as induced by recursive functions $f:\omega^{<\omega}\to \omega^{<\omega}$ satisfying the usual monotonicity conditions.
We use oracle Turing machines because of their direct connection with master-slave machines which we define later.
It is well-known that universal oracle Turing machine exists (see for example, Soare \cite{Soare:1987} p.48).  Thus we have

\begin{lem}
  There is a universal TTE-computable function $\Psi(e;x)$ over $\baire$, i.e., for any TTE-computable function $F:\baire \to \baire$,
  there is some $e\in \N$ such that for all $x\in \baire$, $F(x)=\Psi(e;x)$.
\end{lem}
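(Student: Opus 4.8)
The plan is to obtain $\Psi$ directly from a universal oracle Turing machine. Fix an effective enumeration $\{M_e\}_{e\in\N}$ of all oracle Turing machines and let $U$ be a universal oracle Turing machine, so that $U^x(\langle e,i\rangle)\simeq M_e^x(i)$ for every $e,i\in\N$ and every $x\in\baire$; the existence of such a $U$ is the classical fact cited above (Soare \cite{Soare:1987}, p.~48). Using $U$ I would define a partial function $\Psi$ of a number argument $e$ and a Baire argument $x$ by declaring $\Psi(e;x)\downarrow = y$ if and only if $U^x(\langle e,i\rangle)\downarrow = y(i)$ for all $i\in\N$, and $\Psi(e;x)\uparrow$ if and only if $U^x(\langle e,i\rangle)\uparrow$ for some $i\in\N$. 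This definition is well posed (the value $y$, when it exists, is unique, being read off coordinatewise) and it mirrors, with the fixed machine $U$ in place of an arbitrary $M$, the clauses defining TTE-computability.

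For universality, suppose $F:\baire\to\baire$ is TTE-computable, witnessed by an oracle Turing machine $M$, and choose $e$ with $M = M_e$. For every $x\in\baire$ and every $i$ we have $M^x(i)\simeq M_e^x(i)\simeq U^x(\langle e,i\rangle)$, so comparing the two sets of clauses gives $F(x)\downarrow = y$ iff $\Psi(e;x)\downarrow = y$, and $F(x)\uparrow$ iff $\Psi(e;x)\uparrow$. Hence $F(\cdot) = \Psi(e;\cdot)$ as partial functions, which is the statement.

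It then remains to verify that $\Psi$ is itself TTE-computable. Since $\Psi$ has a mixed-type domain $\N\times\baire$ while the definition of TTE-computability was stated for maps $\baire\to\baire$, I would first fold the two arguments into one oracle $z\in\baire$ via $z(0) = e$ and $z(n+1) = x(n)$ --- a harmless coding, since $\N$ embeds into $\baire$, and the form in which the mixed-type inputs of this section are meant to be handled. Then I would describe an oracle Turing machine $N$ that, on numerical input $i$ with oracle $z$, reads $e := z(0)$ and simulates $U$ on input $\langle e,i\rangle$, answering each oracle query ``$n$'' of $U$ with $z(n+1)$, and outputs whatever $U$ outputs. This gives $N^z(i)\simeq U^x(\langle e,i\rangle)$ for all $i$, so $N$ witnesses TTE-computability of $\Psi$ in the sense above.

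I do not expect a genuine obstacle: the content is the existence of $U$, which is quoted, and the rest is bookkeeping. The two points deserving a moment's care are that the \emph{divergence} clauses line up --- which is automatic, because $\Psi(e;x)$ was declared undefined exactly when some coordinate $U^x(\langle e,i\rangle)$ diverges, in exact parallel with the divergence clause for $F$ --- and that the simulation carried out by $N$ introduces no fresh nontermination, so that $N^z(i)$ halts iff $U^x(\langle e,i\rangle)$ halts. If anything is to be called the main step, it is this last verification that $\Psi$, and not merely each section $\Psi(e;\cdot)$, is TTE-computable.
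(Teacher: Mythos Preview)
Your proposal is correct and follows exactly the approach the paper intends: the paper simply cites the existence of a universal oracle Turing machine (Soare \cite{Soare:1987}, p.~48) and states the lemma without further argument, so you have merely filled in the bookkeeping the authors left implicit. Your extra care about the mixed-type domain and the divergence clauses is fine but not strictly needed here, since the paper treats $\Psi(e;x)$ as a primitive basic function in Definition~\ref{pcX} rather than as something that must itself be shown TTE-computable in the single-argument sense.
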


Here is our first main definition:
\begin{defi} \label{pcX}
The class of {\em partial recursive functions over $\baire$} is the smallest subclass $\mathcal{C}$ of $\mathcal{F}$ satisfying the following conditions:
\begin{enumerate} [(1)]
\item $\mathcal{C}$ contains the following basic functions:
\begin{enumerate} [(a)]
 \item Zero function $Z: \N\to \N$, $Z(n)= 0_{\N}$;
 \item successor function $S: \N\to \N$, $S(n)= n+1$; and
 \item for natural numbers $p, q$ and $i$ with $p+q\geq 1$ and $1 \leq i\leq p+q$ the projection function
 \[
 \pi^{p+q}_i (n_1,\dots,n_p;x_1,\dots, x_q)
=\left\{
      \begin{array}{ll}
       n_i, & \hbox{if $i\leq p$;} \\
       x_{i-p}, & \hbox{if $i>p$.}
      \end{array}
 \right.
\]
\item A universal TTE-computable function $\Psi(e;x)$ over $\baire$;
and
\item the characteristic function $\chi:\baire\to \N$ of $\{0_{\baire}\}$ where $0_{\baire}$ is the constant zero sequence in Baire space.
\end{enumerate}
\item $\mathcal{C}$ is closed under
\begin{enumerate} [(a)]
\item composition, provided the types are matched;
\item primitive recursion with respect to $\N$; and
\item $\mu$-operator with respect to $\N$.
\end{enumerate}
\end{enumerate}
\end{defi}

The class of {\em primitive recursive functions over $\baire$} is defined similarly with the following changes:
(i) Fix a recursive list of indices $e_i, i=0,1\dots$ such that the Turing machine $M_{e_i}$ computes the $i$-th classical primitive recursive function.
Define the universal primitive TTE function $\Theta(i;x)$ as $\Theta(i;x):=\Psi(e_i; x)$.
Replace $\Psi$ in item (1)(d) by $\Theta$; (ii) Drop the clause (2)(c).
We say a predicate is primitive recursive over $\baire$ if its characteristic function is a primitive recursion function over $\baire$.
Observe that all primitive recursive over $\baire$ functions are total.

We now give precise definition of the terminologies used in the closure properties (item (2) in Definition \ref{pcX}).
In the definitions below, $\vec{n},\vec{m},\vec{x}$ and $\vec{y}$ stand for the tuples $(n_1,\dots,n_p)$, $(m_1,\dots,m_r)$,
$(x_1,\dots, x_q)$ and $(y_1,\dots,y_s)$ respectively.

\begin{defi} Given a function $f(n_1,\dots,n_p;x_1,\dots, x_q)$ and a finite sequence of functions  $g_i(\vec{m}; \vec{y})$ and $h_j(\vec{m}; \vec{y})$ where $1\leq i\leq p$ and $1\leq j\leq q$,
we say that the types of $f$, $g_i$ and $h_j$ are {\em matched} if the codomain of $g_i$ is $\N$ and the codomain of $h_j$ is $\baire$.

We say that a class of functions $\mathcal{C}$ is {\em closed under composition, provided the types are matched}, if for functions $f(\vec{n};\vec{x})$, $g_i(\vec{m}; \vec{y})$ and $h_j(\vec{m}; \vec{y})$ with matched types ($1\leq i\leq p$ and $1\leq j\leq q$), $f$, $g_i$ and $h_j$ are in $\mathcal{C}$ implies the function
\[
f(g_1(\vec{m}; \vec{y}),\dots, g_p(\vec{m}; \vec{y}); h_1(\vec{m}; \vec{y}), \dots, h_q(\vec{m}; \vec{y}))
\]
is in $\mathcal{C}$.
\end{defi}

\begin{defi}
Let $G:\N^p\times \baire^q\to \baire$ and $H:\N^{p+2}\times \baire^{q}\times \baire \to \baire$ be given.
We say that a function $F:\N^{p+1}\times \baire^q\to \baire$ is obtained from $G$ and $H$ {\em by primitive recursion} with respect to $\N$, if
\begin{eqnarray*}
  F(0,\vec{n};\vec{x})&=& G(\vec{n};\vec{x})\\
F(k+1,\vec{n};\vec{x})&=& H(k,\vec{n};\vec{x},F(k,\vec{n};\vec{x})).
\end{eqnarray*}
Similarly for function $F:\N^{p+1}\times\baire^q\to \N$ (just change the codomains of $G$ and $H$ to $\N$ and move the term $F(k,\vec{n}; \vec{x})$ before the semicolon in the second equation).
\end{defi}

\begin{defi}
Let $G:\N^{p+1}\times \baire^q\to \N$ be given.
We say that a function $F:\N^{p}\times \baire^q\to \N$ is obtained from $G$ {\em by $\mu$-operator} with respect to $\N$, if
\[
F(\vec{n};\vec{x})=\left\{
  \begin{array}{ll}
    \hbox{the least $k$ such that for any $k'<k$,}& \hbox{if such $k$ exists;}\\
    \hbox{$G(k',\vec{n}, \vec{x})\downarrow \neq 0_{\N}$ and $G(k,\vec{n},\vec{x})=0_{\N}$}, &  \\
     & \\
    \hbox{undefined}, & \hbox{otherwise.}
  \end{array}
\right.
\]
We write $F(\vec{n};\vec{x})$ as $\mu k\ G(k,\vec{n};\vec{x})=0$.
\end{defi}

Next we state a few lemmas which will be needed later.  Only sketches of the proofs are given, as they are almost entirely the same as the standard proofs.
Note that every classical recursive function on $\N$ has a natural pointwise lift on $\baire$.
For example, the lift of exponential function $n\mapsto 2^n$ is: $(n_0,n_1,\dots)\mapsto (2^{n_0}, 2^{n_1}, \dots)$.
We understand that this lift is not the usual exponential function on real numbers, that is why we need to go beyond Baire space later.

\begin{lem} \label{lift}
All partial recursive functions over natural numbers in the standard sense are partial recursive over $\baire$.
In particular, G\"odel numbering of finite sequences of natural numbers and the corresponding decoding functions are primitive recursive over $\baire$.
\end{lem}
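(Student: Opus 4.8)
The plan is to argue by induction on the build-up of classical partial recursive functions on $\N$, exhibiting for each such $f$ a derivation witnessing that $f$ (read as a function $\N^p\to\N$ in the sense of Definition \ref{pcX}) lies in the class $\mathcal{C}$. The base cases and closure under composition, primitive recursion and $\mu$ are, except for bookkeeping about the mixed-type conventions, literally the clauses of Definition \ref{pcX}. Concretely: the classical zero and successor functions are the basic functions $Z$ and $S$; the classical projection $(n_1,\dots,n_p)\mapsto n_i$ is the instance $\pi^{p}_i(n_1,\dots,n_p;)$ of the projection scheme with $q=0$; and since the closure clauses (2)(a)--(c) of Definition \ref{pcX} permit exactly composition, primitive recursion with respect to $\N$, and the $\mu$-operator with respect to $\N$, an easy induction shows every classical partial recursive $f:\N^p\to\N$ is in $\mathcal{C}$, i.e.\ is partial recursive over $\baire$. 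For the primitive recursive case one uses instead the sublist of derivations avoiding clause (2)(c), matching the definition of primitive recursive over $\baire$; note this also shows such $f$ are total, consistent with the remark after Definition \ref{pcX}.

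Next I would record the pointwise-lift statement. Given a classical recursive $f:\N^p\to\N$, the map $\baire^p\to\baire$ sending $(x_1,\dots,x_p)$ to the sequence $i\mapsto f(x_1(i),\dots,x_p(i))$ should be obtained by first noting that the $i$-th coordinate of the input is extracted by a TTE-computable evaluation function $\mathrm{ev}:\N\times\baire\to\N$, $\mathrm{ev}(i,x)=x(i)$, which is an instance of the universal TTE function $\Psi$ (the oracle machine that on input $i$ queries its oracle at $i$ and halts); then composing $f$ with $\mathrm{ev}$ in each argument gives a function $\N\times\baire^p\to\N$, and finally one needs a ``seek/tabulate'' scheme that converts a function $g:\N\times\baire^p\to\N$ into the function $\baire^p\to\baire$ whose value at $\vec{x}$ is the sequence $i\mapsto g(i,\vec{x})$. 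This last conversion is again a TTE-computable operation — the oracle machine, on input $i$, simulates the computation of $g(i,\vec{x})$ using its oracle — so it is an instance of $\Psi$, and the composition closure of $\mathcal{C}$ finishes the argument. (For the primitive recursive variant one checks that the relevant indices can be folded into the recursive list $e_i$ used to define $\Theta$.)

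Finally, for the ``in particular'' clause I would instantiate the general statement: fix once and for all a standard primitive recursive pairing function and the induced G\"odel coding $\langle a_0,\dots,a_{k-1}\rangle$ of finite sequences together with the decoding functions $\mathrm{lh}(s)$ and $(s)_i$; these are classical primitive recursive functions on $\N$, hence by the first part they are primitive recursive over $\baire$. I expect the main obstacle to be purely expository rather than mathematical: one must be careful that the mixed-type formats $\N^p\times\baire^q$ are respected at every composition step (so that, e.g., a function with codomain $\N$ is never fed where a $\baire$-argument is expected), and one must verify that the ``tabulate'' operation producing a $\baire$-valued function from a $\N$-indexed family really is realized by $\Psi$ with the monotonicity/uniformity that the definition of TTE-computability demands — equivalently, that a single oracle machine handles all coordinates $i$ uniformly. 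Once that uniformity is stated cleanly, the rest is the standard induction and may be left as a sketch.
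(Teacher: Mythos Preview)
Your first paragraph is correct and is precisely the paper's approach: the paper's entire proof is the single sentence ``By definition, it is clear that the generalized recursive function over $\baire$ contains the standard partial recursive functions as a subclass,'' which is your observation that the basic functions and closure clauses of Definition~\ref{pcX} with $q=0$ subsume the classical schemes. Your third paragraph on G\"odel numbering is likewise the intended instantiation.

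Your second paragraph, however, proves something that is not part of the lemma. The pointwise lift $\baire^p\to\baire$ is mentioned in the informal remark \emph{preceding} the lemma, not in its statement; the lemma only concerns classical functions $\N^p\to\N$ viewed as members of $\mathcal{C}$, and the paper does not attempt to place the lift inside $\mathcal{C}$ here. If you did want to carry that out, note a type mismatch in your sketch: the universal $\Psi$ of Definition~\ref{pcX}(1)(d) has signature $\N\times\baire\to\baire$, so the evaluation map $\mathrm{ev}:\N\times\baire\to\N$, $(i,x)\mapsto x(i)$, is not literally ``an instance of $\Psi$''. One can recover $\mathrm{ev}$ in $\mathcal{C}$ by a search (e.g.\ $\mu n\,[\chi(\Psi(e_{i,n};x))=1]$ for a suitable index producing the constant sequence $|x(i)-n|$), but this uses the $\mu$-operator and so does not obviously stay primitive recursive over $\baire$. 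None of this affects the lemma itself; it only means the extra paragraph should be dropped or reworked.
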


\begin{proof}
By definition, it is clear that the generalized recursive function over $\baire$ contains the standard partial recursive functions as a subclass.
\end{proof}

\begin{lem} \label{equality}
The equality predicate ``$x=y$'' over $\baire$ is primitive recursive over $\baire$.
\end{lem}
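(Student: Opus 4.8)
The plan is to reduce coordinatewise equality on $\baire$ --- a relation carrying a hidden ``$\forall i$'' --- to the single membership test provided by the basic function $\chi$, the characteristic function of $\{0_{\baire}\}$, which was put into the class for exactly this kind of purpose. First I would introduce an auxiliary function $D:\baire^{2}\to\baire$ comparing its two arguments coordinate by coordinate, $D(x,y)(i)=0$ if $x(i)=y(i)$ and $D(x,y)(i)=1$ otherwise. The role of $D$ is the chain of equivalences $D(x,y)=0_{\baire}\iff\bigl(x(i)=y(i)\text{ for every }i\bigr)\iff x=y$, so that $\chi\circ D$ is already the characteristic function of the equality predicate on $\baire$.

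The substantive step is to place $D$ inside the class of primitive recursive functions over $\baire$. This is where oracle Turing machines, rather than the monotone-map presentation of TTE, are convenient: $D$ is computed by a trivial oracle procedure which, on input $i$, queries the two oracles at argument $i$ and outputs $0$ or $1$ according to whether the returned values coincide. This procedure uses no unbounded search, so it has a primitive recursive index; hence $D(x,y)=\Theta(i_{0};x,y)$ for a fixed index $i_{0}$ (with $\Theta$ the universal primitive TTE function), and since a constant such as $i_{0}$ is built from $Z$ and $S$, closure under composition puts $D$ in the class. Concretely one may also read $D$ as the coordinatewise lift of the classical primitive recursive function $\mathrm d$ with $\mathrm d(m,n)=0$ if $m=n$ and $\mathrm d(m,n)=1$ otherwise, applied to the pair $(x,y)$; such lifts are primitive recursive over $\baire$ for the same reason.

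Finally I would form the characteristic function of ``$x=y$'' as the composition $(x,y)\mapsto\chi\bigl(D(x,y)\bigr)$: $\chi$ is a basic function, $D$ is primitive recursive over $\baire$, and the class is closed under composition with matched types --- $\chi$ has a single $\baire$-argument, into which the $\baire$-valued $D$ substitutes --- so this composition lies in the class and returns $1$ exactly when $x=y$. The only conceptual point, and the reason this lemma is less immediate than its classical counterpart, is that coordinatewise equality on $\baire$ is \emph{not} TTE-computable (only ``$x\neq y$'' is semidecidable), so $\chi$ genuinely cannot be dispensed with: supplying $\chi$ is precisely what converts the semidecidable condition ``$x\neq y$'' into a decidable test. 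Once $\chi\circ D$ is identified there is nothing more to do; verifying $D\in\mathcal C$ is routine.
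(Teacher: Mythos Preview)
Your proposal is correct and follows essentially the same approach as the paper: build an auxiliary primitive-recursive-over-$\baire$ function $\baire^{2}\to\baire$ that vanishes exactly when $x=y$, then compose with $\chi$. The paper uses the pointwise lift of $|m-n|=(m\dotdiv n)+(n\dotdiv m)$ in place of your $0/1$ indicator $D$, but this is an inessential variation of the same idea.
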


\begin{proof}
By Lemma \ref{lift}, the cutoff subtraction $x\dotdiv y$ and the absolute value function $|x-y|=(x\dotdiv y)+(y\dotdiv x)$ are TTE-computable functions over $\baire$.
Therefore $x=y$ if and only if $\chi(|x-y|)=1$. The result follows.
\end{proof}

Lemma \ref{equality} explains the purpose of having the characteristic function $\chi$ of $\{0_{\baire}\}$: to make equality computable.

\begin{lem} \label{bounded_sum}
If $f:\N\times \baire\to \N$ is primitive recursive over $\baire$, then so are $\sum_{i=0}^n f(i,x)$ and $\prod_{i=0}^n f(i,x)$.
Consequently the class of primitive recursive over $\baire$ predicates is closed under bounded (numerical) quantifiers.
\end{lem}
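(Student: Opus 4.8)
The plan is to imitate the classical proof that primitive recursive predicates over $\N$ are closed under bounded sums and products, adapting each step to the mixed-type setting over $\baire$. First I would handle $\sum_{i=0}^n f(i,x)$ directly: define it by primitive recursion with respect to $\N$ on the first argument, setting $F(0,x) = f(0,x)$ and $F(k+1,x) = F(k,x) + f(k+1,x)$. To see this fits Definition~\ref{pcX}, one writes $F$ as obtained from $G$ and $H$ by primitive recursion with $G(x) = f(0,x)$ and $H(k,x,m) = m + f(k+1,x)$; here $G$ is a composition of $f$ with the projection and the constant $0_\N$, and $H$ is a composition of the (lifted, by Lemma~\ref{lift}) addition function with $f$, the successor $S$, and the appropriate projections, all of which are primitive recursive over $\baire$. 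The treatment of $\prod_{i=0}^n f(i,x)$ is identical with addition replaced by multiplication, which is again primitive recursive over $\baire$ by Lemma~\ref{lift}.

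For the consequence about bounded numerical quantifiers, I would argue as in the classical case. Suppose $R(i,\vec{n};\vec{x})$ is a primitive recursive over $\baire$ predicate, i.e., its characteristic function $\chi_R : \N^{p+1}\times\baire^q \to \N$ (taking values $0_\N$ and $1$) is primitive recursive over $\baire$. Then the predicate $(\exists i \leq k)\,R(i,\vec{n};\vec{x})$ has characteristic function $1 \dotdiv \big(1 \dotdiv \sum_{i=0}^k \chi_R(i,\vec{n};\vec{x})\big)$ (or equivalently one can use $\sgn$ of the bounded sum), and $(\forall i \leq k)\,R(i,\vec{n};\vec{x})$ has characteristic function $\prod_{i=0}^k \chi_R(i,\vec{n};\vec{x})$. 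By the first part of the lemma together with Lemma~\ref{lift} (which supplies cutoff subtraction and the sign function as primitive recursive over $\baire$) and closure under composition with matched types, these characteristic functions are primitive recursive over $\baire$, so the bounded-quantifier predicates are as well.

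The only points that require a little care — and what I would flag as the main (though mild) obstacle — are bookkeeping rather than mathematical: one must check that the auxiliary functions $G$ and $H$ genuinely arise by composition \emph{with matched types} in the sense of the Definition preceding Lemma~\ref{lift}, since the numerical arguments $k$, $n$, and the recursion value must all sit before the semicolon while the Baire arguments $\vec{x}$ sit after it, and the lifted arithmetic operations act purely on the $\N$-side. This is routine but is the part where the mixed-type formalism differs superficially from the classical one. Since the problem explicitly says only sketches are expected and the argument is "almost entirely the same as the standard proof," I would present the primitive recursion defining the bounded sum, remark that the product is analogous, and then give the two characteristic-function identities for the bounded quantifiers, citing Lemma~\ref{lift} for the needed arithmetic primitives.
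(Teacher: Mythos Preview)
Your proposal is correct and is precisely the classical argument the paper has in mind; the paper's own proof simply says ``the proof is exactly the same as the one in classical recursion theory'' without spelling out the primitive recursion or the characteristic-function identities you supply. Your additional bookkeeping remarks about matched types are appropriate caution but, as you note, routine.
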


\begin{proof}
The proof is exactly the same as the one in classical recursion theory.
\end{proof}

Lemma \ref{lift} and \ref{bounded_sum} also explains why we must include $\N$ in the formalizations of computability,
because we need the coding of finite sequences\footnote{This approach is similar to the prime computability introduced by Moschovakis
\cite{Moschovakis:1969}. We thank Alexandra Soskova for informing us the literatures in this area.}.

\begin{lem} [definition by cases] \label{cases}
  If $f_1(x)$ and $f_2(x)$ are primitive recursive functions over $\baire$, and $P(x)$ and $Q(x)$ are mutual exclusive primitive recursive
predicates over $\baire$, then the function
\[
f(x)=\left\{
  \begin{array}{ll}
    f_1(x), & \hbox{if $P(x)$ holds;} \\
    f_2(x), & \hbox{if $Q(x)$ holds.}
  \end{array}
\right.
\]
is also primitive recursive over $\baire$.
\end{lem}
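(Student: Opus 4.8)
The plan is to reduce definition by cases to a single application of closure under composition, using the characteristic function of $P$ together with a ``selector'' function that is itself primitive recursive over $\baire$. The idea is the classical one: the output should be $f_1(x)$ or $f_2(x)$ according to the value $\chi_P(x)\in\{0,1\}$, and a primitive recursive selector turns that number into the corresponding choice of Baire element (or natural number).

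The steps I would carry out are as follows. First, since $P$ is primitive recursive over $\baire$, its characteristic function $\chi_P\colon\baire\to\N$ is primitive recursive over $\baire$ by definition. Second, I would produce a selector of the appropriate output type. If $f_1,f_2$ are $\baire$-valued, let $\mathrm{sel}\colon\N\times\baire^2\to\baire$ be determined by $\mathrm{sel}(0;y,z)=y$ and $\mathrm{sel}(n+1;y,z)=z$; this is obtained by a single primitive recursion with respect to $\N$, taking the base function to be the projection onto $y$ and the step function to be the projection onto $z$ (which simply ignores the recursion counter and the previously computed value), so $\mathrm{sel}$ is primitive recursive over $\baire$. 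If $f_1,f_2$ are $\N$-valued, the analogous selector $\N^3\to\N$ is a classical primitive recursive function, hence primitive recursive over $\baire$ by Lemma \ref{lift}. Third, set
\[
f(x)=\mathrm{sel}(\chi_P(x);\,f_2(x),\,f_1(x)).
\]
Fourth, note that $f$ is obtained from $\chi_P$, $f_1$, $f_2$ and $\mathrm{sel}$ by composition with matched types (the number slot of $\mathrm{sel}$ is filled by $\chi_P$, its Baire slots by $f_2$ and $f_1$), so $f$ is primitive recursive over $\baire$. Finally, verify agreement with the case definition: if $P(x)$ holds then $\chi_P(x)=1$ and $f(x)=f_1(x)$, while if $Q(x)$ holds then mutual exclusivity gives $\lnot P(x)$, so $\chi_P(x)=0$ and $f(x)=f_2(x)$.

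I do not expect a genuine obstacle: this is just the ``$\chi_P\cdot f_1+\chi_Q\cdot f_2$'' trick repackaged through a selector so that one argument handles both the $\N$-valued and the $\baire$-valued case uniformly. The point needing the most (still routine) care is checking that the $\baire$-valued selector is admissible, i.e.\ that it genuinely arises from one instance of the primitive recursion scheme with basic functions as its ingredients. One should also remember that primitive recursive over $\baire$ functions are total, so the $f$ above additionally assigns the value $f_2(x)$ when neither $P(x)$ nor $Q(x)$ holds; since the lemma only constrains $f$ on $P\cup Q$, this total extension is harmless.
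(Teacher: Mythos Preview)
The paper does not actually supply a proof of this lemma; it is stated and then the text moves on, treating it (like Lemma~\ref{bounded_sum}) as a routine transplant of the classical argument. Your proposal is correct and is precisely the kind of classical argument the paper is tacitly invoking: build a selector by one instance of primitive recursion on $\N$ from projections, then compose with $\chi_P$, $f_1$, $f_2$.

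One small remark on presentation rather than substance: your closing observation about totality is apt, but you might state more explicitly that the lemma, as phrased, only makes sense if $P\cup Q$ is understood to exhaust the domain (since primitive recursive over $\baire$ functions are total); otherwise the displayed ``function'' is partial and the statement is really about the existence of a primitive recursive total extension, which is exactly what your construction provides.
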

%

\subsection{Formalizing Computability by Master-Slave Machines} \label{MSforBaire}
We now adopt the master-slave machines (abbreviated as MS-machines) used in Yang \cite{Yang:ta} to formalize the computability over $\baire$.
For the sake of simplicity, we will fully employ Church's Thesis of computation over natural numbers.

\begin{figure} \label{MSmachine}
\begin{center}
\begin{pspicture}(-8,-7.7)(8,7)
\rput(-2,0){\psset{unit=0.8}
\pspolygon(-1.1,4)(1.1,4)(1.1,2)(-1.1,2)
\rput(0,3){$M$}
\pspolygon(-3,-2)(-1.5,-2)(-1.5,-4)(-3,-4)
\rput(-2.25,-3){$S_0$}
\pspolygon(-1,-2)(0.5,-2)(0.5,-4)(-1,-4)
\rput(-0.25,-3){$S_1$}
\pspolygon(1,-2)(2.5,-2)(2.5,-4)(1,-4)
\rput(1.75,-3){$S_2$}
\pspolygon(2,0.5)(3,0.5)(3,-0.5)(2,-0.5)
\rput(2.5,0){$k$}
\psline(-5,0.5)(-1,0.5)(-1,-0.5)(-5,-0.5)(-5,0.5)

\psline[linestyle=dashed](-7,0)(7,0)

\rput(-3,0){Billboard}
\psline(-4,6)(4,6)
\psline(-4,5)(4,5)
\rput(0,5.5){tape for master}
\psline(-5,-5)(5,-5)
\psline(-5,-5.6)(5,-5.6)
\rput(0,-5.3){input from Baire space}
\psline(-5,-6)(5,-6)
\psline(-5,-6.6)(5,-6.6)
\rput(0,-6.3){zero test}
\pscurve{->}(5.5,-6.3)(6,-3)(3.2,-0.2)
\psline(-5,-7)(5,-7)
\psline(-5,-7.6)(5,-7.6)
\rput(0,-7.3){working and output tapes}
\rput(4,-3){$\cdots$}
\rput(5,-3){$\cdots$}

\psline{->}(0,4)(0,5)
\psline{->}(-0.5,2)(-3,0.5)
\psline{->}(0.5,2)(2.5,0.5)
\psline{->}(-2.25,-2)(-3,-0.5)
\psline{->}(-2.25,-4)(-4,-5)
\psline{->}(-0.25,-2)(-2.7,-0.5)
\psline{->}(-0.25,-4)(-1,-5)
\psline{->}(1.75,-2)(-2.4,-0.5)
\psline{->}(1.75,-4)(2,-5)
}
\end{pspicture}
\caption{A Master-Slave machine}
\end{center}
\end{figure}
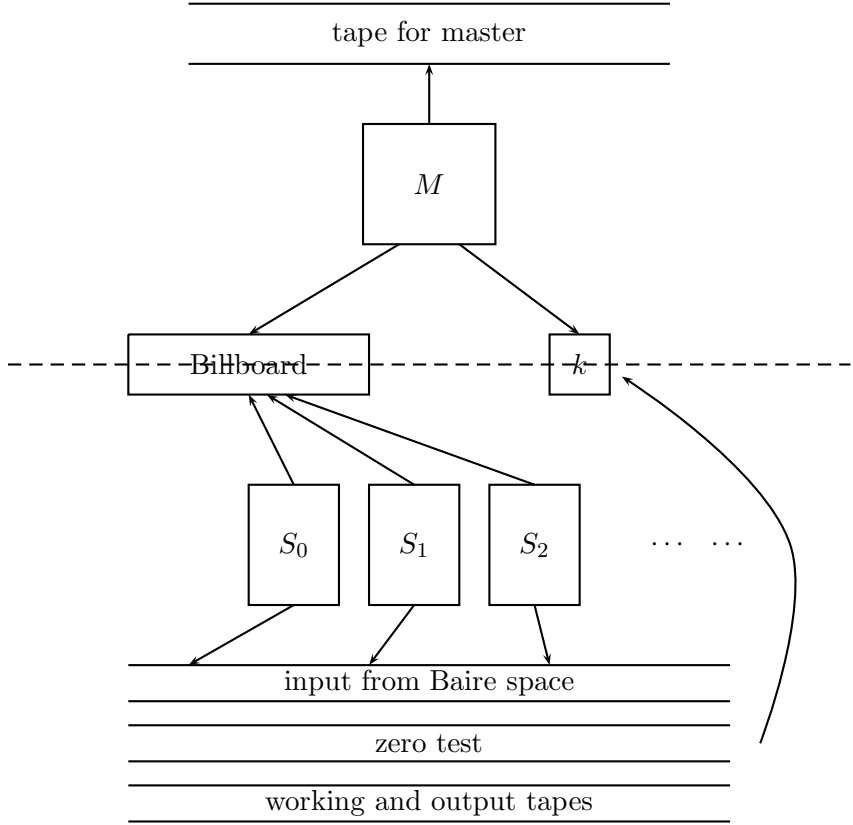

Let us quickly recall the basic features of a master-slave machine (see Figure \ref{MSmachine}).
Its physical device consists of the following three parts: Master part (above the dashed line in Figure \ref{MSmachine})
which only handles natural numbers; slave part (below the dashed line) which handles computations over Baire space;
and their interactions (on the dashed line) where the information flows from the master to slaves and vice versa.

The master part is simply a standard Turing machine $M$.  As it is the control part of the machine,
the same letter $M$ is also used for the whole MS-machine. If a computation involves natural number inputs,
they will be sent to the master $M$ as well.

Although one can view the slave part as a big black box, the description below gives more intuition to the computation,
and its effectiveness.  We have an infinite sequence $\<S_i: i\in \omega\>$ of universal (standard) oracle Turing machines,
called {\em slaves}.
Each slave $S_i$ has its own label $i$, indicating its relative position to other slaves.
For an element $x=(m_0,m_1,m_2,\dots)$ in Baire space, we put $x$ on tape as $b(m_0)\Box b(m_1) \Box b(m_2) \Box \dots$,
where $b(m)$ stands for $m$ written in binary and $\Box$ stands for empty cell.
All slaves share the following four kinds of tapes: (1) An oracle tape on which the input $x$ is written,
we sometimes refer to it as the {\em input tape} for slaves;
(2) one or more working tapes for slaves (besides their own working tapes for natural numbers);
(3) a write-only output tape for Baire space output, the slaves will write sequentially, in particular,
any slave cannot modify the writings of other slaves;
(4) a special tape, called {\em zero-test tape}, for any sequence written on this tape,
there is a special module which is able to detect if it is the zero sequence and to return a Boolean bit $k=0$ or $1$ to the master.

The interaction between the master and the slaves is done as follows.
The master writes a natural number $p$ on the {\em billboard} tape.
Each slave $S_i$ will take this $p$ as its program input and its index $i$ as numerical input.
Since the slaves are universal oracle Turing machines $\Phi$, what $S_i$ does is to simulate the TTE-computation $\Phi^x(p,i)$.
Here and below, we use $\Phi^x(p)$ for $\Psi(p;x)$ to emphasize that $x$ is an oracle; and $\Phi^x(p,i)$ is the $i$-th component of  $\Phi^x(p)$ if it is defined.)
The feedback from the slaves to master is done via zero-test module, and the result is stored in the Boolean bit $k$ which $M$ can use.

The program (or software part) of a master-slave machine can be identified with the Turing program of the master $M$.
We can ignore the slaves, because all slaves use the same fixed universal program.
Fix a finite set of states $Q=Q_0\cup\{S,E\}$ where $Q_0$ is the set of states for a standard Turing machine and
$S$ and $E$ are two new symbols for slave calling and zero-testing respectively.
We also single out two states $q_s$ and $q_h$ in $Q_0$ for starting and halting, respectively.
Also fix a set of alphabet $\Sigma$, usually binary and with some auxiliary symbols for convenience.
The transition functions of a master-slave machine can be represented by a finite set of finite tuples,
the dimension of the tuples depends on the number of tapes. For example,
the machine in Figure \ref{MSmachine} has three tapes (and three heads), namely, input tape, billboard and the Boolean bit.
However, in the discussion below, we will ignore the multi-tape issues for simplicity.
We will use only one tape, hence the transition function is described by quadruples.
Since any multi-tape standard Turing machine can be simulated by one with a single tape, we do not loss any generality here.
The quadruples are of the following three kinds: Standard, slave-calling and zero-testing.
\begin{enumerate} [(1)]
\item The standard commands are of the form $qaa'q'$ or $qaLq'$ or $qaRq'$,
where $q\in Q_0$, $q'\in Q$, $a,a'\in\Sigma$ and $L, R$ are for directions. Their executions are exactly the same as in classical case.

\item The slave calling commands are of the form $Saaq$, where $q\in Q$ and $a\in\Sigma$ (the symbol $a$ is just to make it a quadruple).
When executing this command, the slave $S_i$ will perform $\Phi^x(p,i)$ as described in the interaction part above.
When every slave machine halts, the state of the master becomes $q$.  If some slave does not halt,
the computation on input $x$ of the MS-machine is undefined.

\item The zero-test commands are of the form $Eaaq$ where $q \in Q$ and the symbol $a$ again is to make it a quadruple.
The execution is also described in the interaction part above.
\end{enumerate}

The definition of an MS-computation is analogues to the classical one. For simplicity again,
we assume that we only have one tape for master and one tape for slaves.
An {\em MS-configuration} $C$ is a pair $(n;y)$ where $n$ is a natural number coding the configuration (in the standard sense) of the master together with the billboard and Boolean bit;
and $y$ is an element in $\baire$ currently written on the tape handled by slaves.

Given a master-slave machine $M$ and a configuration $C=(n;y)$, we decode from $n$ in $C$ to get the current state parameter $q$ and the symbols $a$ reading by the master;
then check which quadruple in $M$ starts with $qa$ (in the case when $q$ is $S$ or $E$, we do not check the second component $a$) for some non-halting state $q$ and act according to the command, the resulting configuration $D$ will be the one {\em yielded} from $C$ by $M$.

A configuration $(n;y)$ is called {\em terminal}, if its state component coded in $n$ is the halting state $q_h$.
A terminal configuration does not yield any configurations.

For any input $(k;x)$ of mixed type,
the {\em initial MS-configuration} for input $(k;x)$ is the one that the Master is in its starting state $q_s$ and $k$ is put on the input tape of the master and the Baire space input $x$ is written on the oracle tape shared by all slaves.

A {\em master-slave computation} of $M$ on input $(k; x)$ is a sequence of MS configurations $\<C_i\>$ such that
$C_0$ is the initial configuration for $(k;x)$, and either
(1) the sequence is infinite and for all $i$, $C_i$ yields $C_{i+1}$; or
the sequence is of finite length $i_0+1$ for some natural number $i_0$, for all $i<i_0$, $C_i$ yields $C_{i+1}$,
and either (2.1) $C_{i_0}$ is a terminal MS-configuration, in which case,
the output is the number on the output tape of the master or slaves depending on the codomain;
or (2.2) the state component of $C_{i_0}$ is $S$ and $\Phi^y(p,j)\uparrow$ for some $j\in \omega$,
where $p$ and $y$ are the billboard and Baire space component of $C_{i_0}$ respectively.

\begin{defi}
We say that a partial function $f:\N^p\times {\baire}^q\to Y$ where $Y\in \{\N, \baire\}$ is {\em MS-computable} over $\baire$ if there is a master-slave machine $M$ such that
\[
f(\vec{n};\vec{x})=\left\{
       \begin{array}{ll}
         y, & \hbox{if $M$ on input $(\vec{n};\vec{x})$ halts and the output is $y$;} \\
            & \hbox{(note that $y$ can be in $\N$ or in $\baire$ depending on $f$.)}\\
         \uparrow, & \hbox{if (1) or (2.2) happens in the MS-computation of $M$}\\
          &\hbox{on input $(\vec{n};\vec{x})$.}
       \end{array}
     \right.
\]
\end{defi}

\begin{lem}
Every partial recursive function over $\baire$ is MS-computable over $\baire$.
\end{lem}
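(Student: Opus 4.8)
The plan is to prove this by induction on the construction of the class of partial recursive functions over $\baire$ as given in Definition \ref{pcX}: I will show that each basic function is MS-computable over $\baire$, and that the class of MS-computable functions is closed under composition with matched types, primitive recursion with respect to $\N$, and the $\mu$-operator with respect to $\N$. (The primitive recursive over $\baire$ case is subsumed, since it merely uses a restricted basic function $\Theta$ in place of $\Psi$ and omits the $\mu$-operator.) Throughout, the master of an MS-machine is an ordinary Turing machine, so all classical bookkeeping---counters, loop control, and G\"odel coding of finite sequences (available over $\baire$ by Lemma \ref{lift})---is at our disposal on the master side.

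For the base cases: the numerical basic functions $Z$ and $S$ and the numerical projections $\pi^{p+q}_i$ (for $i\le p$) are computed on the master with no slave interaction. The Baire-valued projection $\pi^{p+q}_i$ (for $i>p$) is realised by a single slave-calling command whose billboard content is a fixed index for the TTE map extracting the $(i-p)$-th oracle coordinate, so each slave $S_j$ copies one cell of the input to the $j$-th cell of the Baire output tape. The universal TTE function $\Psi(e;x)$ is, by design, exactly one slave call: the master writes $e$ on the billboard and issues a slave-calling quadruple, whereupon $S_j$ simulates $\Phi^x(e,j)$ and the output is assembled on the slaves' output tape; divergence of $\Psi$ is precisely case (2.2) of an MS-computation. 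Finally, $\chi$, the characteristic function of $\{0_\baire\}$, is exactly one zero-test: the slaves copy the input $x$ onto the zero-test tape, the master issues an $E$-quadruple, and then outputs the returned Boolean bit $k$. (This is exactly why $\Psi$ and $\chi$ were chosen as basic functions.)

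For the inductive steps I will use the master as a controller that runs the MS-machines provided by the induction hypothesis one after another, parking each intermediate result where it can be reused---a natural-number result in a dedicated block of the master's tape, a Baire-valued result on a dedicated slave working tape. For composition with matched types, first run in turn the machines for $g_1,\dots,g_p,h_1,\dots,h_q$ on the stored input, then run the machine for $f$ on the stored results; since the slave input tape is read-only, the original Baire inputs survive all the sub-runs. Primitive recursion over $\N$ becomes a loop of length $k$ in which the master repeatedly calls the machine for $H$ with the previous value of $F$ supplied as an extra argument, after one initial call to the machine for $G$. The $\mu$-operator over $\N$ becomes the unbounded loop $k=0,1,2,\dots$ that runs the machine for $G$ on $(k,\vec{n};\vec{x})$, halts with output $k$ at the first zero, and diverges exactly when some needed value of $G$ diverges (case (2.2) of the sub-run, hence case (1) or (2.2) of the whole machine) or never becomes $0$ (case (1)); this matches the intended semantics of the $\mu$-operator, including all the undefined cases.

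I expect the one genuinely delicate point to be the handling of Baire-valued intermediate results inside the inductive steps. When the MS-machine being simulated next issues a slave call, its slaves must run with a \emph{previously computed} Baire element (sitting on a working tape) as their oracle, not with the original input. I will handle this by fixing a tape-management convention under which any designated working tape can play the role of ``the'' oracle tape for the duration of a sub-run (equivalently, by composing TTE program indices on the slave side so the required Baire element is recomputed from the original input together with the stored numerical data); the multi-tape-to-single-tape simulation quoted in the description of MS-machines then lets me collapse the whole construction back into the single-tape quadruple format of the definition. Once this convention is pinned down, each inductive case is a routine simulation and the induction closes.
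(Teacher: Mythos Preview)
Your proposal is correct and follows exactly the approach the paper takes: induction on the build-up of Definition~\ref{pcX}, verifying the basic functions (1)(a)--(e) directly and then checking closure under (2)(a)--(c). The paper's own proof is in fact terser than yours---it dispatches the closure properties with ``similar to the ones in classical recursion theory, hence skipped''---so your explicit flagging of the one genuinely nontrivial point (rerouting the slaves' oracle to a previously computed Baire value during composition, handled either by tape convention or, as the paper later does in the Normal Form argument, by composing TTE indices via $s$-$m$-$n$) is a welcome addition rather than a deviation.
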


\begin{proof}
By Definition \ref{pcX}, it suffices to check that the basic functions are MS-computable over $\baire$;
and the class of MS-computable functions over $\baire$ are closed under composition, primitive recursion with respect to $\N$ and $\mu$-operator with respect to $\N$.

The basic functions in item 1(a) to 1(c) in Definition \ref{pcX} are clearly MS-computable.
The universal TTE-computable function is MS-computable: Given input $(e;x)\in \N\times \baire$,
the master just copies $e$ on billboard and calls the slaves.
The MS-computation of $\chi$ is simply done by the zero-test command.

The proof of the closure properties is similar to the ones in classical recursion theory, hence skipped.
\end{proof}

\subsection{A Normal Form Theorem for computation over $\baire$} \label{NormalforBaire}

Our definition of MS-computation has several quantifiers over $\N\times \baire$, it is not obvious how effective it is.
In this subsection, we analyze MS-computations carefully and define an analog of Kleene's $T$-predicate over $\baire$.
Consequently we get both the following Normal Form Theorem and the main characterization of MS-computable functions.

\begin{thm} [Normal Form Theorem] \label{KleeneNFT}
There are primitive recursive over $\baire$ predicate $T(e,x,z)$ and a (partial) TTE-computable function $U(z;x)$
where $e,z\in \N$ and $x\in \baire$, such that for any MS-computable function $F$ over $\baire$,
there is some $e$, for all $x\in \baire$
\[
F(x)=U(\mu z T(e,x,z);x).
\]
\end{thm}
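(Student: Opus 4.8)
The plan is to mimic Kleene's classical argument: encode a single step of an MS-computation as a primitive recursive operation, and then iterate it numerically, reading off the final configuration. The crucial observation is that although an MS-configuration is a pair $(n;y)$ with $y\in\baire$, all the \emph{control} information lives in the numerical component $n$, and the Baire-space component $y$ is only ever modified by slave-calling steps, which write finitely much new data each time. So I would introduce, in the style of a $T$-predicate, a primitive recursive over $\baire$ predicate that says ``$z$ codes a halting MS-computation of $M_e$ on input $x$''. Here $z$ should code the numerical history of the master together with the finite data used in each slave call, and $x$ is supplied as the Baire-space input.

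First I would set up the coding. Using Lemma \ref{lift}, G\"odel numbering of finite sequences of natural numbers is primitive recursive over $\baire$. A single \emph{standard} master step is a primitive recursive function of the code of the master configuration, exactly as classically. For a \emph{zero-test} step, the new Boolean bit $k$ depends on whether the current slave tape content $y'\in\baire$ is the zero sequence; but $y'$ is determined from $x$ and the finitely many slave calls made so far, so I would track $y'$ implicitly --- more precisely, I would have $z$ record, for each slave call, the billboard value $p_j$ used, and then the current slave-tape content after the $j$-th call is $\Phi^x(p_j)$ restricted appropriately. To test whether this is the zero sequence I would use the characteristic function $\chi$ of $\{0_\baire\}$, which is available as a basic function; composing $\chi$ with the TTE-computable map $x\mapsto\Phi^x(p_j)$ (and with the cutoff/equality machinery of Lemma \ref{equality}) is primitive recursive over $\baire$. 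For a \emph{slave-calling} step, the predicate $T$ must assert that $\Phi^x(p_j, i)$ halts for every $i$; but that is an unbounded condition over $\baire$, not a numerical one, so it cannot be folded into the primitive recursive numerical predicate $T$. This is the point where the proof must deviate from the purely numerical Kleene template.

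I expect the resolution --- and the main obstacle --- to be exactly this handling of slave calls, and I would deal with it by pushing all the genuinely Baire-space-level work, including the totality/convergence checks of the slave calls, into the output function $U$ rather than into $T$. Concretely, $T(e,x,z)$ should say: $z$ codes a finite sequence of master configurations $c_0,c_1,\dots,c_\ell$ where $c_0$ is the initial configuration for input $x$ (its numerical part; the Baire-space input is $x$ itself), each $c_{j+1}$ follows from $c_j$ by a legal standard, zero-test, or slave-calling transition of $M_e$ --- where for slave-calling and zero-testing transitions the predicate $T$ consults $x$ only through the primitive recursive over $\baire$ operations just described --- and $c_\ell$ is terminal. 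Since each individual transition is primitive recursive over $\baire$ in $(e,x,z)$, and the length $\ell$ and all intermediate data are bounded by $z$, the whole conjunction (a bounded quantification over $j<\ell$, using Lemma \ref{bounded_sum}) is primitive recursive over $\baire$. Then $\mu z\,T(e,x,z)$ picks out the code of the (least) halting computation, if one exists, and diverges otherwise; and $U(z;x)$ decodes $z$ to find the terminal configuration $c_\ell$, and outputs its content --- if the codomain is $\baire$, this means recomputing the slave output $\Phi^x(p)$ for the relevant billboard value $p$ recorded in $z$, which is TTE-computable in $(z;x)$. One must check the two directions: if $F=\Phi_e^{\mathrm{MS}}$ halts on $x$ then there is a genuine MS-computation whose numerical skeleton, suitably coded, witnesses $T(e,x,\cdot)$ and yields the right value under $U$; conversely if $T(e,x,z)$ holds then $z$ genuinely describes a halting run (here one uses that the slave calls recorded really do halt --- which is forced because a slave-calling transition in the run produced a successor configuration $c_{j+1}$, and the only way that can fail in reality is divergence of some $\Phi^x(p_j,i)$, a possibility we must rule in the definition of ``legal transition'' by referencing the TTE-computable behavior of $\Phi^x$). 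Getting this convergence bookkeeping consistent between $T$ and $U$ so that the equation $F(x)=U(\mu z\,T(e,x,z);x)$ holds on the nose, including both the halting and non-halting cases, is the delicate part; everything else is the routine translation of Turing-machine-step arithmetization into the primitive recursive over $\baire$ setting.
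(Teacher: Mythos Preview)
You correctly locate the central obstacle --- that the totality of a slave call is a $\Pi^0_2$ condition and so cannot be absorbed into a primitive recursive $T$ --- but your resolution does not actually clear it. You say you will ``push the totality/convergence checks into $U$'' and, simultaneously, that the zero-test transition in $T$ is handled by ``composing $\chi$ with the TTE-computable map $x\mapsto\Phi^x(p_j)$''. But if some earlier slave call was partial, then the tape content $y'$ being zero-tested is undefined, so $\chi(y')$ is undefined, and your $T$ fails to be total (hence not primitive recursive). You cannot both defer the convergence question to $U$ \emph{and} have $T$ evaluate $\chi$ on an object that may not exist. Your last paragraph in fact wavers back the other way --- ``a possibility we must rule in the definition of `legal transition' '' --- which would put the $\Pi^0_2$ check back into $T$.

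The paper's fix is a genuine additional idea you are missing: it replaces each slave by a \emph{time-bounded} version that always halts, writing a placeholder symbol $\#$ when the real computation has not yet converged. Thus one works in an enlarged space $\bairehex=(\omega\cup\{\#\})^\omega$, the restricted universal functional $\Phi^x_\#(\cdot)$ is total for every $x$, and an extended zero-test $\chi_\#$ treats $\#$ as zero. Now the transition function, including zero-test steps, is total and primitive recursive over $\baire$; the question of whether some intermediate slave call was actually divergent is encoded by whether the corresponding $\#$-extension has a tail of $\#$'s, and this is detected only by the output function $U$. A second, smaller point: successive slave calls compose, so after calls $p_1,\dots,p_j$ the tape content is $\Phi^{\Phi^{\cdots}(p_1)}(p_j)$, not $\Phi^x(p_j)$; the paper handles this cleanly by carrying a single index $c$ with the invariant ``current content $=\Phi^x_\#(c)$'' and updating $c\mapsto g^\#(p,c)$ via the $s$-$m$-$n$ theorem.
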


\begin{thm} \label{main_X}
The class of MS-computable functions over $\baire$ coincides with the class of partial recursive functions over $\baire$.
\end{thm}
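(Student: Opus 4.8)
The plan is to prove the two inclusions separately. The direction ``partial recursive over $\baire$ $\Rightarrow$ MS-computable over $\baire$'' has already been established in the lemma preceding Section \ref{NormalforBaire}, so the remaining work is the converse: every MS-computable function over $\baire$ is partial recursive over $\baire$. I would obtain this as an immediate corollary of the Normal Form Theorem \ref{KleeneNFT}. Indeed, suppose $F$ is MS-computable over $\baire$; by Theorem \ref{KleeneNFT} there is an index $e$ with $F(x) = U(\mu z\, T(e,x,z); x)$ for all $x\in\baire$. Since $T$ is primitive recursive over $\baire$, the predicate $T(e,x,z)=0$ is primitive recursive over $\baire$ (fixing the numerical parameter $e$ and using a projection/composition), so $\mu z\, T(e,x,z)$ is partial recursive over $\baire$ by closure under the $\mu$-operator with respect to $\N$. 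The universal TTE-computable function $U(z;x) = \Psi(z;x)$ is a basic function of the class by Definition \ref{pcX}(1)(d), hence partial recursive over $\baire$. Composing these (the types match: the first argument of $U$ has codomain $\N$, the second is $x$ itself via $\pi$) shows $F$ is partial recursive over $\baire$.

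Thus the real content is Theorem \ref{KleeneNFT}, and the key steps I would carry out there are: first, arithmetize MS-configurations. A configuration $(n;y)$ splits into a numerical part $n$ (master state, tape, billboard, Boolean bit) and a Baire-space part $y$; the numerical part behaves exactly as in Kleene's classical construction. Second, I would define, for a fixed machine $M$ with index $e$, a ``step predicate'' tracking the yields relation; the only steps that are not purely classical are the slave-calling and zero-testing commands. For zero-testing, the value of the Boolean bit $k$ is $\chi(y)$ where $y$ is the current zero-test tape contents, and $\chi$ is a basic function, so this step stays primitive recursive over $\baire$. For a slave call, the new Baire-space tape contents is obtained from $\Phi^x(p)$, i.e.\ from $\Psi(p;x)$; this is where the TTE-computable (not merely primitive recursive over $\baire$) nature of $U$ enters. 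Third, I would let $T(e,x,z)$ assert that $z$ codes a halting MS-computation history of $M_e$ on input $x$ — but with the crucial point that the slave-call steps are \emph{recorded} in $z$ rather than recomputed, so that checking $z$ is a bona fide computation history requires only primitive-recursive-over-$\baire$ verification of each step (the slave outputs appearing in $z$ are checked for consistency using the primitive recursive over $\baire$ coding/decoding of finite sequences from Lemma \ref{lift}). Then $U(z;x)$ reads off the output from the last configuration coded in $z$, which may itself require one TTE-computation to extract a Baire-space output.

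The main obstacle I anticipate is handling the slave-calling command cleanly. Unlike a classical Turing step, a single slave call invokes infinitely many slaves $S_i$ producing the components of a Baire-space value, and the whole call diverges if any one slave diverges — so the ``history'' $z$ cannot literally contain the (infinite) intermediate Baire-space tapes, and the divergence of a slave call is an unbounded $\exists i$ condition over $\N$ that must be folded into the outer $\mu z$. The resolution is to let $z$ encode only the \emph{finite} data of the master (states, billboard values, Boolean bits at each step) together with bookkeeping sufficient to reconstruct the Baire-space tape as a single accumulated TTE-computation, pushing the actual Baire-space manipulation entirely into $U$; then $T$ is genuinely primitive recursive over $\baire$ in $(e,x,z)$ because verifying a proposed master-history against $x$ only ever needs finitely much of the relevant $\Psi(p;x)$ values to check the recorded Boolean bits, and those are extracted by composition with $\Psi$ — wait, that reintroduces a non-primitive-recursive ingredient, so more carefully: the Boolean bits in $z$ are simply \emph{asserted}, and their correctness is what the final $\mu z$ together with a TTE-check inside $U$ enforces, exactly mirroring how Kleene's $T$ only checks local consistency of a guessed computation. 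Getting this division of labor between the primitive recursive over $\baire$ predicate $T$ and the TTE-computable $U$ exactly right — so that $T$ stays total and primitive recursive over $\baire$ while $U$ absorbs all the genuine Baire-space computation and partiality — is the delicate part; the rest is a faithful transcription of the classical normal form argument.
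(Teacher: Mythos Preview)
Your overall architecture is right: one inclusion is the earlier lemma, and the other is a corollary of the Normal Form Theorem, so the whole weight falls on proving Theorem~\ref{KleeneNFT}. You also correctly isolate the real difficulty --- the slave-calling step produces an \emph{infinite} Baire-space object, and its potential divergence is a $\Pi^0_2$ condition that cannot live inside a primitive-recursive-over-$\baire$ predicate $T$. But your resolution of this difficulty does not work.

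The problem is your treatment of the Boolean bits from zero-tests. You propose that they be ``simply asserted'' in $z$ and that ``their correctness is what the final $\mu z$ together with a TTE-check inside $U$ enforces.'' This fails: if $T$ does not verify the asserted bits, then for each branching pattern of bits there is a $z$ accepted by $T$, and $\mu z$ returns the \emph{smallest} such $z$, which need not correspond to the true computation; $U$ has no mechanism to reject it and move on to a larger $z$. (Your analogy with classical Kleene $T$ is misleading: there the computation is fully deterministic and $T$ checks every transition, so exactly one $z$ is accepted --- nothing is merely guessed.) Your first instinct --- to have $T$ actually check the bits via $\chi$ applied to the current tape content --- is the right one, but as you noticed, recomputing that content requires iterated calls to $\Psi$, which is not in the primitive recursive over $\baire$ class.

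The paper resolves this with two devices you are missing. First, the Baire-space tape content is never stored; instead one stores a single natural number $c$ such that the content is $\Phi^x(c)$, and a slave call with billboard $p$ updates $c$ to a primitive recursive $g(p,c)$ via $s$-$m$-$n$. Second --- and this is the key trick --- the slaves are replaced by \emph{time-bounded} versions $\Phi^x_\#$ that write a placeholder symbol $\#$ whenever they run out of time; this makes $\lambda x.\,\Phi^x_\#(c)$ \emph{total}, hence genuinely primitive recursive over $\baire$, and an extended zero-test $\chi_\#$ (treating $\#$ as $0$) can then be applied to it. The resulting Boolean bit may be ``wrong'' relative to the unrestricted computation, but only when some slave call actually diverges, in which case the whole MS-computation is undefined anyway; that divergence is detected at the very end by the TTE-computable $U$, which scans the $\#$-extensions for an infinite tail of $\#$'s. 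With these two devices the transition function becomes honestly primitive recursive over $\baire$, $T$ accepts a unique $z$, and the division of labor you were aiming for goes through.
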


We now proceed to prove Theorem \ref{KleeneNFT}.
Let $F\in \mathcal{F}$ be MS-computable, say computed by an MS-machine $M$.
For notational simplicity, we assume that $F$ is from $\N\times \baire$ to $\baire$,
the cases of $\N^p\times \baire^q$ to $\baire$ or to $\N$ are similar.
And we will ignore the multi-tape issue again.
Since there is an effective way to list all MS-programs, this $M$ has an index, say $e$.
This $e$ will be carried along as a fixed parameter in the rest of the proof.

As in classical recursion theory, we need to ``arithmetize'' everything.
The first important feature is that this arithmetization is done almost entirely using natural numbers,
with minimal reference to the input $x$ in Baire space.
All transitions of configurations will be recorded by natural numbers only.
To do that, we introduce a new natural number parameter $c$, called the {\em TTE component of a configuration},
such that for input $x$, the sequence $\<\Phi^x(c,i): i\in \omega\>$ is the current tape content written on the slave's tape
(here we used the single tape convention).
In other words, we code a configuration $(n;y)$ as $(n,c; x)$ where $y(i)=\Phi^x(c,i)$ for all $i\in \omega$.
In particular, the initial configuration is coded as $(n_0,c_0;x)$ where $c_0$ is an index of the identity TTE function,
namely, $\Phi^x(c_0, i)=x(i)$ for all $i\in \omega$; and $n_0$ codes the starting state $q_s$ and the natural number input $k$, etc.
This idea is possible because the only change on the Baire space component is initiated by the slave calling command.
When we execute that command we get the content of billboard tape $p$ from $n$.
By $s$-$m$-$n$ theorem, there is a (standard) primitive recursive function $g(p,c)$ such that $\Phi^x(g(p,c))=\Phi^{\Phi^x(c)}(p)$.
In other words, all Baire space components that we ever see during the computation are TTE-images of the input $x$.

Introducing the TTE component is not enough because the master may call some partial TTE functions,
whereas we want to make the Kleene $T$-predicate primitive recursive, in particular $T$ must be total.
The idea is to make full use the infinitely many slaves and delay the detection of the $\Pi^0_2$ question of totality until the last step.

We have our second important (though a bit artificial) feature.  We add restrictions to the slave activities as follows.
Add to the slave alphabet a new symbol, say $\#$,
and restrict the running time of the $i$-th slave $S_i$ to $2^i$, in fact, the exact time bound is not important,
as long as the $i$-th slave can finish reading relevant part of the tape and computing for $i$ steps.
If the computation task of $S_i$ does not finish by step $i$, it writes a $\#$ and stop.
The unfinished task is left to the next slave.  Thus slave $S_{i+1}$ would start from checking the computations of $S_0, \dots, S_i$,
if it finds some unfinished computations, say $S_j$ is the first slave whose task is unfinished, $S_{i+1}$ will continue the task of $S_j$.
If within its time restriction $i+1$,
$S_{i+1}$ completed the task of $S_j$, it writes the result on tape and declare that $S_j$'s task is finished; otherwise,
it writes $\#$ and stop.

Let $\bairehex=(\omega\cup\{\#\})^{\omega}$.  For any element $x=(x(0),x(1),\dots)\in \baire$,
we say that $y\in \bairehex$ is a $\#$-{\em extension} of $x$, if either
\[
y=(\#^{n_0}, x(0), \#^{n_1}, x(1), \#^{n_2}, x(2), \dots)
\]
for some $(n_i)\in \baire$; or there is some $i\in \omega$ such that
\[
y=(\#^{n_0}, x(0), \#^{n_1}, x(1), \dots, \#^{n_i}, x(i),\#,\#,\dots).
\]
In the latter case, we say that $y$ is a {\em divergent} $\#$-extension with initial part $(x(0), \dots, x(i))$.
We also let $\Phi^x_{\#}(e)$ to denote the ``restricted'' universal oracle Turing machine.
Observe that $\lambda x. \Phi^x_{\#}(e):\bairehex \to \bairehex$ is total for every $e\in \N$.

It should be noted that by changing alphabet, we can easily embed $\bairehex$ into $\baire$ effectively, thus continuously.
Thus we could discuss everything within $\baire$ completely, the purpose of introducing $\bairehex$ is purely for intuitive clarity.

During the computation, the master will ``live'' in $\baire$, whereas the restricted slaves work in $\bairehex$
(except the last step of converting output).
For instance, suppose that the current configuration is $(n,c;x)$ and the master wants the slaves to computing $\Phi^y(p)$
where $y$ is the current tape content $\Phi^x(c)$.
However, the slaves now are working inside $\bairehex$, they typically see a $\#$-extension $y^{\#}$ of $y$ as input.
The slaves then compute $\Phi^{y^{\#}}_{\#}(p^{\#})$ where the program $p^{\#}$ will produce some $\#$-extension of $z$,
if $\Phi^y(p)\downarrow=z$.
Furthermore, we can get $p^{\#}$ effectively by $s$-$m$-$n$ Theorem.  We summarized these facts in the lemma below:

\begin{lem} \label{restricted_slave}
There is a primitive recursive function $g^{\#}$ such that for any $x\in \baire$, $p,c\in \omega$, we have
\begin{enumerate}
\item [(a)] if $\Phi^x(c)\downarrow=y$, $\Phi^y(p)\downarrow =z$, then $\Phi^x_{\#}(g^{\#}(p,c))$ is a nondivergent $\#$-extension of $z$;
\item [(b)] if $\Phi^x(c)\uparrow$ or ($\Phi^x(c)\downarrow=y$ but $\Phi^y(p)\uparrow$), then $\Phi_{\#}^{x}(g^{\#}(p,c))$ is a divergent $\#$-extension (with initial part being whatever $\Phi^{\Phi^x(c)}(p)$ produces).
\end{enumerate}
\end{lem}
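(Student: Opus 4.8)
The plan is to construct $g^{\#}$ by composing two ingredients: a ``wrapper'' that converts between the plain universal machine $\Phi^x$ and the restricted machine $\Phi^x_{\#}$, and the classical composition function $g$ (from the discussion preceding the lemma, obtained by the $s$-$m$-$n$ theorem and satisfying $\Phi^x(g(p,c)) = \Phi^{\Phi^x(c)}(p)$). First I would fix, once and for all, a primitive recursive pair of coding maps witnessing the effective embedding of $\bairehex$ into $\baire$ mentioned in the text: an ``inflation'' code $\iota$ with $\Phi^x(\iota)$ a canonical $\#$-extension of $x$ (padding each $x(i)$ with the minimal number of $\#$'s dictated by the $2^i$ time bound), and a ``deflation'' map that strips $\#$'s. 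With these in hand, I would describe the restricted universal machine $\Phi^x_{\#}$ concretely as: on oracle $x \in \bairehex$ and program $p$, simulate the plain $\Phi^{\hat{x}}(p)$ — where $\hat{x}$ is $x$ with $\#$'s deleted — but run it under the slave time-sharing regime (slave $S_i$ allotted $2^i$ steps, writing $\#$ and deferring to $S_{i+1}$ on timeout), so that the output is literally a $\#$-extension of whatever $\Phi^{\hat{x}}(p)$ produces, divergent exactly when $\Phi^{\hat{x}}(p)$ fails to produce that coordinate.

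The key steps, in order, are: (i) verify that $\lambda x.\,\Phi^x_{\#}(e)$ is total for every $e$ — this is immediate from the construction, since each coordinate is filled in finite time either by a genuine output symbol or by a $\#$; (ii) establish the ``simulation correctness'' claim that for $x \in \baire$, if $\Phi^x(c){\downarrow}=y$ then $\Phi^x_{\#}(\iota\text{-composed-with-}c)$ outputs a $\#$-extension of $y$, and it is nondivergent iff $\Phi^x(c){\downarrow}$; (iii) define $g^{\#}(p,c)$ so that $\Phi^x_{\#}(g^{\#}(p,c))$ first runs the restricted simulation of $\Phi^x(c)$, harvests its output $\#$-stream, deletes the $\#$'s on the fly to recover the prefix of $y$ that has appeared, and feeds this into a restricted simulation of $\Phi^{(\cdot)}(p)$ — all of this being a single restricted-universal computation with an index obtained from $g$ and the wrapper codes by $s$-$m$-$n$; (iv) case split: if both $\Phi^x(c){\downarrow}=y$ and $\Phi^y(p){\downarrow}=z$, the nested simulation never stalls, every coordinate of $z$ is eventually produced, and the output is a nondivergent $\#$-extension of $z$, giving (a); if $\Phi^x(c){\uparrow}$, the inner stream never completes and the outer simulation produces only finitely many non-$\#$ symbols (namely whatever $\Phi^{\Phi^x(c)}(p)$ manages on the incomplete oracle) before outputting $\#$ forever, and likewise if $\Phi^x(c){\downarrow}=y$ but $\Phi^y(p){\uparrow}$ the outer simulation stalls, yielding a divergent $\#$-extension, giving (b). Primitive recursiveness of $g^{\#}$ follows because it is a composition of $g$, the fixed wrapper codes, and standard primitive recursive bookkeeping functions, none of which refer to $x$.

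I expect the main obstacle to be the bookkeeping in step (iii): making precise how the ``restricted'' outer computation of $\Phi^{(\cdot)}(p)$ consumes an oracle that is itself only being produced incrementally (and with $\#$-padding) by the inner restricted computation, while ensuring that the divergence/nondivergence of the composite stream faithfully tracks the two-stage convergence behavior $\Phi^{\Phi^x(c)}(p)$. The delicate point is that a divergent inner stream must \emph{not} be mistaken by the outer stage for a genuine (finite) oracle — so the outer simulation must be set up so that it only advances its reading of coordinate $i$ of the oracle once a genuine (non-$\#$) symbol has actually appeared there, and otherwise keeps emitting $\#$. Once this ``lazy oracle consumption'' discipline is pinned down, the time-bound choice $2^i$ is seen to be irrelevant (as the text already remarks), and the two cases fall out by a straightforward unwinding of the definitions of $\#$-extension and divergent $\#$-extension.
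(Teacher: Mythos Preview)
Your proposal is correct and follows the same approach as the paper: obtain $g^{\#}$ from the classical composition index $g$ and the $s$-$m$-$n$ theorem, using the restricted machine's time-sharing regime so that divergence in either stage manifests as an eventual tail of $\#$'s. The paper in fact gives no separate proof of this lemma---it is stated as a summary of the preceding informal discussion (``we can get $p^{\#}$ effectively by $s$-$m$-$n$ Theorem'')---so your detailed unpacking of the lazy-oracle bookkeeping in step (iii) goes well beyond what the paper writes out, but it is exactly the content that discussion is pointing at.
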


On the other hand, when the master asks for a zero-test of $y$,
the slaves will execute some ``extended zero-test'' on some $\#$-extension of $y$, which treats the $\#$-symbols as zero.
Thus if the original zero-test gives answer $k$, the extended test will also give the same answer.

\begin{lem} \label{restricted_test}
  There is a function $\chi_{\#}: \bairehex\to \{0,1\}$, which is a composition of the zero-test function $\chi$ in $\baire$
  and some TTE over $\bairehex$ function, such that, for each $y^{\#}\in \bairehex$ which is a $\#$-extension of some $y\in \baire$,
$\chi_{\#}(y^{\#})$ gives the same answer as $\chi(y)$, more precisely,
  \[
  \chi_{\#}(y^{\#})=\left\{
                 \begin{array}{ll}
                   1, & \hbox{if $y=0_{\baire}$ or $y^{\#}$ is divergent with initial part all zero;} \\
                   0, & \hbox{if $y\neq 0_{\baire}$  or $y^{\#}$ is divergent with non-zero initial part.}
                 \end{array}
               \right.
\]
\end{lem}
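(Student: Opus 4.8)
The plan is to realise $\chi_{\#}$ as $\chi\circ r$, where $r:\bairehex\to\baire$ is the ``treat $\#$ as $0$'' map defined coordinatewise by $r(y)(i)=y(i)$ if $y(i)\in\omega$ and $r(y)(i)=0$ if $y(i)=\#$. Since each output coordinate depends only on the corresponding input coordinate, $r$ is plainly TTE-computable over $\bairehex$ (composing with the effective embedding of $\bairehex$ into $\baire$ noted above; on codes it is even primitive recursive). Then $\chi_{\#}:=\chi\circ r$ is, by construction, a composition of the zero-test $\chi$ in $\baire$ with a TTE function over $\bairehex$, so the only thing left is to verify the displayed case analysis.

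For that I would argue directly from the two forms of a $\#$-extension, the key point being that $\chi$ only asks whether its argument is identically zero, and inserting a $0$ in place of an occurrence of $\#$ never affects this. For a convergent $\#$-extension $y^{\#}=(\#^{n_0}, y(0),\#^{n_1}, y(1),\dots)$ of $y$, we have $r(y^{\#})=(0^{n_0}, y(0),0^{n_1}, y(1),\dots)$, which equals $0_{\baire}$ precisely when $y(j)=0$ for all $j$, i.e.\ when $y=0_{\baire}$; hence $\chi_{\#}(y^{\#})=1$ iff $y=0_{\baire}$, matching the statement in the convergent case. For a divergent $\#$-extension with initial part $(y(0),\dots,y(i))$, namely $y^{\#}=(\#^{n_0}, y(0),\dots,\#^{n_i}, y(i),\#,\#,\dots)$, we get $r(y^{\#})=(0^{n_0}, y(0),\dots,0^{n_i}, y(i),0,0,\dots)$, which is $0_{\baire}$ exactly when $y(0)=\dots=y(i)=0$, i.e.\ when the initial part is all zero; so $\chi_{\#}(y^{\#})$ is $1$ for a divergent extension with all-zero initial part and $0$ for one with non-zero initial part, again as required. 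These four subcases exhaust the $\#$-extensions $y^{\#}$ of elements of $\baire$, and they are mutually exclusive (convergent versus divergent is a definitional dichotomy, for a convergent $y^{\#}$ the witness $y$ is the unique sequence obtained by deleting the $\#$'s, and for a divergent $y^{\#}$ the initial part is uniquely read off), so the displayed equation holds on the nose.

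I do not expect a genuine obstacle here; the one point that needs a moment's care is the choice between two natural candidates for $r$. One could instead try the ``compression'' map that deletes the $\#$'s and returns the underlying $y$, but that map is only partially defined on divergent extensions (the coordinates past the initial part are never produced), so composing it with $\chi$ would not yield a total $\chi_{\#}$. Using the coordinatewise replacement $r$ keeps everything total, and the observation that $\chi$ is insensitive to inserting zeros is precisely what makes the replacement faithful to the intended ``extended zero-test that treats $\#$ as zero.''
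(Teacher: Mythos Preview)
Your proof is correct and is exactly the approach the paper intends: the paragraph immediately preceding the lemma says the extended zero-test ``treats the $\#$-symbols as zero,'' and the paper then states the lemma without further proof. Your coordinatewise replacement map $r$ makes this precise, and your case analysis verifying the displayed formula is accurate.
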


With these two features in mind, the rest of the proof is parallel to the classical proof of Kleene's Normal Form Theorem.
Define the function $t((n,c;x),e)=((m,d;x),e)$ reflecting the fact ``$C$ yields $D$ by $M$'' via their codes as follows:

First we decode from $n$ to get the current state $q$ and the current symbol read by the master, say $a$;
then check which command or quadruple in $M$ (which is coded by $e$) starts with $qa$ (in the case when $q$ is $S$ or $E$,
we do not check the second component $a$).
Depending on the command, we obtain the code of the resulting configuration $D$ as follows.
\begin{enumerate} [(1)]
\item If the quadruple is $qaa'q'$ or $qaLq'$ or $qaRq'$, then the transition only affects the master part, say it changes $n$ to $m$.
Then in this case, $t((n,c;x),e)= ((m,c;x),e)$. Note that $n\mapsto m$ is primitive recursive over $\baire$ by Lemma \ref{lift}.

\item If the quadruple is $Saaq$, where $S$ is the slave calling state, then we can get the content of billboard tape $p$ from $n$.
Let $g^{\#}(p,c)$ be the function defined in Lemma \ref{restricted_slave}.
Set $t((n,c;x),e)= ((m,g^{\#}(p,c);x),e)$ and $m$ is the resulting code of the configuration of the master part to record the change of state to $q$.
Note that $(n,c)\mapsto (m,g^{\#}(p,c))$ is again primitive recursive over $\baire$

\item If the quadruple is $Eaaq$ where $E$ is the special zero-test state,
then the state of master changed from $E$ to $q$ and the boolean bit becomes $\chi_{\#}(\Phi_{\#}^x(c))$,
where $\chi_{\#}$ are the function defined in Lemma \ref{restricted_test}.
Let $m$ be the new code of the master configuration.
we have $t((n,c;x),e)= ((m,c;x),e)$. Note that $n\mapsto m$ is again primitive recursive over $\baire$.
Also note that it is this step that the input $x$ is used.
\end{enumerate}
Apply the definition by cases (Lemma \ref{cases}) and by discussions above, the transition function $t$ is primitive recursive over $\baire$.

Clearly to determine whether a code $(n,c;x)$ of is one of a terminal configuration is primitive recursive:
Just check if the state component $q$ in $n$ is the halting state $q_h$ of $M$.

Define the {\em Kleene T-predicate} $T(e,x,z)$ by ``$z$ is a (natural number) code of finite sequences $\<(n_i, c_i): i\leq |z|\>$ such that
$(n_0,c_0;x)$ is the initial configuration and for all $i< |z|$, $t((n_i,c_i;x),e)=((n_{i+1}, c_{i+1};x),e)$ and $(n_{|z|}, c_{|z|}; x)$ is terminal''.
Note: Although $z$ is a natural number, we did not put $z$ in front of $x$ because we want to keep it the same form as in classical recursion theory.

\begin{lem}
$T(e,x,z)$ is primitive recursive over $\baire$.
\end{lem}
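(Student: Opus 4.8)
The plan is to unwind the definition of $T(e,x,z)$ into a finite Boolean combination of predicates, each of which is manifestly primitive recursive over $\baire$, and then invoke the closure properties already available. Recall that $T(e,x,z)$ asserts that $z$ codes a finite sequence $\langle(n_i,c_i):i\le|z|\rangle$ of \emph{pairs of natural numbers} such that $(n_0,c_0;x)$ is the initial configuration, $t((n_i,c_i;x),e)=((n_{i+1},c_{i+1};x),e)$ for every $i<|z|$, and $(n_{|z|},c_{|z|};x)$ is terminal. The crucial observation is that everything $z$ encodes is purely numerical, so all the sequence-handling machinery (``$z$ is a code of a sequence of pairs'', the length function $|z|$, the entry-extraction $z\mapsto(z)_i$, and the two coordinate projections of a coded pair) is classically primitive recursive, hence primitive recursive over $\baire$ by Lemma~\ref{lift}; the element $x\in\baire$ enters only as an inert parameter handed to $t$.

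Concretely, I would write $T(e,x,z)$ as the conjunction of: (a) ``$z$ is a code of a sequence of length $|z|+1$ each of whose entries codes a pair'' --- classically primitive recursive, hence over $\baire$; (b) ``$(z)_0$ codes the pair $(n_0^{\mathrm{init}},c_0^{\mathrm{init}})$'', where $n_0^{\mathrm{init}}$ is the fixed code of the starting master configuration and $c_0^{\mathrm{init}}$ a fixed index of the identity TTE function --- after decoding $(z)_0$ this is an equality of natural numbers, primitive recursive over $\baire$ and independent of $x$; (c) the terminal clause ``the state component coded in the first coordinate of $(z)_{|z|}$ equals $q_h$'' --- as already noted this is primitive recursive, again independent of $x$; and (d) the transition clause
\[
(\forall i<|z|)\;\bigl[\,t\bigl((n_i,c_i;x),e\bigr)=\bigl((n_{i+1},c_{i+1};x),e\bigr)\,\bigr],
\]
where $n_i,c_i$ abbreviate the coordinates decoded from $(z)_i$. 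Clause (d) is the only place where $x$ is used.

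For (d), fix $i$ and examine the matrix. Decoding yields $n_i,c_i,n_{i+1},c_{i+1}\in\N$ from $z$ by operations primitive recursive over $\baire$; from these, together with the parameters $e\in\N$ and $x\in\baire$, one re-forms the mixed-type input $(n_i,c_i;x)$ and applies $t$, which is primitive recursive over $\baire$ as established above. Since $t$ returns a pair of natural numbers (the $e$-slot unchanged), the matrix becomes the conjunction of the two equalities ``first coordinate of $t((n_i,c_i;x),e)$ equals $n_{i+1}$'' and ``second coordinate equals $c_{i+1}$'', each primitive recursive over $\baire$. The bounded numerical quantifier $\forall i<|z|$ is absorbed via Lemma~\ref{bounded_sum}, and the conjunction of (a)--(d) is primitive recursive over $\baire$ since that class is closed under finite Boolean combinations (the characteristic function of a conjunction being the product of the characteristic functions, and multiplication being primitive recursive by Lemma~\ref{lift}).

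There is no genuinely hard step: the substantive content, namely that the single-step transition $t$ is primitive recursive over $\baire$, was already dispatched using Lemmas~\ref{cases},~\ref{restricted_slave} and~\ref{restricted_test}. The only point requiring a moment's care is the type bookkeeping in clause (d): one must remember that $z$ records only the numerical halves $n_i,c_i$ of configurations, reconstruct the mixed-type argument $(n_i,c_i;x)$ by re-attaching the ambient $x$, and check that this reconstruction is a legitimate type-matched composition so that closure under composition genuinely applies. Once that is observed, $T$ stands exhibited as a Boolean combination, with a single bounded numerical quantifier, of predicates primitive recursive over $\baire$, and the lemma follows.
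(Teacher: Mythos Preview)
Your proposal is correct and follows essentially the same approach as the paper: both argue that the initial-configuration, transition, and terminal predicates are primitive recursive over $\baire$ (the latter two already established in the preceding discussion), and then invoke closure under bounded numerical quantification (Lemma~\ref{bounded_sum}) to assemble $T$. Your write-up is simply a more explicit unpacking of the paper's one-sentence sketch, with the type bookkeeping around reattaching $x$ made visible.
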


\begin{proof}
It follows from that coding input, transition $t$ and deciding terminal configuration are all primitive recursive over $\baire$;
and primitive recursive predicates are closed under bounded quantification (Lemma \ref{bounded_sum}).
\end{proof}

Finally, by applying $\mu$-operator, we can find the least $z$ such that $T(e,x,z)$, if exists.
(If the function has codomain $\N$, then the output reading function $u: \N\times \N\times \baire\to\N$ is exactly as in classical case.)
We define the output reading functions $U: \N\times \baire\to \baire$ as follows:
Given a code of a finite sequence of configurations $z$ and the Baire space input $x$,
we first get from $z$ the set of all slave calling commands that $M$ ever used, say the indices are $p_1, \dots, p_r$,
From $i=0$, we read through from $j=1$ to $r$, the $i$-th non-$\#$-component of $\Phi_{\#}^x(p_j)$.
If any of the $\#$-extension is divergent, then $U(z;x)$ is undefined, otherwise,
we write the $i$-th non-$\#$-component of $\Phi_{\#}^x(p_r)$ sequentially, that is the value of $U(z;x)$.
Clearly, $U$ is a partial TTE-computable function.

That finishes the proof of the Kleene Normal Form Theorem, consequently, we have the characterization theorem (Theorem \ref{main_X}).

\subsection{Characterizing the computable/recursive sets} \label{sec:characterization}
Recall that the basic open set in Baire space is of the form $\nbhd{\sigma}=\{f\in \baire: \sigma\prec f\}$ where $\sigma\in \omega^{<\omega}$.
We use $\nbhdbar{\sigma}$ to denote the closed set $\baire\setminus \nbhd{\sigma}$.

\begin{defi} \label{effectively_open}
A subset $O\subseteq \baire$ is said to be {\em effectively open} or $\Sigma^0_1$ over $\baire$
if there is a (classically) recursive sequence $\<\sigma_i\>_{i\in \omega}$
such that $O=\bigcup_i \nbhd{\sigma_i}$. We say that $C\subseteq \baire$ is {\em effectively closed} or $\Pi^0_1$ over $\baire$
if $\baire\setminus C$ is effectively open.

A subset $A\subseteq \baire$ is said to be an {\em effectively $G_\delta$ set} or $\Pi^0_2$,
if there is a recursive sequence $\<\sigma_{i,j}\>$ such that $A=\bigcap_i \bigcup_j \nbhd{\sigma_{i,j}}$.
$A$ is an {\em effectively $F_\sigma$ set} or $\Sigma^0_2$ if its complement is effectively $G_\delta$.
$A$ is $\Delta^0_2$ over $\baire$ if it is both effectively $G_{\delta}$ and effectively $F_{\sigma}$.
\end{defi}

\begin{defi}
We say that a subset $A\subseteq \baire$ is {\em recursive over Baire space}
if its characteristic function $\chi_{_A}$ is partial recursive over $\baire$.
\end{defi}

Clearly, from its definition, recursive over $\baire$ sets are closed under complementation.

\begin{lem} \label{lem:pi01isstronglycomputable}
Every effectively open subset $O$ of $\baire$ is recursive over $\baire$, so is every effectively closed set.
\end{lem}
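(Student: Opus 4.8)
The plan is to reduce the question ``is $x\in O$?'' to the question ``is a certain point of $\baire$ equal to $0_{\baire}$?'', which the basic function $\chi$ answers, after a preprocessing step that is \emph{continuous}, i.e.\ TTE-computable, and therefore harmless. The naive approach---searching for an index $i$ with $\sigma_i\prec x$ by means of the $\mu$-operator---does not work, because such a search diverges exactly when $x\notin O$, whereas $\chi_O$ must be defined there (with value $0$). So the whole difficulty is to package all the tests ``$\sigma_i\prec x$?'' into a single element of $\baire$ \emph{before} invoking $\chi$ once.

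Concretely, write $O=\bigcup_i\nbhd{\sigma_i}$ with $\<\sigma_i\>_{i\in\omega}$ a classically recursive sequence, as in Definition~\ref{effectively_open}. First I would describe an oracle Turing machine $M$ which, on numerical input $i$ and oracle $x\in\baire$, computes $\sigma_i$ from $i$, reads the entries $x(0),\dots,x(|\sigma_i|-1)$ off the oracle tape, and outputs $1$ if they agree with $\sigma_i$ and $0$ otherwise. The key point is that $M^x(i)\!\downarrow$ for \emph{every} $i$ and \emph{every} $x$, since $M$ only ever inspects a bounded (indeed $i$-determined) initial segment of the oracle. Hence $M$ computes a total TTE-computable function $G:\baire\to\baire$ with
\[
G(x)(i)=1 \iff \sigma_i\prec x, \qquad G(x)(i)=0 \text{ otherwise.}
\]
Consequently $x\in O$ iff $G(x)(i)=1$ for some $i$, that is, iff $G(x)\neq 0_{\baire}$.

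It then remains to assemble $\chi_O$ from the ingredients of Definition~\ref{pcX}. Since $G$ is TTE-computable, the universal TTE function yields a fixed $e_0\in\N$ with $G(x)=\Psi(e_0;x)$ for all $x$; and the constant map $x\mapsto e_0$ of type $\baire\to\N$ is partial recursive over $\baire$ (compose $Z$ with $\chi$ and iterate $S$ a total of $e_0$ times). Therefore $\chi_O(x)=\mathrm{not}\bigl(\chi(\Psi(e_0;x))\bigr)$, where $\mathrm{not}$ is the classically recursive function with $\mathrm{not}(0)=1$ and $\mathrm{not}(n+1)=0$, which is partial recursive over $\baire$ by Lemma~\ref{lift}; closure of $\mathcal{C}$ under composition then gives that $O$ is recursive over $\baire$. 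For an effectively closed $C$, the complement $\baire\setminus C$ is effectively open, so with the same $G$ one has $\chi_C(x)=\chi(\Psi(e_0;x))$ directly (or one may simply invoke the closure of the recursive over $\baire$ sets under complementation noted before the lemma). The only step that requires any care is the construction of $G$ together with the observation that it is \emph{total} and TTE-computable; everything afterwards is routine bookkeeping with composition.
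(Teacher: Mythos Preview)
Your proof is correct and is essentially the same argument as the paper's: both construct the TTE-computable map $x\mapsto G(x)$ with $G(x)(i)=1$ iff $\sigma_i\prec x$, then apply a single zero-test to $G(x)$ and negate. The only difference is cosmetic---the paper phrases this in the MS-machine formalism (the $i$-th slave writes the bit $G(x)(i)$ on the zero-test tape, then the master invokes the zero-test module), whereas you phrase it in the functional-scheme formalism of Definition~\ref{pcX} (compose $\Psi(e_0;\cdot)$ with $\chi$ and $\mathrm{not}$); by Theorem~\ref{main_X} these are equivalent, and since the definition of ``recursive over $\baire$'' is stated in terms of partial recursive functions your version is arguably more direct.
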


\begin{proof} Let $O=\bigcup_i \nbhd{\sigma_i}$ where $\<\sigma_i\>_{i\in \omega}$ is a recursive sequence.
We design an MS-machine $M$ to compute $O$ as follows: Given input $x\in \baire$,
the master of $M$ writes the code of computing $\<\sigma_i\>$ on the billboard;
and asks the $i$-th slave $S_i$ to check if $\sigma_i\prec x$, i.e., $x\in \nbhd{\sigma_i}$;
if so, $S_i$ writes a $1$ on the $i$-th cell of the zero-test tape; it writes a $0$ otherwise.
Next we use zero-test to get the boolean bit $k$.  If $x$ is in $O$ then some slave will write a $1$ on the zero-test tape,
so the sequence being tested is non-zero, so $k=0$;
if $x$ is not in $O$, then all slaves write $0$ on the zero-test tape, so $k=1$.  In other words $x\in O$ if and only if $k=0$.
Note that $M$ actually decides the membership of $O$ in two master steps, one slave calling and one zero-test.
\end{proof}

For example, the Cantor space $2^{\omega}$ is recursive over $\baire$ as a subset of the Baire space.

\begin{lem} \label{lem:delta02isstronglycomputable}
If a subset $A$ of Baire space is $\Delta^0_2$ over $\baire$, then $A$ is recursive over $\baire$.
\end{lem}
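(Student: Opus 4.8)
The plan is to build a master–slave machine that computes $\chi_A$ directly, and then invoke Theorem~\ref{main_X} to conclude that $\chi_A$ is partial recursive over $\baire$, which by definition says $A$ is recursive over $\baire$. The slaves will run, in parallel, an infinite search at each ``level'' of the $\Delta^0_2$ presentation of $A$, and the zero‑test module will be used to detect when the outcomes of these searches have stabilized — that is, the zero‑test will play the role of a limit operator.

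\textbf{Step 1: a recursive, pointwise‑convergent approximation to $\chi_A$.} Write $A=\bigcap_i U_i$ and $\baire\setminus A=\bigcap_i V_i$ where $U_i=\bigcup_j\nbhd{\sigma_{i,j}}$, $V_i=\bigcup_j\nbhd{\tau_{i,j}}$ for recursive sequences $\langle\sigma_{i,j}\rangle$, $\langle\tau_{i,j}\rangle$. Since $A\subseteq U_i$ and $\baire\setminus A\subseteq V_i$ for every $i$, the sets $U'_i:=\bigcap_{k\le i}U_k$ and $V'_i:=\bigcap_{k\le i}V_k$ are decreasing in $i$, satisfy $\bigcap_iU'_i=A$, $\bigcap_iV'_i=\baire\setminus A$, and obey $U'_i\cup V'_i=\baire$ (if $x\notin U'_i$ then $x\notin U_k$ for some $k\le i$, so $x\notin A$, so $x\in\baire\setminus A\subseteq V'_i$). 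Now ``$x\in U'_i$'' is semidecidable from $x$ by dovetailing the $i{+}1$ finite searches for prefixes of $x$ among $\langle\sigma_{k,j}\rangle_j$ for $k\le i$, and likewise for ``$x\in V'_i$''. Define $b_i(x)\in\{0,1\}$ by dovetailing both semidecisions and outputting $1$ if the $U'_i$‑search halts (ties broken to this side) and $0$ if the $V'_i$‑search halts; this always halts because $x\in U'_i\cup V'_i$, and it is a halting oracle computation in $x$. If $x\in A$ then $x\notin V_m$ for some $m$, hence $x\notin V'_i$ for all $i\ge m$ (decreasing), so for $i\ge m$ the $V'_i$‑side never halts while the $U'_i$‑side does, forcing $b_i(x)=1=\chi_A(x)$; symmetrically $b_i(x)=0=\chi_A(x)$ for large $i$ when $x\notin A$. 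Thus $\langle b_i(x)\rangle$ is eventually constant with value $\chi_A(x)$.

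\textbf{Step 2: the machine.} On input $x$ the master loops over $N=0,1,2,\dots$. At stage $N$ it writes on the billboard a program (obtained effectively from $N$ and the fixed procedures for $\langle\sigma_{k,j}\rangle$, $\langle\tau_{k,j}\rangle$) instructing slave $S_i$ to compute $b_N(x)$ and $b_{N+i}(x)$ as above and to write $\lvert b_{N+i}(x)-b_N(x)\rvert$ on the $i$‑th cell of the zero‑test tape; then the master issues a zero‑test. Every slave halts, so the divergence case (2.2) never occurs. If the returned bit is $0$, the master passes to stage $N{+}1$; if it is $1$ — meaning $b_{N+i}(x)=b_N(x)$ for all $i$ — the master performs one further slave call making each $S_i$ write the constant value $b_N(x)$ on the zero‑test tape, issues one more zero‑test (whose bit is $1$ iff $b_N(x)=0$), and halts outputting that value on the master's output tape. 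Since $\langle b_i(x)\rangle$ stabilizes at some stage $N_0$, the loop exits by stage $N_0$ at the latest, outputting the common tail value $\chi_A(x)$; and if it exits at an earlier $N$, the tested sequence being zero still forces the level‑$N$ value to coincide with the eventual value. Hence this machine computes $\chi_A$, so by Theorem~\ref{main_X}, $\chi_A$ is partial recursive over $\baire$ and $A$ is recursive over $\baire$. (Here the zero‑test tape is rewritten at each stage; unlike the Baire‑space output tape, it carries no write‑once restriction, so this is consistent with the machine's description, just as in the proof of Lemma~\ref{lem:pi01isstronglycomputable}.)

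\textbf{Main obstacle.} The substance of the argument is Step~1: the non‑uniform ``$\forall i\,\exists j$'' descriptions of $A$ and its complement do not by themselves yield a convergent approximation. Two observations are needed and neither is automatic — that $\{U_i,V_i\}$ covers all of $\baire$, so that each level's question is \emph{total} (decidable in the limit) rather than merely semidecidable on one side, and the passage to the decreasing sets $\bigcap_{k\le i}U_k$, $\bigcap_{k\le i}V_k$, which upgrades ``correct answer for infinitely many $i$'' to ``correct answer for all large $i$''. Once that recursive, pointwise‑convergent $\{0,1\}$‑approximation of $\chi_A$ is in hand, the zero‑test module does exactly the work of detecting the limit, and the rest of the construction is routine MS‑machine bookkeeping.
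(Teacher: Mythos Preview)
Your proof is correct, but it takes a genuinely different route from the paper's. The paper works with the $\Sigma^0_2$ presentations: writing $A=\bigcup_i C_i$ and $\baire\setminus A=\bigcup_i D_i$ with $C_i,D_i$ effectively closed, the machine simply tests, in order, whether $x\in C_0$, $x\in D_0$, $x\in C_1$, $x\in D_1,\ldots$, each such test being a single invocation of Lemma~\ref{lem:pi01isstronglycomputable} (one slave call plus one zero-test); the first positive answer decides $\chi_A(x)$. Your approach instead builds from the $\Pi^0_2$ presentations a recursive $\{0,1\}$-valued approximation $b_i(x)\to\chi_A(x)$ (the key points being that $U'_i\cup V'_i=\baire$, making each $b_i$ total, and the monotonization to $U'_i,V'_i$, forcing eventual constancy rather than mere infinite recurrence), and then uses the zero-test as a \emph{limit detector} to find the stage at which the tail stabilizes. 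The paper's argument is shorter and more direct, leaning on Lemma~\ref{lem:pi01isstronglycomputable} as a black box; your argument is more elaborate but makes explicit an interesting principle---that the zero-test module lets the master compute $\lim_i f(i,x)$ for any total, eventually-constant, slave-computable $f$---which is a reusable idea and essentially an MS-machine version of the limit lemma.
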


\begin{proof} Since $A$ is $\Delta^0_2$, both $A$ and its complement are $\Sigma^0_2$.
Let $A=\bigcup_i\bigcap_j \nbhdbar{\sigma_{i,j}}$ and $\baire\setminus A=\bigcup_i\bigcap_j (\nbhdbar{\tau_{i,j}})$,
where $\<\sigma_{i,j}\>$ and $\<\tau_{i,j}\>$ are recursive sequences in the classical sense.
Suppose $x\in \baire$ is the input for the MS-machine $M$ which we are designing.
Using the algorithm described in the proof of Lemma \ref{lem:pi01isstronglycomputable},
$M$ can decide if $x$ is in the closed set $\bigcap_j (\nbhdbar{\sigma_{0,j}})$ using two master steps.
If $x$ is in such a closed set, then $M$ outputs $1$ indicating that ``$x$ is in $A$''.
Otherwise, $M$ will try to decide if $x\in \bigcap_j (\nbhdbar{\tau_{0,j}})$,
if yes, output $0$ indicating that ``$x$ is in $\baire\setminus A$''.
Otherwise, try $\bigcap_j (\nbhdbar{\sigma_{1,j}})$, then try $\bigcap_j (\nbhdbar{\tau_{1,j}})$, etc.
Since $x$ is either in $A$ or in $\baire\setminus A$.
This algorithm always halts and will give correct answer.
\end{proof}

The converse of Lemma \ref{lem:delta02isstronglycomputable} is also true.
The proof depends on another analysis of how MS-machine works, on top of the one given in the proof of the Normal Form Theorem.

Let $M$ be an MS-machine. Firstly we make all slaves in $M$ restricted as described in subsection \ref{NormalforBaire},
thus push the partialness issue to the very end.
We associate to $M$ a binary tree $\Gamma:=\Gamma(M)$ , called the {\em computation tree} for $M$, as follows.

Roughly speaking, $\Gamma$ simply lists all possible $M$-computations in tree form.
Each node $\sigma$ in $\Gamma$ is associated with two kinds of parameters:
One is a (code of a) configuration as defined in the Normal Form Theorem;
the other is a triple of indices $(i,j,k)=(i(\sigma), j(\sigma),k(\sigma))$,
where $i$ and $j$ are indices for some effective $\Sigma^0_1$ and $\Pi^0_1$ sets $O_i$ and $C_j$ respectively;
and $k$ is index for the TTE function $\lambda x. \Phi^x_{\#}(k)$.
It intuitively means that at $\sigma$, we restrict our attention only to the set $O_i\cap C_j$ which is determined by zero-tests,
and $\Phi^x_{\#}(k)$ is applied to $x \in O_i\cap C_j$ which has produced the contents on the tape(s) for slaves.

The root of $\Gamma$ is associated with the initial configuration with empty input,
and is labelled $(i_0, j_0, k_0)$ where $i_0$ and $j_0$ are indices of the whole space $\bairehex$,
and $k_0=c_0$ which is the index of identity function defined in the proof of Normal Form Theorem.
Suppose that we have defined the associated configuration and the indices at $\sigma$.
We can decode from the configuration to get the state component $q$.
Depending on $q$ we have the following possible cases:

Case 1. $q$ is the halting state. Then declare $\sigma$ is a terminal node on $\Gamma$.

Case 2. $q$ is the slave calling state $S$. Then $\sigma$ has a unique successor $\sigma\concat 0$.
We associate the `yielded' configuration to $\sigma\concat 0$.
If the number written on the billboard for slave to execute is $p$, $k(\sigma\concat 0)=g^{\#}(p,k(\sigma))$,
where $g^{\#}$ is defined in Lemma \ref{restricted_slave}.  Set $i(\sigma\concat 0)=i(\sigma)$, and $j(\sigma\concat 0)=j(\sigma)$.

Case 3. $q$ is the zero-test state $E$. Then $\sigma$ has two successors $\sigma\concat 0$ and $\sigma\concat 1$.
At node $\sigma\concat 0$ (which indicates the test result is `No', i.e., the sequence being tested has symbols other than zero and $\#$),
we associate the `yielded' configuration (whose boolean bit is $0$ now) to $\sigma\concat 0$.
If the content of the zero-test is prepared via TTE-function $\Phi_{\#}(z)$, we can effectively find the index $i^*$
for the $\Sigma^0_1$-set $\Phi_{\#}(z)^{-1}[\bairehex\setminus\{0,\#\}^{\omega}]$, and index $i^{**}$ for $O_{i(\sigma)}\cap O_{i^*}$.
Set $i(\sigma\concat 0)=i^{**}$. Set $j(\sigma\concat 0)=j(\sigma)$ and $k(\sigma\concat 0)=k(\sigma)$.
Do the similar things at $\sigma\concat 1$ (just replace $O, i$ by $C,j$ respectively).

Case 4. $q$ is one of the standard states. Then $\sigma$ has a unique successor $\sigma\concat 0$.
Read from the configuration associated at $\sigma$ the current command $qaa'q'$ and the current scanned symbol $b$.
At node $\sigma\concat 0$ we associate the corresponding `yielded' configuration to $\sigma\concat 0$.
Leave all indices unchanged.
%

This finishes the construction of the computation tree $\Gamma$.
Clearly $\Gamma$ is a recursive tree with $\sigma\mapsto (i(\sigma),j(\sigma),k(\sigma))$ recursive.

From this computation tree, we can get several consequences.

\begin{thm} \label{thm:MScomputable_equal_Delta2} A partial function $F:\baire\to \baire$ is MS-computable if and only if
there are a (classical) recursive set $R\subseteq \omega$ and a recursive function $\alpha:R\to \omega^3$
mapping $\sigma\mapsto (i,j,k')$ satisfying the following conditions:
\begin{enumerate}
  \item [(a)] The sets $O_{i(\sigma)}\cap C_{j(\sigma)}$ are mutually disjoint for $\sigma\in R$.
  \item [(b)] If $F(x)$ is defined, then there is a $\sigma\in R$ such that $x\in O_{i(\sigma)}\cap C_{j(\sigma)}$
  and $F(x)=\Phi^x(k')$.
  \item [(c)] If $F(x)$ is undefined, then either $x\notin \bigcup_{\sigma\in R} (O_{i(\sigma)}\cap C_{j(\sigma)})$
  or $x\in O_{i(\sigma)}\cap C_{j(\sigma)}$ for some $\sigma\in R$ but $\Phi^x(k')$ is undefined.
\end{enumerate}
\end{thm}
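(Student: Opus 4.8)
The plan is to prove both directions using the computation tree $\Gamma(M)$ constructed above. For the forward direction (MS-computable implies the combinatorial form), I would let $M$ compute $F$ with all slaves restricted, and form $\Gamma = \Gamma(M)$. The recursive set $R\subseteq\omega$ will be the set of codes of \emph{terminal} nodes of $\Gamma$, and for each such node $\sigma$ I would take $\alpha(\sigma) = (i(\sigma), j(\sigma), k'(\sigma))$ where $k'(\sigma)$ is obtained from the TTE-component $k(\sigma)$ of the associated configuration by post-composing with the output-reading function $U$ of the Normal Form Theorem — that is, $\lambda x.\Phi^x(k'(\sigma))$ reads off, component by component, the non-$\#$ entries produced by the slave programs recorded along the branch to $\sigma$. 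First I would check (a): if $\sigma_1\neq\sigma_2$ are terminal nodes, then along the way they must have diverged at some zero-test node $\tau$, one going to $\tau\concat 0$ and the other to $\tau\concat 1$; by the construction in Case 3 the $\Sigma^0_1$ index on the $0$-side records membership in $\Phi_{\#}(z)^{-1}[\bairehex\setminus\{0,\#\}^\omega]$ while the $\Pi^0_1$ index on the $1$-side records its complement, so $O_{i(\sigma_1)}\cap C_{j(\sigma_1)}$ and $O_{i(\sigma_2)}\cap C_{j(\sigma_2)}$ are disjoint. For (b): if $F(x)\downarrow$, the restricted-slave computation on $x$ follows a unique branch of $\Gamma$ (the zero-tests on $x$ select which successor to take at each $E$-node), and since $F(x)$ is defined this branch reaches a terminal node $\sigma$; by induction along the branch, using Lemmas~\ref{restricted_slave} and \ref{restricted_test}, $x$ satisfies exactly the constraints coded by $i(\sigma), j(\sigma)$, i.e.\ $x\in O_{i(\sigma)}\cap C_{j(\sigma)}$, and the output equals $\Phi^x(k'(\sigma))$. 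For (c): if $F(x)\uparrow$, either the branch selected by $x$'s zero-tests is infinite or dies at an $S$-node with a divergent $\#$-extension; in the first case the branch reaches no terminal node and $x\notin\bigcup_{\sigma\in R}(O_{i(\sigma)}\cap C_{j(\sigma)})$, while in the second case $x$ does lie in the cell of the last node before divergence but $\Phi^x(k')$ there is undefined (the $\#$-extension being divergent is exactly the condition under which $U$, hence $\Phi^x(k')$, diverges).

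For the converse (the combinatorial form implies MS-computable), I would directly design an MS-machine. Given $R$ and $\alpha$, enumerate $R$ as $\sigma_0,\sigma_1,\dots$ recursively. On input $x$, the master proceeds in stages: at stage $n$, using the algorithm from the proof of Lemma~\ref{lem:pi01isstronglycomputable} (write the code for enumerating $\langle\sigma_{i(\sigma_n),j}\rangle$ on the billboard, have slave $S_j$ test $\sigma_{i(\sigma_n),j}\prec x$, then zero-test; symmetrically for the $\Pi^0_1$ part), the master decides in finitely many master steps whether $x\in O_{i(\sigma_n)}\cap C_{j(\sigma_n)}$. Since these sets are mutually disjoint, at most one $n$ gives a positive answer; when it does, the master writes $k'(\sigma_n)$ on the billboard, calls the slaves to compute $\Phi^x(k'(\sigma_n))$, and halts with that output (which diverges exactly when $\Phi^x(k'(\sigma_n))$ does, matching the second alternative in (c)). If no $n$ ever gives a positive answer — which by (c) is precisely the first alternative, $x\notin\bigcup_{\sigma\in R}(O_{i(\sigma)}\cap C_{j(\sigma)})$ — the machine runs forever, so $F(x)\uparrow$, again as required.

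I expect the main obstacle to be bookkeeping in the forward direction: making precise that the restricted-slave semantics genuinely turns each branch of $\Gamma$ into a single $\Sigma^0_1\cap\Pi^0_1$ constraint on $x$ together with one TTE-image, so that the disjointness in (a) and the exact divergence behavior in (c) come out cleanly. In particular I need the \emph{invariant}, proved by induction on the depth of $\sigma$ in $\Gamma$, that for every $x\in O_{i(\sigma)}\cap C_{j(\sigma)}$ the restricted MS-computation of $M$ on $x$ reaches $\sigma$'s configuration with slave tape content $\Phi^x_{\#}(k(\sigma))$, and that $O_{i(\sigma)}\cap C_{j(\sigma)}$ is exactly the set of inputs whose zero-test answers so far agree with the branch leading to $\sigma$. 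Once this invariant is in hand, all three clauses are immediate, and the only remaining point is to verify that $k'(\sigma)$ (the composition of the slave program along the branch with the output reader $U$) is obtained recursively in $\sigma$ — which follows from the $s$-$m$-$n$ theorem exactly as in the proof of the Normal Form Theorem, giving the recursiveness of $\alpha$.
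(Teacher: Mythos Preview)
Your proposal is correct and follows essentially the same route as the paper: take $R$ to be the terminal nodes of the computation tree $\Gamma(M)$, set $k'(\sigma)$ to be an index for $U\circ\lambda x.\,\Phi^x_{\#}(k(\sigma))$, derive (a) from the disjoint split at each zero-test branching, and for the converse build an MS-machine that searches $R$ cell by cell via Lemma~\ref{lem:pi01isstronglycomputable} and then calls $\Phi^x(k')$. One small wording point: with restricted slaves the branch never ``dies at an $S$-node''---it either reaches a halting node (where a divergent $\#$-extension makes $\Phi^x(k')$ undefined) or is infinite---but your subsequent explanation via $U$ already captures this correctly.
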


\begin{proof}
  Suppose that $F$ is computed by the MS-machine $M$, let $\Gamma$ be its computation tree described above.
  Define $R=\{\sigma\in \Gamma:$ the state component of the configuration associated with $\sigma$ is halting$\}$.
(Here we identify the node $\sigma$ on $\Gamma$ with its G\"odel number.)
  Since $\Gamma$ is recursive, so is $R$. Let $\alpha$ map $\sigma$ to $(i, j, k')$, where $(i,j,k)$ is the label of $\sigma$ in $\Gamma$
  and $k'$ is the index of the TTE computable function $U\circ \lambda x. \Phi^x_{\#}(k)$,
where $U$ is the output reading function defined in the Normal Form Theorem.
  To see that (a) is satisfied, any branching at $\Gamma$ can only be caused by Case 3 in the definition of $\Gamma$.
  The split is caused by zero-test and the resulting sets are certainly disjoint.

For (b) if $F(x)$ is defined, we will hit a halting node $\sigma$ on $\Gamma$,
  so $\sigma\in R$ and all slave calls along the computation path up to $\sigma$ are not partial,
so $\Phi^x(k)$ would produce the correct result $F(x)$.  Similarly (c) holds.

Conversely, suppose that we have recursive $R$ and $\alpha$, we design an MS-machine $M$ as follows.
For any input $x$, $M$ checks through every element $\sigma\in R$ to see if $x\in O_{i(\sigma)}\cap C_{j(\sigma)}$.
For each fixed $\sigma$, this can be done with finitely many master steps as in the proof of Lemma \ref{lem:pi01isstronglycomputable}.
If no such $\sigma$ is found, then $M(x)\uparrow$, however by (b), $F(x)$ also diverges.
If such $\sigma$ is found, then $M$ just call the TTE functional $\Phi^x(k')$, by (b) and (c), $M(x)=F(x)$.
\end{proof}

\begin{cor}
  The domain $W$ of an MS-computable function is of the form $\bigcup_n [(O_{i_n}\cap C_{j_n})\cap Y_n]$,
where $Y_n$ is $\Pi^0_2$ subset $\{x: \Phi^x(k_n)\in $ domain of $U\}$, and $n\mapsto (i_n,j_n,k_n)$ is primitive recursive.
\end{cor}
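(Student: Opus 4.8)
The plan is to read the domain $W$ off Theorem~\ref{thm:MScomputable_equal_Delta2} together with the computation tree $\Gamma=\Gamma(M)$ built above, the only genuinely new points being that $\Gamma$ can be arranged to be \emph{primitive} recursive and that the set of inputs on which the accumulated restricted-slave output fails to diverge is uniformly $\Pi^0_2$. So first I would fix an MS-machine $M$ computing $F$ (I take $F:\baire\to\baire$; the mixed-type cases are identical after absorbing the $\N$-arguments) and inspect the construction of $\Gamma$: every step (Cases~1--4) only decodes configurations and applies the primitive recursive functions $g^{\#}$, $\chi_{\#}$ and the $s$-$m$-$n$ functions, so $\Gamma$, its labeling $\sigma\mapsto(i(\sigma),j(\sigma),k(\sigma))$ and the set $R$ of halting nodes are all primitive recursive (computing them at a node $\sigma$ amounts to iterating the construction $|\sigma|$ times along the bits of $\sigma$, which is bounded iteration). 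Moreover every branching of $\Gamma$ is a zero-test (Case~3), and at such a node the indices $i,j$ accumulate precisely the $\Sigma^0_1$- or $\Pi^0_1$-constraint imposed by the ``No''/``Yes'' answer; a routine induction then shows that $O_{i(\sigma)}\cap C_{j(\sigma)}$ is \emph{exactly} the set of inputs $x$ whose ($\#$-restricted) $M$-computation passes through $\sigma$.

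Next, since the restricted slaves always terminate, an input $x$ may follow the branch to a halting node $\sigma$ without $F(x)$ being defined: by iterating Lemma~\ref{restricted_slave} and using the definition of the output-reading function $U$ in Theorem~\ref{KleeneNFT}, the content $\Phi^x_{\#}(k(\sigma))$ produced at $\sigma$ is a \emph{non-divergent} $\#$-extension iff every slave call made along the path converged, which is exactly when $U$ yields a genuine value; hence $F(x)\downarrow$ iff $x$ lies in $O_{i(\sigma)}\cap C_{j(\sigma)}$ for the unique halting $\sigma$ reached by $x$ and, in addition, $\Phi^x(k(\sigma))\in\dom U$. (The same is obtained more formally by unwinding conditions (b) and (c) of Theorem~\ref{thm:MScomputable_equal_Delta2}, with (a) used for the harder inclusion.) Thus
\[
W=\bigcup_{\sigma\in R}\bigl[(O_{i(\sigma)}\cap C_{j(\sigma)})\cap Y_\sigma\bigr],\qquad Y_\sigma=\{x:\Phi^x(k(\sigma))\in\dom U\}.
\]

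It then remains to check two routine facts. First, $Y_\sigma$ is $\Pi^0_2$ over $\baire$ uniformly primitive-recursively in $k(\sigma)$: one has $Y_\sigma=\bigcap_i A_{i,\sigma}$, where $A_{i,\sigma}$ is the set of $x$ such that $\Phi^x_{\#}(k(\sigma))$ eventually outputs an $(i+1)$-st non-$\#$ symbol; by the use principle $A_{i,\sigma}=\bigcup\{\nbhd{\tau}:$ the restricted oracle machine with index $k(\sigma)$, run on the finite oracle $\tau$, does so while scanning only inside $\tau\}$, and this list of finite strings is recursive uniformly in $(i,k(\sigma))$, giving a recursive double array that witnesses $\Pi^0_2$-ness in the sense of Definition~\ref{effectively_open} with index primitive recursive in $k(\sigma)$. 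Second, to index by $n\in\omega$: enumerate $\omega$ as candidate G\"odel numbers of binary strings and set $(i_n,j_n,k_n)=(i(\sigma),j(\sigma),k(\sigma))$ when $n$ is the code of a node $\sigma\in R$ --- a primitive recursive condition by the first paragraph --- and otherwise let $i_n$ be a fixed index of the empty $\Sigma^0_1$ set, so that the $n$-th term is empty. Then $n\mapsto(i_n,j_n,k_n)$ is primitive recursive, $Y_n$ is the $\Pi^0_2$ set attached to $k_n$, and $W=\bigcup_n[(O_{i_n}\cap C_{j_n})\cap Y_n]$, as required.

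I expect the only real difficulty to be bookkeeping rather than ideas: confirming that ``$\Phi^x(k_n)\in\dom U$'' really coincides with non-divergence of the composed restricted-slave output --- which needs the exact behaviour of $g^{\#}$ from Lemma~\ref{restricted_slave} and of $U$ from Theorem~\ref{KleeneNFT} --- and verifying that \emph{every} ingredient of the $\Gamma$-construction, hence the map $n\mapsto(i_n,j_n,k_n)$, is primitive recursive and not merely recursive (the surrounding text only asserted recursiveness).
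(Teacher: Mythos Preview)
Your proposal is correct and follows exactly the route the paper intends: the paper gives no separate proof of this corollary, treating it as an immediate consequence of Theorem~\ref{thm:MScomputable_equal_Delta2} and the computation-tree construction, and your argument is precisely the natural unwinding of conditions (a)--(c) there. Your extra care in verifying that the $\Gamma$-construction is \emph{primitive} recursive (the paper's surrounding text only says ``recursive'') and in spelling out why each $Y_n$ is uniformly $\Pi^0_2$ fills in details the paper leaves implicit; these are exactly the points one would need to check, and your justifications for both are sound.
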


\begin{thm} \label{thm:stronglycomputableisdelta02}
If a subset $A$ of Baire space is recursive over $\baire$, then $A$ is $\Delta^0_2$ over $\baire$.
\end{thm}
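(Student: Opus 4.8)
The plan is to run the characteristic function $\chi_{_A}$ through the computation-tree analysis behind Theorem~\ref{thm:MScomputable_equal_Delta2}, and then exploit a feature peculiar to Baire space --- basic open sets are clopen, so every effectively open set is an effective union of effectively closed pieces --- to keep the resulting decomposition at the second level of the arithmetical hierarchy.

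First I would observe that, since $A$ is recursive over $\baire$, its \emph{total} characteristic function $\chi_{_A}:\baire\to\N$ is MS-computable; fix a computing machine $M$, make its slaves restricted, and form the computation tree $\Gamma=\Gamma(M)$. Re-running the proof of Theorem~\ref{thm:MScomputable_equal_Delta2} in the codomain-$\N$ case is in fact cleaner: at a terminal node the value is just the natural number standing on the master's output tape, which is a recursive function $v(\sigma)$ of the node $\sigma$ alone (it does not depend on $x$), and there is no divergent-slave-call subtlety because the restricted slaves and the numerical output reading are total. This yields a recursive set $R\subseteq\omega$ of (codes of) terminal nodes of $\Gamma$, together with recursive maps $\sigma\mapsto i(\sigma),j(\sigma),v(\sigma)$, such that, writing $W_\sigma=O_{i(\sigma)}\cap C_{j(\sigma)}$: the sets $W_\sigma$ ($\sigma\in R$) are pairwise disjoint (branching in $\Gamma$ occurs only at zero-test nodes, Case~3); they cover $\baire$ (totality of $\chi_{_A}$ forces every $x$ to follow a finite path in $\Gamma$ to some terminal $\sigma$, and the inductive description of the labels $i,j$ gives $x\in W_\sigma$); and $\chi_{_A}(x)=v(\sigma)$ for the unique $\sigma\in R$ with $x\in W_\sigma$. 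Splitting $R$ into the recursive sets $R_1=\{\sigma\in R:v(\sigma)=1\}$ and $R_0=R\setminus R_1$ then gives $A=\bigcup_{\sigma\in R_1}W_\sigma$ and $\baire\setminus A=\bigcup_{\sigma\in R_0}W_\sigma$.

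Next I would show each $W_\sigma$ is effectively $F_\sigma$, uniformly in $\sigma$. Write $O_{i(\sigma)}=\bigcup_m\nbhd{\tau^\sigma_m}$ with $(\sigma,m)\mapsto\tau^\sigma_m$ recursive; every $\nbhd{\tau^\sigma_m}$ is clopen, so $\nbhd{\tau^\sigma_m}\cap C_{j(\sigma)}$ is effectively closed (its complement $\nbhdbar{\tau^\sigma_m}\cup(\baire\setminus C_{j(\sigma)})$ is effectively open), uniformly in $(\sigma,m)$. Hence $W_\sigma=\bigcup_m\bigl(\nbhd{\tau^\sigma_m}\cap C_{j(\sigma)}\bigr)$ is an effective union of effectively closed sets, and therefore so is $A=\bigcup_{\sigma\in R_1,\,m\in\omega}\bigl(\nbhd{\tau^\sigma_m}\cap C_{j(\sigma)}\bigr)$, after re-indexing the pairs $(\sigma,m)$ with $\sigma\in R_1$ by a single recursive parameter (indices not of the right form producing the effectively closed set $\emptyset$). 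Thus $A$ is effectively $F_\sigma$, i.e.\ $\Sigma^0_2$ over $\baire$. The identical computation applied to $R_0$ shows $\baire\setminus A$ is effectively $F_\sigma$, hence $A$ is effectively $G_\delta$, i.e.\ $\Pi^0_2$ over $\baire$. Being both, $A$ is $\Delta^0_2$ over $\baire$.

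The only genuinely non-bookkeeping point --- the one I expect to be the crux --- is the observation that $O_{i(\sigma)}\cap C_{j(\sigma)}$, a $\Sigma^0_1$ set intersected with a $\Pi^0_1$ set, is already effectively $F_\sigma$ rather than merely effectively $G_\delta$; this is exactly what keeps the ensuing countable union inside $\Sigma^0_2$ (a countable union of $\Pi^0_2$ sets would in general only land in $\Sigma^0_3$), and it relies on the ultrametric structure of $\baire$. Some care is also needed in extracting the codomain-$\N$ analogue of Theorem~\ref{thm:MScomputable_equal_Delta2} from $\Gamma$: it is precisely the totality of $\chi_{_A}$ that makes the $W_\sigma$ actually cover $\baire$ and forces the value on each $W_\sigma$ to be a single number $v(\sigma)$ computable from $\sigma$.
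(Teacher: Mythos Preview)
Your proposal is correct and follows essentially the same route as the paper: apply the computation-tree decomposition of Theorem~\ref{thm:MScomputable_equal_Delta2} to the total $\chi_{_A}$, read the output bit $v(\sigma)$ off each terminal node, and conclude that $A$ and its complement are each effective unions of sets of the form $O_{i(\sigma)}\cap C_{j(\sigma)}$, hence $\Sigma^0_2$. Your argument is in fact more explicit than the paper's at the one point it glosses over---why a recursive union of $(\Sigma^0_1\cap\Pi^0_1)$-pieces stays inside $\Sigma^0_2$---which you justify via the clopenness of basic open sets in $\baire$.
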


\begin{proof}
Let $M$ be an MS-machine computing the characteristic function $\chi_{_A}$ and $R$ be the recursive set defined in
Theorem \ref{thm:MScomputable_equal_Delta2}.
Since $M$ is total, for every $\sigma\in R$, the domain of $\lambda x. \Phi^x(k')$ has to be a superset of $O_{i(\sigma)}\cap C_{j(\sigma)}$,
where $k'$ is also as defined in Theorem \ref{thm:MScomputable_equal_Delta2}.
Thus we have $x\in A$ if and only if for some $\sigma\in R$ with $x\in O_{i(\sigma)}\cap C_{j(\sigma)}$ and the natural number output
read from the configuration associated with $\sigma$ is $1$.
Consequently, $A$ is $\Sigma^0_2$. Similarly the complement of $A$ is also $\Sigma^0_2$.  So $A$ is $\Delta^0_2$ over $\baire$.
\end{proof}

This result shows that recursive sets are closer to the intuitively computable sets.
For instance, they do not include all Borel sets, in fact, not all arithmetical sets.

For subsets of natural numbers, we have a similar characterization.

\begin{thm} \label{char_on_N}
A subset $A$ of natural numbers is MS-computable if and only if $A\leq_T\emptyset'$, i.e., it is a $\Delta^0_2$-subset of natural numbers.
\end{thm}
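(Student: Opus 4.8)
The plan is to prove both implications directly, by matching MS-computations over $\baire$ on natural-number inputs with $\emptyset'$-oracle Turing computations; the clause ``i.e., it is a $\Delta^0_2$-subset of natural numbers'' is then just the limit lemma (see, e.g., \cite{Soare:1987}), since $A\leq_T\emptyset'$ if and only if $A$ is $\Delta^0_2$ as a subset of $\N$. Throughout, recall that when the Baire-space input is absent ($q=0$), a slave call with billboard index $p$ simply writes the sequence $\<\Phi(p,i):i\in\omega\>$ on the slaves' tape, and a zero-test returns the bit $1$ precisely when the sequence currently written there is the zero sequence.

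For the easy direction, suppose $A\leq_T\emptyset'$ and fix an oracle Turing machine $\Psi$ with $\Psi^{\emptyset'}=\chi_A$; this machine is automatically total. I would build an MS-machine $M$ which, on input $n\in\N$, simulates $\Psi^{\emptyset'}(n)$ in its master, calling the slaves only to answer oracle questions. When the simulation of $\Psi$ asks whether $m\in\emptyset'$, the master computes (by $s$-$m$-$n$) an index $h(m)$ of the total recursive function which on argument $i$ outputs $1$ if $\Phi_m(m)$ halts within $i$ steps and $0$ otherwise, writes $h(m)$ on the billboard, calls the slaves, and performs a zero-test. The returned bit $k$ equals $1$ exactly when that sequence is all zeros, that is, exactly when $m\notin\emptyset'$; so the master continues the simulation with the oracle answer ``yes'' if and only if $k=0$. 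Since $\Psi^{\emptyset'}(n)$ halts and makes only finitely many queries, all slave calls are total, $M$ halts on every $n$, and its output is $\chi_A(n)$. Hence $\chi_A$ is MS-computable over $\baire$, i.e., $A$ is MS-computable.

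For the other direction, suppose $M$ is an MS-machine computing the total function $\chi_A$. Because $\chi_A$ is total, for each $n\in\N$ the MS-computation of $M$ on $n$ is a finite sequence of configurations; in particular every slave call occurring in it converges, so every sequence that gets zero-tested is a total recursive sequence. I would simulate $M$ on input $n$ relative to $\emptyset'$, keeping track of the master's (coded) configuration together with the index $c$ of the current contents of the slaves' tape, exactly as in the proof of the Normal Form Theorem, where $c$ is replaced by $g(p,c)$ at each slave call. Ordinary master steps and slave calls are simulated recursively by updating the master's code and the index $c$; a zero-test of the sequence $\<\Phi(c,i):i\in\omega\>$ is resolved by asking $\emptyset'$ the $\Sigma^0_1$ question ``$\exists i\ \Phi(c,i)\!\downarrow\ \neq 0$'' and returning $k=0$ if the answer is positive, $k=1$ otherwise. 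On a halting computation the tested sequence is genuinely total, so this is exactly the bit the machine's zero-test module would produce; thus the $\emptyset'$-simulation reproduces the computation of $M$ on $n$ step by step and halts with output $\chi_A(n)$. Therefore $\chi_A\leq_T\emptyset'$, so $A\leq_T\emptyset'$.

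The only delicate point is in this last direction: one must check that the slaves contribute no computational strength beyond a single $\emptyset'$-query each. This is clear because the master is an ordinary Turing machine augmented only by the Boolean bits returned by finitely many zero-tests, and the content of the $j$-th zero-test is obtained recursively from the input and from the answers to the first $j-1$ tests; hence an $M$-computation on a numerical input is nothing more than a finite, adaptively generated list of $\emptyset'$-queries, which is precisely an $\emptyset'$-oracle computation. Combining the two directions with the limit lemma yields the theorem. (One could also deduce the forward direction from Theorem~\ref{thm:stronglycomputableisdelta02} by embedding $\N$ into $\baire$ via $n\mapsto(n,0,0,\dots)$ and verifying that the image of $A$ is then recursive over $\baire$, but the direct simulation above is shorter.)
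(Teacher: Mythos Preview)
Your proof is correct and follows essentially the same approach as the paper: in both directions you and the paper simulate one model by the other, answering zero-tests with $\emptyset'$-queries (and conversely), using the fact that on a halting MS-computation every zero-tested sequence is total so that ``is it nonzero somewhere?'' is a genuine $\Sigma^0_1$ (equivalently, its negation $\Pi^0_1$) question decidable by $\emptyset'$. The paper phrases the forward direction via the computation tree $\Gamma$ rather than the index-tracking $(n,c)$ you borrow from the Normal Form Theorem, but this is only a cosmetic difference.
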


\begin{proof}
Observe that the natural number input is directly sent to the master,
whereas the input tape for slaves are empty at the beginning of the computation.

Suppose that $A$ is MS-computable, say by the MS-machine $M$.
For any input $n$, we trace the computation tree $\Gamma$ associated with $M$.
At each branching point $\sigma$ cause by the zero-test, since the content prepared for the zero test is by
$\Phi^{\emptyset}(k(\sigma))$ where $k$ is defined in Theorem \ref{thm:MScomputable_equal_Delta2}.
Furthermore, since $\chi_{_A}$ is total, $\Phi^{\emptyset}(k(\sigma))$ is total.
Hence whether the sequence is the zero-sequence is a $\Pi^0_1$ question:
\[
\forall n\forall s(\Phi^{\emptyset}(k(\sigma),n)\downarrow[s]\rightarrow \Phi(k(\sigma),n)=0).
\]
So $\emptyset'$ can produce the answer.
Consequently, $\emptyset'$ can trace $\Gamma$ until the terminal node and read off the answer of $A(n)$.  So $A\leq_T\emptyset'$.

On the other hand, if $A\leq_T\emptyset'$ say via the oracle Turing machine $\Phi^{\emptyset'}(e)$.
The master can mimic $\Phi(e)$ until it reaches a query step asking if some number $n$ is in $\emptyset'$.
At that moment, the master employs the slaves to figure it out and read the answer from the boolean bit.
Thus, $A$ is MS-computable.
\end{proof}

\subsection{Relation with BSS models}

In \cite{Blum.Shub.ea:1989}, L.~Blum, Shub and Smale introduced a highly influential model (BSS machines) of computation over real numbers,
in fact, over arbitrary rings.
BSS machines treat a real number or an element in Baire space as a single mathematical entity,
and only focus on functions which are mostly common in scientific computing, such as polynomials or rational functions.
To compare our model with BSS machines, we have to make some reasonable modifications to both models.
First of all, BSS machines can use arbitrary real numbers as parameters, in particular, there are uncountably many BSS machines.
We have to restrict BSS model so that it does not use real parameters.

Now let us discuss computability on (ordered) rings with three basic ring operations $+$, $-$ and $\times$.
We modify MS-machines as follows:
\begin{enumerate}
\item We assume that we have only three slaves to compute the basic ring operations respectively.
Thus by calling them finitely many times, we can compute any polynomial in $\mathbb{Z}[\vec{x}]$.
\item We assume the zero-test module can tell whether a nonzero element is positive or negative too.
\end{enumerate}

With these modification, we have
\begin{thm}
  Over a ring $R$, the modified MS-machines compute exactly the same class of functions as the (parameter free) BSS-machines.
\end{thm}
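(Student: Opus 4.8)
The plan is to prove both inclusions by direct simulation, exploiting that in the modified model the slave part has become essentially a BSS arithmetic unit --- three \emph{total} ring operations $+,-,\times$ together with a sign-enabled zero-test --- while the master part is still an ordinary Turing machine over $\N$. Since we are comparing two models with different input/output conventions, I would first fix, once and for all, a standard way of presenting ring-valued inputs and outputs and of reading a natural number as the corresponding ring element; this is legitimate because an ordered ring $R$ contains a canonical copy of $\mathbb{Z}$ built from $0$ and $1$, with ring operations and order agreeing on this copy.

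First I would show that every parameter-free BSS machine $\mathcal{B}$ over $R$ is simulated by a modified MS-machine $M$. The finite directed computation graph of $\mathcal{B}$ and its current node are finite data, so the master can store and update them. The BSS state vector in $R^\infty$ is kept on the slave working tapes, with its coordinates addressed by natural numbers that the master writes on the billboard. Each node of $\mathcal{B}$ is then simulated by a bounded number of master steps: a computation node applying a polynomial map in $\mathbb{Z}[\vec{x}]$ is executed by a finite sequence of slave calls to $+,-,\times$, with the integer constants produced from $0$ and $1$ by repeated addition (the master knows how many additions to request); a branch node becomes one sign-enabled zero-test whose answer the master reads off the bit $k$; and the shift, input and output nodes are plain tape manipulations orchestrated by the master. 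Since $\mathcal{B}$ diverges precisely by looping forever in its graph, while the simulating MS-machine $M$ then diverges by the master looping forever, the two machines compute the same function.

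For the converse I would simulate a modified MS-machine $M$ by a parameter-free BSS machine $\mathcal{B}$ over $R$. Here I would use the standard fact, already present in \cite{Blum.Shub.ea:1989}, that a parameter-free BSS machine over $\mathbb{Z}$ --- hence over $R$, since $\mathbb{Z}\subseteq R$ and all ring operations and the order agree on the integer part --- can simulate an arbitrary Turing machine: a Turing configuration is G\"odel-coded into a natural number and thus into a ring element, and the transition function together with the necessary decoding (digit extraction, division with remainder, comparisons) is carried out by integer-coefficient polynomial maps, sign tests and loops in the BSS graph. I would then split the BSS state vector into a ``master block'', holding the coded configuration of $M$'s master together with the billboard and the bit $k$, and a ``slave block'', holding the ring elements currently on the slave tapes. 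At each step $\mathcal{B}$ inspects the master block and acts accordingly: a standard quadruple of $M$ becomes a Turing step on the master block; a slave-calling quadruple becomes the appropriate computation node on the slave block; a zero-test quadruple becomes a sign branch on the slave block whose outcome is written back into the master block. Divergence of $M$ (the master looping) again matches divergence of $\mathcal{B}$ (looping in its graph), so $\mathcal{B}$ computes the same function as $M$.

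The main obstacle, I expect, is the bookkeeping at the interface between the two kinds of data rather than any single hard argument. In the MS-to-BSS direction one must carry the master's unbounded $\N$-computation faithfully inside the BSS formalism using only integer-coefficient polynomial maps, sign tests and graph loops --- this is exactly where the embedding of Turing computation into BSS over a ring is invoked --- and one must keep the natural-number ``master block'' and the ring-element ``slave block'' of the state vector from interfering. Symmetrically, in the BSS-to-MS direction the master's natural-number addresses must correctly name the slave-tape cells that store the BSS state vector. Once these representation conventions are pinned down, each elementary step of one model unwinds into a bounded block of steps of the other, and the matching of halting and of the two modes of divergence is routine.
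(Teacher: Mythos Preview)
Your proposal is correct, but it takes a different route from the paper. The paper does not simulate the two machine models directly against each other; instead it introduces an intermediate class --- the \emph{partial recursive functions over $R$}, defined by functional schemes (zero, successor, projections, the ring operations $+_R,-_R,\times_R$, the sign function, closed under composition, primitive recursion and $\mu$ with respect to $\N$) --- and shows that each machine model coincides with this class. For modified MS-machines this is done by rerunning the Normal Form Theorem argument in the ring setting; for parameter-free BSS machines it is simply cited from \cite{Blum.Shub.ea:1989}, Section~7. Your direct-simulation argument is perfectly sound: the BSS$\to$MS direction is the obvious node-by-node unwinding, and your MS$\to$BSS direction rests on exactly the ingredient the paper also implicitly uses, namely that BSS over an ordered ring can simulate an ordinary Turing machine on the integer part. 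The paper's detour through schemes buys modularity and an explicit recursion-theoretic characterisation of the class (which fits the paper's overall programme of matching schemes with machines), while your approach is shorter and self-contained but leaves that characterisation unstated.
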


\begin{proof} (Sketch)
Define the class of {\em partial recursive functions} over an ordered ring $R$ as the smallest class $\mathcal{C}$ of functions on $\N^p\times R^q$ satisfying the following conditions:
\begin{enumerate} [(1)]
\item $\mathcal{C}$ contains the following basic functions:
\begin{enumerate} [(a)]
\item Zero function $Z:\N\to \N$ with $Z(n)=0$;
 \item successor function $S: \N\to \N$ with $S(n)=n+1$; and
 \item for natural numbers $p, q$ and $i$ with $p+q\geq 1$ and $1 \leq i\leq p+q$ the projection function
 \[
 \pi^{p+q}_i (n_1,\dots,n_p;x_1,\dots, x_q)
=\left\{
      \begin{array}{ll}
       n_i, & \hbox{if $i\leq p$;} \\
       x_{i-p}, & \hbox{if $i>p$.}
      \end{array}
 \right.
\]
\item The primitive functions in ring $R$, namely $+_R$, $-_R$ and $\times_R$;
and

\item The sign function $\sgn:R\to \N$ defined by
\[
\sgn(x)=\left\{
  \begin{array}{ll}
    1, & \hbox{if $x>0_R$;} \\
    0, & \hbox{if $x=0_R$;} \\
    2, & \hbox{if $x<0_R$.}
  \end{array}
\right.
\]
\end{enumerate}
\item $\mathcal{C}$ is closed under
\begin{enumerate} [(a)]
\item composition, provided the types are matched;
\item primitive recursion with respect to $\N$; and
\item $\mu$-operator with respect to $\N$.
\end{enumerate}
\end{enumerate}

By mimic the proof of the Normal Form Theorem, one can show that the modified MS-machines compute exactly the partial recursive functions over $R$.
On the other hand, by Theorem on page 33 in \cite{Blum.Shub.ea:1989} section 7, BSS machines compute exactly the same class.
\end{proof}

On the other hand, let us look at computability on the Baire space.
First of all, we delete from BSS model the function of taking additive inverse, as we focus on functions on $\baire$, instead of on rings.
Since BSS machines as defined originally does include functions beyond polynomials, for example,
certain sets\footnote{There are examples over real numbers (see Brattka \cite{Brattka:2003}); but can be adapted over Baire spaces.} involving exponential function is not computable in BSS sense.
It is necessary to add functions that we are interested in, for instance, exponential function.
Obviously, we do not want to add them ``in the hardware'' i.e., merely allow those functions to
appear in the computing nodes of BSS machines, otherwise, we merely move up one level of primitive recursive hierarchy
and we will miss, say, tower of exponential functions.
Instead, we should add them through universal functions, so that the machine can call them dynamically.
Thus, we add the universal TTE-computable function $\Phi(e;x)$ in the BSS model, so that it can call any TTE-computable function.

\begin{thm} \label{thm:BSSandMS}
  Over the Baire space $\baire$, the modified BSS-machines compute exactly the same class of functions as the MS-machines.
\end{thm}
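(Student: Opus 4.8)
The plan is to prove Theorem~\ref{thm:BSSandMS} by establishing both inclusions, exactly paralleling the strategy already used for the ring case and for the main characterization Theorem~\ref{main_X}. First I would pin down precisely what the ``modified BSS-machine over $\baire$'' computes: following the paper's own recipe, it is a BSS-type flowchart whose computation nodes apply $+$ (the pointwise lift), whose branch nodes use the zero-test $\chi$ on $\baire$, and which additionally has access to the universal TTE-function $\Phi(e;x)$ at computation nodes, with $e$ supplied by the natural-number registers. The cleanest route is to insert an intermediate object: the class of partial recursive functions over $\baire$ of Definition~\ref{pcX}. By Theorem~\ref{main_X} this class already equals the MS-computable functions, so it suffices to show that the modified BSS-machines over $\baire$ compute exactly the partial recursive functions over $\baire$.

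For the inclusion ``partial recursive over $\baire$ $\subseteq$ modified BSS over $\baire$'', I would check the closure clauses of Definition~\ref{pcX} one at a time, just as in the proof of the ring version. The basic functions $Z$, $S$, $\pi^{p+q}_i$ are trivially BSS-computable; the universal function $\Phi(e;x)$ is now a primitive of the modified model by fiat; and $\chi$, the characteristic function of $\{0_{\baire}\}$, is obtained from a single zero-test branch node. Closure under composition is built into flowchart concatenation; closure under primitive recursion and the $\mu$-operator \emph{with respect to $\N$} is handled by the standard BSS loop construction over the integer registers (the BSS machine already simulates ordinary Turing computation on its $\N$-part), so nothing new is needed there. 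This direction is essentially bookkeeping.

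The substantive direction is ``modified BSS over $\baire$ $\subseteq$ partial recursive over $\baire$'', and this is where I expect the real work, mirroring the proof of the Normal Form Theorem (Theorem~\ref{KleeneNFT}). Given a modified BSS-machine, I would arithmetize its configurations: the discrete state together with the contents of the finitely many $\N$-registers is coded by a natural number $n$, and the finitely many $\baire$-registers are coded by a single TTE-component $c$ via $\Phi^x(c,\cdot)$, exactly the device used in the Normal Form Theorem. The key points are: (i) a computation node applying $+$ or $\Phi(e;-)$ changes $c$ to $g(p,c)$ for a standard primitive recursive $g$, by the $s$-$m$-$n$ theorem and Lemma~\ref{lift} --- this is the analogue of Lemma~\ref{restricted_slave}; (ii) a branch node evaluating $\chi$ on a $\baire$-register is handled by the zero-test exactly as in case (3) of the Normal Form proof, and to keep the transition function \emph{total} one must again push the partiality of $\Phi$ to the end via the restricted slaves / $\#$-extension machinery of Lemmas~\ref{restricted_slave} and~\ref{restricted_test}. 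With the one-step transition function shown primitive recursive over $\baire$ by definition-by-cases (Lemma~\ref{cases}), one forms the $T$-predicate as the primitive recursive ``$z$ codes a halting run'' statement, applies $\mu z$, and reads the output with a partial TTE function $U$ --- giving $F(x)=U(\mu z\,T(e,x,z);x)$, hence $F$ is partial recursive over $\baire$.

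The main obstacle, and the place I would spend the most care, is the branch nodes: a BSS branch on $\chi$ may be applied to a register holding $\Phi^x(c)$ where $\Phi^x(c)$ diverges, and naively this makes the transition function partial, destroying primitive recursiveness of $T$. The fix is precisely the restricted-slave construction of subsection~\ref{NormalforBaire}: replace every $\Phi$-call by its $\#$-restricted version $\Phi_\#$ (total on $\bairehex$), replace $\chi$ by $\chi_\#$ of Lemma~\ref{restricted_test}, so that the transition stays total and well-defined on codes, and the single genuine $\Pi^0_2$ obstruction (``did every register computation actually converge'') is deferred to the final output-reading step $U$, which is allowed to be partial. Once this is in place the argument is a line-by-line transcription of the Normal Form proof, and combining it with Theorem~\ref{main_X} and the easy forward inclusion closes the loop.
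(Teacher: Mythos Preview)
Your proposal is correct, and the forward inclusion (partial recursive over $\baire$ $\subseteq$ modified BSS) matches the paper's approach exactly. The backward inclusion, however, is handled quite differently. The paper does \emph{not} re-run a Normal Form arithmetization for BSS machines; instead it observes in one line that MS-machines can directly simulate any flowchart computation: the master, being an ordinary Turing machine, follows the finite flowchart graph and manages the $\N$-registers; each computation node invoking $\Phi(e;-)$ becomes a slave call with $e$ on the billboard; each branch node becomes a zero-test command. This gives ``modified BSS $\subseteq$ MS'' immediately, and then Theorem~\ref{main_X} closes the equality.

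Your route---coding BSS configurations as $(n,c;x)$, pushing partiality through $\Phi_\#$ and $\chi_\#$, building a $T$-predicate, and extracting the output with $U$---is perfectly valid and would yield a self-contained Normal Form theorem for the modified BSS model as a byproduct. But it duplicates the entire machinery of subsection~\ref{NormalforBaire}, which the paper has already spent on MS-machines; having done that once, the paper simply cashes it in via the trivial simulation. In short: your argument buys an independent normal form for BSS at the cost of repeating several pages of work, while the paper's argument buys brevity by reducing to the simulation that MS-machines were designed to make obvious.
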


\begin{proof} (Sketch) By slightly modifying the proof of Theorem on page 33 in \cite{Blum.Shub.ea:1989} section 7,
modified BSS-machines can compute all partial recursive functions over $\baire$.
On the other hand, Turing machines can mimic all computations defined by flow-charts,
thus MS-machines can compute all functions that modified BSS-machine can compute.
The theorem then follows from Theorem \ref{main_X}.
\end{proof}

Theorem \ref{thm:BSSandMS} tells us (informally) that over $\baire$, MS-computation model is the smallest one that includes both BSS and TTE computation.
The reasons are: MS-machine is certainly more powerful than BSS and TTE.
On the other hand, if a computation model includes both BSS and TTE,
then it would be more powerful than the modified BSS machine. By Theorem \ref{thm:BSSandMS}, it is more powerful than MS-machine.

\section{Computability Theory on Real Numbers}

\subsection{Formalize Computability over Real Numbers}
We now discuss the computability over real numbers $\R$. Although Baire space is often viewed as a representation of $\R$,
computationally and topologically, $\baire$ is quite different from ``the real numbers'' that ``working mathematicians'',
like numerical analysts, have been working on.
We will make our meaning precise in the discussion below.

First of all, by well-known effective codings, we can identify the sets of integers $\Z$
and rational numbers $\Q$ with the set of natural numbers $\N$, respectively.
In particular, we will not study negative numbers explicitly. Again by effective coding,
we can identify the arithmetical operations of the field $\Q$, the ordering on $\Q$ and functions like absolute value function,
as primitive recursive functions over $\N$.
Through this coding, infinite rational sequences will be identified with infinite sequences of natural numbers.
We use $\Q^{\omega}$ to denote the set of all infinite rational sequences.
It is worth noting that the coding is not a homeomorphism between the Baire space and $\Q^{\omega}$ equipped with the topology induced by the usual metric on $\Q$.  Therefore, we will use $\Q^{\omega}$ instead of $\baire$.

We fix the following presentation of real numbers:

\begin{defi}
\begin{enumerate} [(1)]
\item A {\em fast converging Cauchy sequence} is a sequence of rational numbers $\<r_i:i\in \omega\>$ such that for all $m<n$
    $|r_m-r_n|<2^{-m-1}$.  Remark: In the remaining of the paper, since we will focus only on fast converging Cauchy sequences, whenever we say Cauchy sequences we mean fast converging ones.
\item Define a relation $\sim$ between Cauchy sequences by $\<r_i\>\sim \<s_i\>$ if for any $n$ there is some $m>n$ such that $|r_m-s_m|<2^{-n}$ (hence $\lim_i r_i=\lim_i s_i$).
\item Observe that $\sim$ defined in (2) is an equivalence relation. We identify the $\sim$-equivalence classes as real numbers.
The set of real numbers is denoted by $\R$.
\end{enumerate}
\end{defi}

In this section we use $i,j,k$ to denote natural numbers, $\alpha,\beta$ to denote infinite Cauchy sequences,
and $x,y$ to denote real numbers.

A (finite) rational sequence $\sigma=\<q_0, \dots, q_{n-1}\>\in\Q^{<\omega}$ is said to be \emph{Cauchy} if for every $i<j< n$ we have $\left|q_i-q_j\right|<2^{-i-1}$.
Given $\sigma=\<q_0, \dots, q_{n-1}\>,\tau=\<r_0,\dots, r_{m-1}\>\in\Q^{<\omega}$,
we say that $\sigma$ and $\tau$ are \emph{Cauchy compatible} or simply \emph{C-compatible} if $|q_{n-1}-r_{m-1}|<2^{-n}+2^{-m}$.
The point of C-compatibles strings is that they can be extended to $\sim$-equivalent infinite Cauchy sequences; whereas the non C-compatible ones cannot.
In other words, if $\alpha\supset \sigma$ and $\beta\supset\tau$ and $\alpha,\beta\in\Q^\omega$ and $\sigma$ and $\tau$ are not C-compatible, then $\lim \alpha\neq \lim \beta$;
equivalently, if $\lim \alpha=\lim \beta$ then for every $i,j$ we have that $\alpha\rs{i}$ and $\beta\rs{j}$ are C-compatible.

We say that a function $F:\Q^{\omega}\to \Q^{\omega}$ is {\em Cauchy preserving}\footnote{This and other notions in this section are studied
extensively by the school of computable analysis, see Chapter 3 in \cite{Weihrauch:2000}. That is why we use the phrase ``TTE-computable''.
However, since we only study the computability on $\R$, we did not introduce the general naming systems, admissibility etc.
Our approach and theirs are essentially similar, but may be different in some details.} if $F$ is only defined on Cauchy sequences and
if two Cauchy sequences $\alpha$ and $\beta$ are $\sim$-equivalent and $F(\alpha)\downarrow$ then $F(\alpha)$ is a Cauchy sequence,
$F(\beta)\downarrow$ is also a Cauchy sequence and $F(\alpha)\sim F(\beta)$.
Clearly, each Cauchy preserving function $F:\Q^{\omega}\to \Q^{\omega}$ naturally induces a function $\Ftilde$ from $\R$ to $\R$.

\begin{defi} \label{dfn:TTE_R}
We say that a function $\Ftilde: \R\to \R$ is {\em TTE-computable} over $\R$,
if there exists a TTE-computable function $F:\Q^{\omega}\to \Q^{\omega}$
which is Cauchy preserving and $\Ftilde$ is induced by $F$.
\end{defi}

This definition suggests the two-step approach towards computability on $\R$, the first is to have an algorithm working on a representative
$\alpha$, and the second is to check that this algorithm is Cauchy preserving.
The second step is the difference between computability over $\baire$ and over $\R$.
This is more pertinent when we talk about machines.
Over $\baire$, the input of the TTE-computable functions and the object written on the input tape of a machine are the same;
whereas over $\R$, the input of the functions are equivalence classes which cannot be written on the input tape.
A typical oracle Turing machine is sensitive to the digits on the oracle tape, hence will not be Cauchy preserving at all.
The main issue is to ``fine-tune'' the machine so that its operations are independent of the representation of the equivalence class.
Once the fine-tuning is done, the rest is similar to Section \ref{sec:Baire}.

We say that an oracle Turing machine $\widehat{M}$ is {\em fine-tuned} if it satisfies the following two conditions:
(1) If $\alpha\in \Q^{\omega}$ is not Cauchy, then $\widehat{M}^{\alpha}$ is undefined;
(2) if $\alpha$ and $\beta$ are $\sim$-equivalent Cauchy sequences,
then $\widehat{M}^{\alpha}$ is defined if and only if $\widehat{M}^{\beta}$ is defined and
the outputs $\widehat{M}^{\alpha}$ and $\widehat{M}^{\beta}$ are also $\sim$-equivalent Cauchy sequences.

We describe an effective procedure which transfers an oracle Turing machine $M$ into a fine-tuned one.
By this transfer, the universal oracle Turing machine will become a fine-tuned universal oracle Turing machine,
which naturally induces a universal TTE-function over $\R$.

The plan is to associate a ``canonical'' object to a real number $x$, which is invariant under its representations.
One candidate for the canonical object is ``the'' binary expansion of $x$.
Unfortunately, it is not effective to extract a unique binary expansion for $x$, for example, when $x$ is a dyadic rational,
there are two such expansions.
Recall that a dyadic rational is a rational number whose denominator is a power of $2$.
To overcome this problem, for a given Cauchy sequence $\alpha$, we effectively produce a binary tree $T_{\alpha}$
whose infinite paths are exactly the binary expansion(s) of $x=\lim \alpha$;
namely, if $x$ is not a dyadic rational, then $T_{\alpha}$ has a unique infinite path which is the binary expansion of $x$;
if $x$ is dyadic then $T_{\alpha}$ has exactly two paths which are the two binary expansions of $x$.
If $\alpha$ and $\beta$ are Cauchy equivalent, then $T_{\alpha}$ and $T_{\beta}$ may not be the same,
but they will have the same set of infinite paths.

The algorithm that transfer a Cauchy sequence $\alpha$ to $T_{\alpha}$ goes as follows.
For each natural number $n$, let $D_n$ denote the dyadic rational number with denominator $2^n$,
i.e. $D_n=\{\frac{m}{2^n}: m\in \Z\}$ and $D=\bigcup_{n\in \N} D_n$ be the set of all dyadic rational numbers.
For natural number $n$ and integer $m$, let
\[
J_n^m=\left[\frac{m}{2^n}, \frac{m+1}{2^n}\right]
\]
which is a closed interval of length $\frac{1}{2^n}$.
Given a Cauchy sequence $\alpha$, let
\[
I_n(\alpha)=\left[\alpha(n+4)-\frac{1}{2^{n+2}}, \alpha(n+4)+\frac{1}{2^{n+2}}\right],
\]
so that $I_n(\alpha)$ has the following properties:
\begin{enumerate}
\item [(1)] $I_n(\alpha)$ is a closed interval of length $\frac{1}{2^{n+1}}$.
\item [(2)] Since $\alpha$ is Cauchy, $x=\lim \alpha$ is in $I_n(\alpha)$.
\item [(3)] $I_{n+1}(\alpha)\subseteq I_n(\alpha)$.
\end{enumerate}

We now define the binary tree $T_{\alpha}$ in terms of $I_n:=I_n(\alpha)$ and $J^m_n$.
Let $\emptyset$ be the root of $T_{\alpha}$.  The recursive construction of $T_{\alpha}$ and the labelling process are as follows.
Since $I_0$ is of length $\frac{1}{2}$, it can intersect at most two intervals of the form $J_0^m$ for some $m$.
If $I_0$ intersects two such intervals, say $J_0^{m_0}$ and $J_0^{m_0+1}$ for some $m_0\in \Z$,
then the root has two children labelled from left to right as $d^0_L=\frac{m_0}{2^0}$ and $d^0_R=\frac{m_0+1}{2^0}$ respectively;
if $I_0$ intersects only one such interval, say $J_0^{m_0}$ for some $m_0\in \Z$,
then the root has only one child labelled $d^0=\frac{m_0}{2^0}$.

Suppose that we have defined the tree $T_{\alpha}$ up to level $\leq \ell$.
First check if the finite sequence $\alpha\rs (\ell+4)$ is Cauchy, if it is not, then stop the construction;
otherwise, proceed as follows.  If $T_{\alpha}$ has a single node of length $\ell$,
then proceed as the base case, except replacing $0$ by $\ell+1$.
If $T_{\alpha}$ has two nodes of length $\ell$, say $\rho_L$ and $\rho_R$,
labelled by dyadic rationals $d^{\ell}_L, d^{\ell}_R\in D_{\ell}$ respectively,
where $d^{\ell}_R=d^{\ell}_L+\frac{1}{2^{\ell}}$.
Find the interval $I_{\ell+1}$.  Since it has length $\frac{1}{2^{\ell+2}}$, it can intersect at most two intervals of the form $J_{\ell+1}^m$ for some $m$.

Case 1. $I_{\ell+1}$ intersects two such $J$-intervals.  Since the $I_{\ell+1}\subseteq I_{\ell}$, the $J$-intervals have to be $[d^{\ell}_L+\frac{1}{2^{\ell+1}}, d^{\ell}_R]$ and $[d^{\ell}_R, d^{\ell}_R+\frac{1}{2^{\ell+1}}]$.
We extend $\rho_L$ to a single node labelled $d^{\ell}_L+\frac{1}{2^{\ell+1}}$ and $\rho_R$ to a single node labelled $d^{\ell}_R$.

Case 2. $I_{\ell+1}$ intersects only one such $J$-interval.  Again since the $I$-intervals are nested,
the $J$-interval has to be either $[d^{\ell}_L+\frac{1}{2^{\ell+1}}, d^{\ell}_R]$ or $[d^{\ell}_R, d^{\ell}_R+\frac{1}{2^{\ell+1}}]$.
If it is former, extend $\rho_L$ to a single node labelled by $d^{\ell}_L+\frac{1}{2^{\ell+1}}$ and declare $\rho_R$ a dead end;
if it is latter, extend $\rho_R$ to a single node labelled by $d^{\ell}_R$ and declare $\rho_L$ a dead end.
That finished the construction of the tree $T_{\alpha}$.

By construction, $T_{\alpha}$ has the following properties:
\begin{enumerate}
\item [(1)] $T_{\alpha}$ is an infinite binary tree if and only if $\alpha$ is Cauchy.
\item [(2)] Assuming that $\alpha$ is Cauchy and $x=\lim\alpha\not\in D$,
then $T_{\alpha}$ has a unique path $\delta=\delta(T_{\alpha})$ such that for each $\ell$,
$\delta(\ell)$ is the largest dyadic in $D_{\ell}$ which is less than $x$.
\item [(3)] Assuming that $\alpha$ is Cauchy and $x=\lim\alpha\in D_n$ for the least such $n$,
then for all $\ell\geq n$, $\delta_R(T_{\alpha})(\ell)=x$ and $\delta_L(T_{\alpha})(\ell)=x-\frac{1}{2^{\ell}}$.
\item [(4)] The function from $\alpha\rs(\ell+4)\mapsto T_{\alpha}\rs \ell$ is recursive.
\end{enumerate}
It follows from (2) and (3) that if two Cauchy sequences $\alpha$ and $\beta$ are equivalent then $T_{\alpha}$ and $T_{\beta}$ have the same paths.

With the help of $T_{\alpha}$, we can describe an effective procedure which transfers an oracle Turing machine $M$ into a fine-tuned $\widehat{M}$.
The basic idea is that we monitor the (at most two) infinite paths $\delta_L$ and $\delta_R$ of $T_{\alpha}$,
$\widehat{M}$ simulates $M^{\delta_L}$ and $M^{\delta_R}$ and we accept the value only when the results are C-compatible.
The details are as follows:

Given $M$, $n$ and $\alpha$, the fine-tuned machine $\widehat{M}$ first generates the tree $T_{\alpha}$ (by property (4) above,
this can be done), and simulates $M^{\delta_L}(n)$ and $M^{\delta_R}(n)$,
where $\delta_L$ and $\delta_R$ are the paths through $T_{\alpha}$ (they might be equal) until one of the following happens:
  \begin{enumerate}
    \item [(a)] There is a natural number $s$, for all $i\leq n$, $M^{\delta_L\rs{s}}(i)\downarrow=r_i$ and $\<r_i:i\leq n\>$ is a finite Cauchy sequence and $\delta_L\rs{s}$ is the only node of length $s$ that still have extensions on $T_{\alpha}$.  Set the output to be $r_n$.
    \item [(b)] There is a natural number $s$, for all $i\leq n$, $M^{\delta_L\rs{s}}(i)\downarrow=r_i$,
    $M^{\delta_R\rs{s}}(i)\downarrow=t_i$ and $\<r_i:i\leq n\>, \<t_i:i\leq n\>$ are finite Cauchy sequences,
    $|r_i-t_i|\leq \frac{1}{2^i}$ and both $\delta_L\rs{s}$ and $\delta_R\rs{s}$ still have extensions on $T_{\alpha}$.
Set the output to be $\frac{r_n+t_n}{2}$.
  \end{enumerate}
In case no such $s$ exists, the procedure produces no output for this set of input $M, n,\alpha$.

Observe that if $\widehat{M}^{\alpha}(n)\downarrow=a_n$ for every $n$, then $\<a_n: n\in \N\>$ is a Cauchy sequence.
When $M^{\alpha}=\beta$ is a Cauchy sequence with $y=\lim \beta$, $|a_n-\beta(n)|<\frac{1}{2^n}$.  Let $b_n=a_{n+2}$ we will have
\[
|b_n-y|=|a_{n+2}-y|\leq |a_{n+2}-\beta(n+2)|+|\beta(n+2)-y|\leq \frac{1}{2^{n+2}}+\frac{1}{2^{n+2}}<\frac{1}{2^n}.
\]
Consequently, $\widehat{M}^{\alpha}$ and $M^{\alpha}$ are Cauchy equivalent.  Finally the procedure of getting $\widehat{M}$ from $M$ is effective.

We summarize the discussion as follows:

\begin{lem} \label{finetuningR}
  Let $\<\lambda \alpha. \Phi^{\alpha}(e): e\in \omega\>$ be an effective list of all TTE computable functions over $\Q^{\omega}$.
  Then there is a total recursive function $g:\N\to \N$ such that $\lambda \alpha. \Phi^{\alpha}(g(e))$ is Cauchy preserving.
  Furthermore, if $\lambda \alpha. \Phi^{\alpha}(e)$ is Cauchy preserving itself, then $\lambda \alpha. \Phi^{\alpha}(g(e))$ and $\lambda \alpha. \Phi^{\alpha}(e)$ induce that same function from $\R$ to $\R$.
\end{lem}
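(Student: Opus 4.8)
The plan is to read the lemma off from the fine-tuning construction just performed, and then to spend the effort on verifying the two asserted properties.

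\textbf{Extracting $g$.} The transformation $M\mapsto\widehat M$ described immediately before the statement is uniformly effective in an index of $M$: from the oracle $\alpha$ one generates $T_{\alpha}$ level by level (possible by property~(4) of $T_{\alpha}$), reads off the at-most-two infinite branches $\delta_L,\delta_R$, runs the simulations $M^{\delta_L}(n),M^{\delta_R}(n)$, and performs the bookkeeping for the halting conditions (a) and (b). Since every step is algorithmic in an index of $M$, applying the $s$-$m$-$n$ theorem to the effective list $\langle\lambda\alpha.\Phi^{\alpha}(e)\rangle$ produces a total recursive $g$ with $\lambda\alpha.\Phi^{\alpha}(g(e))=\widehat{\Phi_e}$, writing $\Phi_e:=\lambda\alpha.\Phi^{\alpha}(e)$. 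Everything then reduces to checking the two claimed properties of $\widehat{\Phi_e}$.

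\textbf{$\widehat{\Phi_e}$ is Cauchy preserving.} If $\alpha$ is not Cauchy, then $T_{\alpha}$ is finite (property~(1)), so for all large $s$ there is no node of length $s$ at all; hence neither halting condition is ever met and $\widehat{\Phi_e}^{\alpha}(n)\uparrow$ for every $n$, i.e.\ $\widehat{\Phi_e}^{\alpha}$ is undefined. If $\alpha\sim\beta$ are Cauchy with $x=\lim\alpha=\lim\beta$, then by properties~(2) and~(3) the infinite branches of $T_{\alpha}$ and of $T_{\beta}$ depend only on $x$, hence coincide; therefore the simulations $\Phi_e^{\delta_L},\Phi_e^{\delta_R}$ are the same for $\alpha$ and $\beta$, and the halting conditions (a),(b) — which, beyond these simulations, refer only to which nodes are prefixes of the (now $x$-determined) infinite branches — behave identically. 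Thus $\widehat{\Phi_e}^{\alpha}$ and $\widehat{\Phi_e}^{\beta}$ are defined for exactly the same arguments with $\sim$-equivalent values, and whenever $\widehat{\Phi_e}^{\alpha}$ is defined its output is a (fast) Cauchy sequence by the observation recorded just before the statement. This is Cauchy preservation.

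\textbf{The ``furthermore''.} Suppose $F:=\Phi_e$ is already Cauchy preserving; write $\widehat F=\widehat{\Phi_e}$. Fix a Cauchy $\alpha$ with $x=\lim\alpha$. Re-indexing an infinite branch of $T_{\alpha}$ by one level turns it into a \emph{fast}-converging Cauchy sequence with limit $x$ — a small but genuine adjustment, since the raw level-$\ell$ labels only satisfy $|d^{m}-d^{n}|<2^{-m}$ — hence into a sequence $\sim$-equivalent to $\alpha$; call these $\hat\delta_L,\hat\delta_R$, and understand the monitoring as feeding $F$ the branches in this re-indexed form. By Cauchy preservation of $F$ we have $F^{\hat\delta_L}\!\downarrow\iff F^{\alpha}\!\downarrow$, and then $F^{\hat\delta_L}\sim F^{\alpha}$ (likewise for $\hat\delta_R$). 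Hence, if $F^{\alpha}\downarrow$, then for every $n$ both simulations eventually converge to finite Cauchy strings that at coordinate $i$ are at most $2^{-i}$ apart (two fast Cauchy sequences with a common limit are $\le 2^{-i}$ apart at coordinate $i$), so one of the halting conditions (a),(b) eventually fires (condition~(b) while the branch is split, condition~(a) once it is unique), giving $\widehat F^{\alpha}\downarrow$; by the preceding observation $\widehat F^{\alpha}\sim F^{\alpha}$. Conversely, if $\widehat F^{\alpha}\downarrow$ then one of $F^{\hat\delta_L},F^{\hat\delta_R}$ converged, whence $F^{\alpha}\downarrow$. So $F^{\alpha}$ and $\widehat F^{\alpha}$ are defined on the same Cauchy sequences and have the same limit there, i.e.\ they induce the same partial function $\R\to\R$.

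\textbf{Main obstacle.} The two facts one would really like to quote (effectiveness of $M\mapsto\widehat M$, and the bound $|a_n-\beta(n)|<2^{-n}$) are already in hand, so the work is entirely in the bookkeeping above: verifying that conditions (a)/(b) are insensitive to the choice of representative once the infinite branches of $T_{\alpha}$ are known to be $x$-determined, managing the dyadic-versus-non-dyadic split, and handling the minor but necessary re-indexing that converts a branch of $T_{\alpha}$ into a fast-converging oracle before it is handed to $M$.
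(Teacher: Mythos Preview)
Your proposal is correct and follows the paper's own approach exactly: the lemma is stated as a summary of the fine-tuning construction immediately preceding it, and you read off $g$ from that construction via $s$-$m$-$n$ and then verify the two claimed properties. You are in fact more careful than the paper on two points it leaves implicit---that the output of $\widehat M^{\alpha}$ depends only on the infinite branches of $T_{\alpha}$ (hence only on $x$), and that a raw branch of $T_{\alpha}$ must be shifted by one index to become a \emph{fast} Cauchy sequence before Cauchy preservation of $M$ can be invoked.
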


We verify the expected fact\footnote{\label{footnote6}Various versions of Lemmas \ref{lem:TTE_continuous} and \ref{lem:finding_index_for nonzero} are well-known for computable analysts.
For example, Lemma \ref{lem:TTE_continuous} may follow from Theorem 4.3.1 in Weihrauch \cite{Weihrauch:2000} and passing through effective quotient topology.
For the sake of completeness and correctness under our setting, we include outlines of the proofs.} for TTE-computable functions with respect to the usual topology:

\begin{lem} \label{lem:TTE_continuous}
  Any TTE-computable function $f$ over $\R$ is continuous.  To be more precise,
suppose that $f:\R\to\R$ is computed by the fine-tuned MS-machine $M$, and $f(x_0)\downarrow=y_0$, then for any $n>0$,
  there is an $m>0$ such that for any Cauchy sequences $\alpha$ with $\lim \alpha=x_0$ and for any Cauchy sequence $\beta$ with $\lim\beta=x$, if $M^{\beta}\downarrow$ and $|x_0-x|<\frac{1}{m}$, then $|M^{\beta}-y_0|<\frac{1}{n}$.
\end{lem}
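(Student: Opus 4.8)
The plan is to exploit the compactness of the binary tree $T_\alpha$ and the finitary nature of halting computations. Fix $x_0$, a Cauchy sequence $\alpha_0$ with $\lim\alpha_0 = x_0$, and suppose $M^{\alpha_0}\downarrow = y_0$ as a Cauchy sequence. By the construction of $\widehat{M}$, computing the $k$-th output digit $b_k = a_{k+2}$ succeeds via one of the stopping conditions (a) or (b), each of which mentions a single stage parameter $s$ and only finitely many oracle queries $M^{\delta_L\rs s}(i)$, $M^{\delta_R\rs s}(i)$ for $i\le k+2$. So there is a least stage $s_0$ at which $\widehat M^{\alpha_0}$ has committed to $b_0 = a_2$; this commitment depends only on $T_{\alpha_0}\rs s_0$ and on the finitely many converging subcomputations that used oracle initial segments of length $\le s_0$. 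The first step is to isolate this finite amount of data: a level $\ell_0$ of the tree and an integer $N_0$ bounding all query lengths used by stage $s_0$.

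Next I would use property (2)/(3) and (4) of $T_\alpha$ to turn tree agreement into a statement about real numbers. The nodes of $T_{\alpha_0}$ at level $\ell_0$ are dyadic rationals, and by property (2)/(3) the surviving node(s) are determined by which dyadic subinterval of length $2^{-\ell_0}$ contains $x_0$; this is governed by $I_{\ell_0}(\alpha_0) = [\alpha_0(\ell_0+4) - 2^{-\ell_0-2}, \alpha_0(\ell_0+4)+2^{-\ell_0-2}]$. I would choose $m$ large enough that if $|x_0 - x| < 1/m$ then for every Cauchy $\beta$ with $\lim\beta = x$ the intervals $I_j(\beta)$ for all relevant $j \le N_0$ force the same tree structure $T_\beta\rs \ell_0 = T_{\alpha_0}\rs\ell_0$ and the same surviving paths up to length $N_0$; concretely $1/m$ comfortably smaller than $2^{-N_0-10}$ suffices, since that keeps $\beta(j+4)$ within $2^{-j-2}$ of $\alpha_0(j+4)$ of the corresponding dyadic boundaries and avoids any ambiguous ``edge'' case. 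Because $M$ is an ordinary oracle machine and its finitely many stage-$s_0$ subcomputations only read oracle bits of index $< N_0$, those subcomputations behave identically on $\beta$, so $\widehat M^\beta$ reaches the very same stopping condition and commits to the same $b_0 = a_2$. Then $|b_0 - x| \le |b_0 - y_0| + |y_0 - x_0| + |x_0 - x|$, and by the displayed estimate $|b_0 - y_0| \le 2^{-2} + \dots$; iterating this for $b_0,\dots,b_{n}$ (taking the maximum of the finitely many $m$'s and $N_0$'s obtained) pins $\widehat M^\beta$ down to the first $n$ output digits agreeing with those of $y_0$, whence $|M^\beta - y_0| < 1/n$ by the fast-Cauchy bound.

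The main obstacle I anticipate is the bookkeeping around the two paths $\delta_L,\delta_R$ and the dyadic boundary case: when $x_0$ itself is dyadic, $T_{\alpha_0}$ has two infinite paths, and a nearby $x$ might be non-dyadic on one side, so I must check that condition (b)'s tolerance $|r_i - t_i| \le 2^{-i}$ is robust under this perturbation and that $\widehat M^\beta$ still halts with a consistent average. This is handled by noting that for $x$ within $2^{-\ell_0}$ of $x_0$ the surviving node(s) at level $\ell_0$ are a subset of those for $x_0$, so $\widehat M^\beta$ sees \emph{at most} the same paths restricted to length $N_0$ and the relevant subcomputations still converge with the same values; the $C$-compatibility / averaging step then goes through verbatim. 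A secondary, purely technical point is confirming that ``$\widehat M^\beta\downarrow$'' is genuinely part of the hypothesis (as stated), so I need not prove termination on all nearby inputs, only that \emph{when} it terminates the output is forced — which is exactly what the finite-use argument gives.
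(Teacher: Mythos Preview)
Your overall strategy is the same as the paper's: exploit the finite use of the fine-tuned computation on the paths of $T_{\alpha_0}$, then choose $m$ so that any nearby $x$ produces the same surviving path data up to the relevant level, forcing the same output digit.  The final estimate you aim for is also the paper's.

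There is, however, a genuine gap in your choice of $m$.  You assert that $1/m<2^{-N_0-10}$ suffices to force $T_\beta\rs{\ell_0}=T_{\alpha_0}\rs{\ell_0}$, justified by ``that keeps $\beta(j+4)$ within $2^{-j-2}$ of $\alpha_0(j+4)$''.  But the finite tree $T_\gamma\rs{\ell_0}$ is built from the individual terms $\gamma(j+4)$, not from $\lim\gamma$; two Cauchy representatives of the \emph{same} real can have different trees at every finite level, so no condition on $|x-x_0|$ alone yields tree equality.  Worse, even the weaker statement you correctly fall back on in your obstacle paragraph (surviving nodes for $\beta$ are a subset of those for $\alpha_0$) is not guaranteed by $1/m<2^{-N_0-10}$: if $x_0$ lies within, say, $2^{-N_0-100}$ of some dyadic $d\in D_5$ without equalling $d$, then the unique surviving level-$5$ node for $x_0$ is determined by which side of $d$ it sits on, and a perturbation of size $2^{-N_0-10}$ can push $x$ to the other side.

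The paper repairs exactly this point by choosing $m$ geometrically rather than as a fixed power of two.  At the level $j$ where the computation commits, it looks at $I_j(\alpha)=[a,b]$ and the dyadic interval(s) $J_j^{m_0}=[c,d]$ (one or two, according to whether $T_\alpha$ has one or two live nodes at level $j$) and sets $1/m<\min\{(a-c)/2,(d-b)/2\}$.  This forces any $x$ with $|x-x_0|<1/m$ to lie strictly inside the same dyadic interval(s), so the surviving nodes for $T_\beta$ at level $j$ are contained in those for $T_\alpha$, and the oracle queries $M^{\delta\rs j}(i)$ return the same values.  With this correction your argument goes through and matches the paper's.
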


\begin{proof}
Given $\frac{1}{n}>0$, choose $i$ such that $\frac{1}{2^i}<\frac{1}{2n}$.  Since $f(x_0)\downarrow$, for any $\alpha$ with limit $x_0$, $M^{\alpha}\downarrow$.  Choose $j$ such that $M^{\alpha\restriction j}(i)\downarrow$, by definition,
$|M^{\alpha\restriction j}(i)-y_0|<\frac{1}{2^i}$.  Now, by the fine-tuning, $M$ will first produce $T_{\alpha}$.
At the stage $s$ when $M^{\alpha\restriction j}(i)[s]\downarrow$, there are two possible cases.  Case 1. $T_{\alpha}$ has a unique node
of length $j$ that is still having extensions on $T_{\alpha}$ at stage $s$. By fine-tuning, the closed interval
$I_j(\alpha)=[a,b]$ is a proper subinterval of some dyadic interval $J_j^{m_0}=[c,d]$, thus $c<a<b<d$.
Let $m$ be such that $\frac{1}{m}<\min\{\frac{a-c}{2}, \frac{d-b}{2}\}$.  Case 2. $T_{\alpha}$ has a two nodes
of length $j$ that are still having extensions on $T_{\alpha}$ at stage $s$.  By fine-tuning, the the closed interval
$I_j(\alpha)=[a,b]$ intersects two consecutive dyadic intervals $J_j^{m_0}=[c,c']$ and $J_j^{m_0+1}=[c',d]$, thus $c<a\leq c'\leq b<d$.
Let $m$ be such that $\frac{1}{m}<\min\{\frac{a-c}{2}, \frac{d-b}{2}\}$. In both cases, if a Cauchy sequence $\beta$ satisfying
$|x_0-\lim\beta|<\frac{1}{m}$, the node(s) of length $j$ that is still having extensions on $T_{\beta}$ are also on $T_{\alpha}$.
Thus $M^{\beta\restriction j}(i)=M^{\alpha\restriction j}(i)$.  If $M^{\beta}\downarrow$, then it is a Cauchy sequence, and
\[
|M^{\beta}-y_0|<|M^{\beta}-M^{\beta\restriction j}(i)|+|M^{\beta\restriction j}(i)-y_0|<\frac{1}{2^i}+|M^{\alpha\restriction j}(i)-y_0|<\frac{1}{n}.
\]
\end{proof}

\begin{thm} [Enumeration Theorem for TTE-computable functions]
  There is a universal function $\Psi(e;x):\N\times \R\to \R$ for TTE-computable over $\R$, i.e.,
  for any TTE-computable function $\Ftilde:\R \to \R$, there is some $e\in \N$ such that for all $x\in \R$, $\Ftilde(x)=\Psi(e;x)$.
\end{thm}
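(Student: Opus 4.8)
The plan is to let $\Psi(e;x)$ be the function from $\R$ to $\R$ \emph{induced} by the Cauchy preserving function $\lambda\alpha.\Phi^\alpha(g(e))$ supplied by Lemma \ref{finetuningR}, where $\<\lambda\alpha.\Phi^\alpha(e):e\in\omega\>$ is a fixed effective list of all TTE-computable functions over $\Q^{\omega}$ and $g:\N\to\N$ is the total recursive function from that lemma. Concretely, given $e\in\N$ and a real $x$, one picks any Cauchy sequence $\alpha$ with $\lim\alpha=x$ and runs the universal oracle Turing machine to compute $\Phi^\alpha(g(e))$; since $\lambda\alpha.\Phi^\alpha(g(e))$ is Cauchy preserving, whenever it converges the output is a Cauchy sequence whose $\sim$-class depends only on $x$ and not on the chosen representative $\alpha$, and we declare that class to be $\Psi(e;x)$ (and set $\Psi(e;x)\uparrow$ otherwise). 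The first thing to check is exactly this well-definedness, which is immediate from the definition of ``Cauchy preserving'' feeding into Definition \ref{dfn:TTE_R}: equivalent Cauchy inputs are sent to equivalent Cauchy outputs.

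Next I would argue that $\Psi$ is TTE-computable over $\R$ with the natural-number side parameter $e$. The map $(e,\alpha)\mapsto\Phi^\alpha(g(e))$ from $\N\times\Q^{\omega}$ to $\Q^{\omega}$ is TTE-computable, being the composition of the recursive function $g$ on the numerical coordinate with the universal oracle Turing machine over $\Q^{\omega}$; and for each fixed $e$ it is Cauchy preserving by Lemma \ref{finetuningR}. Hence it induces a function $\N\times\R\to\R$, which is the $\Psi$ we want, manifestly of the required form $\Psi(e;x)$.

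It remains to prove universality. Let $\Ftilde:\R\to\R$ be any TTE-computable function. By Definition \ref{dfn:TTE_R} there is a TTE-computable Cauchy preserving $F:\Q^{\omega}\to\Q^{\omega}$ inducing $\Ftilde$. Since the list $\<\lambda\alpha.\Phi^\alpha(e)\>$ enumerates all TTE-computable functions over $\Q^{\omega}$, fix $e_0$ with $F=\lambda\alpha.\Phi^\alpha(e_0)$. Because $F$ is already Cauchy preserving, the ``furthermore'' clause of Lemma \ref{finetuningR} tells us that $\lambda\alpha.\Phi^\alpha(g(e_0))$ induces the same function from $\R$ to $\R$ as $F$, namely $\Ftilde$. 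But by the very definition of $\Psi$, the function $\lambda\alpha.\Phi^\alpha(g(e_0))$ induces $x\mapsto\Psi(e_0;x)$. Therefore $\Ftilde(x)=\Psi(e_0;x)$ for all $x\in\R$, as required.

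On the main obstacle: there is essentially no new difficulty, since all the genuine work has been absorbed into Lemma \ref{finetuningR} (and, before it, into the fine-tuning construction producing $T_\alpha$ and $\widehat{M}$). The only point demanding care is the routine verification that ``Cauchy preserving, uniformly in the numerical parameter $e$'' lets $\Psi$ descend to a genuine function $\N\times\R\to\R$ — that is, that passing to $\sim$-classes is compatible with carrying $e$ as a side parameter — which is precisely what makes $\Psi$ a single universal object rather than an unrelated family of induced functions. I expect no surprises here, and the argument above is already close to complete.
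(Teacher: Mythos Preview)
Your proposal is correct and takes essentially the same approach as the paper: the paper states the Enumeration Theorem without proof, treating it as an immediate consequence of Lemma~\ref{finetuningR} (the fine-tuning lemma), and your argument simply spells out that deduction in detail. The definition of $\Psi(e;x)$ via $\lambda\alpha.\Phi^\alpha(g(e))$, the well-definedness from Cauchy preservation, and the universality via the ``furthermore'' clause are exactly what the paper intends.
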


\begin{defi} \label{pcReal}
The class of {\em partial recursive functions over $\R$} is the smallest subclass $\mathcal{C}$ of $\mathcal{F}$ satisfying the following conditions:
\begin{enumerate} [(1)]
\item $\mathcal{C}$ contains the following basic functions:
\begin{enumerate} [(a)]
 \item Zero function $Z: \N\to \N$, $Z(n)= 0_{\N}$;
 \item successor function $S: \N\to \N$, $S(n)= n+1$; and
 \item for natural numbers $p, q$ and $i$ with $p+q\geq 1$ and $1 \leq i\leq p+q$ the projection function
 \[
 \pi^{p+q}_i (n_1,\dots,n_p;x_1,\dots, x_q)
=\left\{
      \begin{array}{ll}
       n_i, & \hbox{if $i\leq p$;} \\
       x_{i-p}, & \hbox{if $i>p$.}
      \end{array}
 \right.
\]

\item A universal TTE-computable functions $\Psi(e;x)$ over $\R$; and
\item the characteristic function $\chi:\R\to \N$ of $\{0_{\R}\}$ where $0_{\R}$ is the real number zero.
\end{enumerate}
\item $\mathcal{C}$ is closed under
\begin{enumerate} [(a)]
\item composition, provided the types are matched;
\item primitive recursion with respect to $\N$; and
\item $\mu$-operator with respect to $\N$.
\end{enumerate}
\end{enumerate}
\end{defi}

We say that an MS-machine is fine-tuned, if it only writes index $g(e)$ on its billboard where $g$ is the function defined in Lemma \ref{finetuningR}.

\begin{defi}
We say that a partial function $f:\N^p\times \R^q\to \R$ is {\em MS-computable} over $\R$  or simply MS-computable if there is a fine-tuned
master-slave machine $M$ such that
\[
f(\vec{n};\vec{x})=\left\{
       \begin{array}{ll}
         y, & \hbox{if for any representation $(\vec{n};\vec{\alpha})$ of input,}\\
            & \hbox{$M^{\vec{\alpha}}(\vec{n})\downarrow=\beta$ and $\beta$ is a representation of $y$;}\\
            & \\
            & \hbox{if for any representation $(\vec{n};\vec{\alpha})$ of input,}\\
         \hbox{undefined},  &  \hbox{$M^{\vec{\alpha}}(\vec{n})$ never halts; or}\\
             & \hbox{some slave calls are partial during}\\
            & \hbox{the compuatation.}
       \end{array}
     \right.
\]
We define $f:\N^p\times \R^q\to \N$ being {\em MS-computable} similarly.
\end{defi}

\begin{lem} \label{lem:zero_test_R}
The characteristic function of $\{0_{\R}\}$ is MS-computable over $\R$.
\end{lem}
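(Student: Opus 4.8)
The plan is to mirror the Baire-space treatment of the zero-test: I will show that membership in $\{0_{\R}\}$ can be decided from an \emph{arbitrary} Cauchy representative by a single slave call followed by one zero-test command, exactly as in the proof of Lemma~\ref{lem:pi01isstronglycomputable}. The one piece of genuine content is a representation-independent criterion for a real to equal $0$. First I would record the following: if $\alpha=\langle r_i:i\in\omega\rangle$ is a fast Cauchy sequence with $x=\lim_i r_i$, then $x=0_{\R}$ if and only if $|r_i|\le 2^{-i-1}$ for every $i\in\omega$. The forward implication follows at once from the defining inequality $|r_m-r_n|<2^{-m-1}$ for $m<n$: letting $n\to\infty$ gives $|r_m-x|\le 2^{-m-1}$, so $x=0$ forces $|r_m|\le 2^{-m-1}$. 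For the converse, $|r_m|\le 2^{-m-1}$ for all $m$ yields $|x|\le|x-r_m|+|r_m|\le 2^{-m}$ for all $m$, hence $x=0$. The crucial feature is that the predicate ``$\forall i\,(|r_i|\le 2^{-i-1})$'' has the same truth value on \emph{every} representative of a given real, i.e.\ it is $\sim$-invariant.

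Granting this, I would build the master-slave machine $M$ as follows. On input a representative $\alpha$ written on the slaves' oracle tape, the master writes on the billboard (an index of) the slave task ``slave $S_i$ decodes the rational $r_i=\alpha(i)$, tests the rational inequality $|r_i|\le 2^{-i-1}$, and writes $0$ on cell $i$ of the zero-test tape if it holds and $1$ otherwise''; this is clearly a TTE-computable task, as from $i$ and the code of $r_i$ the inequality $|r_i|\le 2^{-i-1}$ is decided primitive recursively. The master then issues the zero-test command, reads off the Boolean bit $k$, and halts with output $1$ if $k=1$ and with output $0$ if $k=0$. By the criterion above, the sequence written on the zero-test tape is the all-zero element of $\baire$ precisely when $x=0_{\R}$, and this holds simultaneously for every representative $\alpha$ of $x$; since $M$ always halts, $M$ outputs $1$ on every representative of $0_{\R}$ and $0$ on every representative of a nonzero real, so $M$ witnesses that $\chi$ is MS-computable over $\R$.

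The delicate point---and the only place where passing from $\baire$ to $\R$ actually costs anything---is the fine-tuning convention built into the notion of an MS-machine over $\R$. The slave task above computes a discrete $\baire$-object rather than a Cauchy sequence, so it is not Cauchy preserving, and the fine-tuning function $g$ of Lemma~\ref{finetuningR} is neither needed for it nor well-behaved on it: passing through $g$ would replace the intended all-zero sequence by a merely $\sim$-equivalent (hence no longer all-zero) sequence and would corrupt the zero-test. So the work will be to make sure that this auxiliary slave call---whose result is consumed immediately by the zero-test module and is never the output of $M$---bypasses the fine-tuning, equivalently that the inequality is tested against the raw rationals read off the oracle tape and not against any ``canonical'' representative synthesised by a fine-tuned slave. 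Once that is pinned down, the verification is a verbatim copy of the Baire-space zero-test, the mathematical content being exactly the $\sim$-invariance isolated in the first paragraph.
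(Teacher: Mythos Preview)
Your first two paragraphs are fine, and the representation-independent criterion you isolate is exactly the heart of the matter. The gap is precisely where you say it is, in the third paragraph: you recognise that your $0/1$-valued preparation function is not Cauchy preserving and that the mandatory fine-tuning of Lemma~\ref{finetuningR} would corrupt it, but ``bypassing the fine-tuning'' is not an option under the definition of an MS-machine over $\R$ --- every billboard index must be of the form $g(e)$. Concretely, the fine-tuning replaces the oracle $\alpha$ by the canonical dyadic paths $\delta_L,\delta_R$ of $T_\alpha$; for $x=0_\R$ these are $\delta_R(i)=0$ and $\delta_L(i)=-2^{-i}$. Your slave test $|\delta_L(i)|\le 2^{-i-1}$ then \emph{fails} for every $i$, so the two simulated outputs are the constant sequences $(1,1,\ldots)$ and $(0,0,\ldots)$, which are never C-compatible, and the fine-tuned call diverges on the very input $0_\R$ it is supposed to recognise.

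The paper's remedy is not to bypass fine-tuning but to make the preparation function itself Cauchy preserving, so that fine-tuning leaves its behaviour intact. Two changes accomplish this. First, the threshold is loosened to $|\alpha(j)|<2^{-(j-1)}$ and tested cumulatively ($\forall j<i$); this slack lets \emph{both} canonical paths of $0_\R$ pass, so the fine-tuned output at $0_\R$ is literally the zero sequence. Second, when the test fails the slave writes $\alpha(i)/2$ rather than $1$; the resulting output is then a bona fide Cauchy sequence (converging to $x/2$ when $x\neq 0$), hence remains nonzero under fine-tuning. With those two tweaks your argument goes through verbatim; without them, the machine you describe is not a fine-tuned MS-machine and the proof does not close.
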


\begin{proof}
This is because all Cauchy sequences are fast converging. Given $\alpha$ which is any presentation of $x$,
we may ask the slaves $S_i$ to check if
\[
\forall j<i (|\alpha(j)|<\frac{1}{2^{j-1}})
\]
and write a zero if the answer is yes, write $\frac{\alpha(i)}{2}$ otherwise.
Observe that this preparation function is TTE and Cauchy preserving.
Then the usual zero-test would determine if $x=0_{\R}$.
\end{proof}

\subsection{Properties of Computable Sets and Functions over $\R$}

We will establish some basic results on computability over $\R$, many of which have corresponding ones over the Baire space $\baire$.
Although the statements look similar, we actually cannot directly transform the results from $\baire$ to $\R$, because of the step 2
mentioned in the remarks after Definition \ref{dfn:TTE_R}. For example, the maps
\[
((x(0), x(1),\dots, ),(y(0),y(1), \dots)\mapsto (x(0),y(0),x(1),y(1),\dots)
\]
is continuous from $\baire$ to $\baire$, but the corresponding map from $\Q^{\omega}\times \Q^{\omega}\to \Q^{\omega}$ is not Cauchy preserving.

Also note that since BSS machines completely ignore the issue of representations, we will not compare with BSS models over $\R$.

\subsubsection{Normal Form Theorem}
As in section \ref{sec:Baire}, we have the following Normal Form Theorem and characterization theorem over $\R$.

\begin{thm} [Normal Form Theorem for $\R$] There are TTE computable over $\R$ function $U(z;x)$ and a primitive recursive predicate $T(e,x,z)$ over $\R$ such that
for any partial recursive function $F$ over $\R$, there is some $e$, for all $x\in \R$
\[
F(x)=U(\mu z T(e,x,z),x).
\]
\end{thm}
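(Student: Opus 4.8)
The plan is to mirror the proof of the Normal Form Theorem over $\baire$ (Theorem~\ref{KleeneNFT}) step by step, inserting the fine-tuning layer wherever the slaves are invoked. Let $F$ be partial recursive over $\R$; by Theorem~\ref{main_X}-style reasoning (the analogue over $\R$, which we may take as already available since the characterization was announced alongside the Normal Form Theorem), $F$ is computed by a fine-tuned MS-machine $M$ with some index $e$. As in the $\baire$ case we introduce the \emph{TTE component} $c$ of a configuration, coding a configuration $(n;\beta)$ as $(n,c;\alpha)$ where $\alpha$ is a fixed representing Cauchy sequence of the real input $x$ and $\beta(i)=\Phi^{\alpha}(c,i)$; the initial $c_0$ is an index of the identity TTE function. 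The key point, exactly as before, is that the Baire/rational-sequence component changes only through slave-calling commands, so by the $s$-$m$-$n$ theorem there is a standard primitive recursive $g$ with $\Phi^{\alpha}(g(p,c))=\Phi^{\Phi^{\alpha}(c)}(p)$; here, because $M$ is fine-tuned, the number actually written on the billboard is $g_{\mathrm{ft}}(p)$ with $g_{\mathrm{ft}}$ the recursive function of Lemma~\ref{finetuningR}, so the composed index we track is $g(g_{\mathrm{ft}}(p),c)$, which is still primitive recursive in $(p,c)$.

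Next I would install the restricted-slave machinery of subsection~\ref{NormalforBaire} verbatim: pass to $\bairehex$-valued slaves with the time bound $2^i$ and the $\#$-bookkeeping, obtaining the primitive recursive function $g^{\#}$ of Lemma~\ref{restricted_slave} and the extended zero-test $\chi_{\#}$ of Lemma~\ref{restricted_test}. This pushes the $\Pi^0_2$ totality issue to the final output-reading step. Here the zero-test is the one of Lemma~\ref{lem:zero_test_R}, i.e.\ its content is prepared by the Cauchy-preserving TTE function that checks $\forall j<i\,(|\alpha(j)|<2^{-j+1})$ and otherwise outputs $\alpha(i)/2$; because that preparation function is itself TTE, composing it with $g^{\#}$ keeps everything primitive recursive in the configuration codes. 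Then, as over $\baire$, define the transition function $t((n,c;\alpha),e)=((m,d;\alpha),e)$ by the same four-way case split (standard quadruple: $d=c$; slave call: $d=g^{\#}(g_{\mathrm{ft}}(p),c)$; zero-test: boolean bit becomes $\chi_{\#}(\Phi^{\alpha}_{\#}(c))$; halting: terminal), invoke the definition-by-cases lemma to conclude $t$ is primitive recursive over $\R$, and define
\[
T(e,x,z) \equiv ``z \text{ codes a finite run } \langle(n_i,c_i):i\le|z|\rangle \text{ with } (n_0,c_0;\alpha) \text{ initial, } t\text{-steps for } i<|z|, \text{ and } (n_{|z|},c_{|z|};\alpha) \text{ terminal}".
\]
Since coding the input, the transition $t$, and terminality are all primitive recursive over $\R$, and primitive recursive over $\R$ predicates are closed under bounded numerical quantification, $T$ is primitive recursive over $\R$. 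Finally $U(z;x)$ reads off the output: from $z$ recover the slave indices used, run the restricted slaves on the fine-tuned programs, declare $U(z;x)\uparrow$ if any $\#$-extension encountered is divergent, otherwise emit the last slave's non-$\#$ components; this $U$ is a partial TTE-computable function over $\R$. Then $F(x)=U(\mu z\,T(e,x,z);x)$ for all $x\in\R$.

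The step I expect to be the main obstacle is verifying that the whole construction respects $\sim$-equivalence, i.e.\ that $T$ and $U$ are genuinely well-defined on real numbers and not merely on representing sequences: if $\alpha\sim\alpha'$ both represent $x$, the computation trees generated by $M^{\alpha}$ and $M^{\alpha'}$ need not be identical configuration-by-configuration, so I cannot literally say ``$T(e,x,z)$'' without argument. The resolution is to lean on the fine-tuning: by Lemma~\ref{finetuningR} every slave call produces $\sim$-equivalent outputs from $\sim$-equivalent inputs, and by the continuity Lemma~\ref{lem:TTE_continuous} the zero-test verdict $\chi_{\#}$ depends only on $x=\lim\alpha$, not on $\alpha$; hence the sequence of states, billboard contents and boolean bits along the run — which is all that $n_i$ records — is the same for $\alpha$ and $\alpha'$, so the predicate ``there is a halting run of length coded by $z$'' has the same truth value, and the real number $\lim\big(U(z;x)\text{ read along }\alpha\big)$ is independent of $\alpha$. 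Making this argument airtight (in particular checking that the $\#$-time-bounded slaves inherit the fine-tuning's representation-independence, and that divergence of the $\#$-extension is itself a property of $x$ alone) is the one place where the $\R$ proof genuinely diverges from the $\baire$ proof, and is where I would spend most of the writing effort; the remainder is, as in Section~\ref{sec:Baire}, routine transcription.
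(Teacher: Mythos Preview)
Your proposal is correct and follows essentially the same approach as the paper: transcribe the Baire-space Normal Form argument (TTE component $c$, restricted slaves with $\#$, the primitive recursive $g^{\#}$, the transition function $t$, the $T$-predicate, the output reader $U$) and then isolate the Cauchy-preserving verification as the only genuinely new work. The paper's own proof is a short sketch making exactly the points you anticipate---that $g^{\#}(p,c)$ is independent of the representative $\alpha$, that the zero-test preparation is done by fine-tuned (hence Cauchy-preserving) slaves as in Lemma~\ref{lem:zero_test_R}, and that one handles the $\bairehex$ layer by defining ``$\#$-Cauchy'' sequences and fine-tuning to be $\#$-Cauchy preserving---which is precisely the content of your final paragraph.
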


\begin{thm}
The class of MS-computable functions over $\R$ coincides with the class of partial recursive functions over $\R$.
\end{thm}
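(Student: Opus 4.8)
This is the $\R$-analog of Theorem~\ref{main_X}, and I would prove the two inclusions separately, in each case transporting the corresponding argument of Section~\ref{sec:Baire} and routing every slave call through the fine-tuning map $g$ of Lemma~\ref{finetuningR}. For the inclusion that every MS-computable function $F$ over $\R$ is partial recursive over $\R$: given an MS-machine $M$ computing $F$, I would repeat the arithmetization from the proof of Theorem~\ref{KleeneNFT} --- restrict all slaves so that partiality is deferred to the last step, introduce the TTE component $c$ of a configuration, and define a transition function $t$, the Kleene predicate $T(e,x,z)$, and an output-reading function $U(z;x)$ exactly as there --- the one difference being that, since $M$ is fine-tuned, the slave-preparation functions are composed with the recursive $g$ of Lemma~\ref{finetuningR}. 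This keeps $t$, hence $T$, primitive recursive over $\R$ (by the definition-by-cases argument) and keeps $U$ TTE-computable over $\R$, yielding $F(x)=U(\mu z\,T(e,x,z),x)$ as in the Normal Form Theorem for $\R$. Since $T$ is primitive recursive over $\R$ hence in the class $\mathcal{C}$ of Definition~\ref{pcReal}, since $U$ is TTE-computable over $\R$ hence in $\mathcal{C}$ (by the Enumeration Theorem and $s$-$m$-$n$), and since $\mathcal{C}$ is closed under $\mu$-operator with respect to $\N$ and under composition, we get $F\in\mathcal{C}$.

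For the reverse inclusion, that every partial recursive function over $\R$ is MS-computable over $\R$, I would argue by induction on the generation of $\mathcal{C}$, mirroring the lemma of Subsection~\ref{MSforBaire} that every partial recursive function over $\baire$ is MS-computable. The zero, successor, and projection functions run on the master side alone. The universal TTE-computable function $\Psi(e;x)$ over $\R$ is computed by a fine-tuned MS-machine whose master, having received $e$, writes the fine-tuned index $g(e)$ of Lemma~\ref{finetuningR} on the billboard and calls the slaves; the characteristic function of $\{0_\R\}$ is MS-computable by Lemma~\ref{lem:zero_test_R}. Closure under primitive recursion and $\mu$-operator with respect to $\N$ is realized entirely on the master side, threading the real arguments through unchanged, exactly as in classical recursion theory.

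The one genuinely new case --- and the step I expect to be the main, though routine, obstacle --- is closure under composition $f(g_1,\dots,g_p;h_1,\dots,h_q)$ when the inner functions $h_1,\dots,h_q$ have codomain $\R$. Here the composite machine must run the MS-machines for the $h_j$, collect their outputs (Cauchy sequences), and then feed these to the machine for $f$ as its slave-oracle tapes; staging these sub-computations and shuttling a slave output of one stage onto the slave-oracle tape of the next is the technical content, made harmless by the multi-tape convention. Correctness rests on fine-tuning: a fine-tuned $h_j$-machine outputs a Cauchy sequence with a well-defined limit, and the fine-tuned machine for $f$ returns $\sim$-equivalent outputs on $\sim$-equivalent oracles, so the composite computes the function on $\R$ induced by $f$ evaluated at the functions induced by the $g_i$ and $h_j$, and is itself Cauchy preserving; since no slave is ever called with a non-fine-tuned index, the composite machine is again fine-tuned. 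Assembling these cases gives the second inclusion, and with the first paragraph the theorem follows.
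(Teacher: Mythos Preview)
Your proposal is correct and follows essentially the same approach as the paper, which simply declares the proof parallel to the Baire-space case (Theorem~\ref{main_X} via the Normal Form Theorem) and checks only the Cauchy-preservation issues arising from fine-tuning. The paper's sketch additionally flags one technical adaptation you do not mention explicitly --- replacing the $\bairehex$ machinery of Section~\ref{NormalforBaire} by a ``$\#$-Cauchy preserving'' version over $\Q^\omega$ with the extra symbol $\#$ --- but this is a routine transplant of the same idea and your outline accommodates it.
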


As the proofs are similar to the one in Baire space, we only check the issues related to Cauchy preserving step.
When executing the slave calling command, the function $g^{\#}(p,c)$ does not depend on the real input $x$.
When executing the zero-test command, the preparation step, i.e., writing the sequence being tested on zero-test tape,
is done by ``fine-tuned'' slaves, hence is Cauchy preserving (see the proof of Lemma \ref{lem:zero_test_R}).
For the issue of working in $\bairehex$, one can do the same for the space $\Q^{\omega}$ with the extra symbol$\#$.
For instance, one can define a sequence (finite or infinite) with $\#$ symbols {\em $\#$-Cauchy} if it is Cauchy after
removing the $\#$ symbols, and fine-tune the machines to make them ``$\#$-Cauchy preserving'', instead of ``Cauchy preserving''.
The same proof will go through.

\subsubsection{Characterizing the computable sets over $\R$} \label{sec:characterization_R}
Recall that the basic open set in $\R$ is of the form $B(c;r)=\{x\in \R: |x-c|<r\}$ where $c,r\in \Q$ and $r>0$.

\begin{defi}
A subset $A\subseteq \R$ is said to be {\em effectively open} or $\Sigma^0_1$ over $\R$ if there is a (classical) computable sequence $\<e_i\>_{i\in \omega}$
such that $A=\bigcup_i B(c_i,r_i)$ where $e_i$ is the code of the pair $(c_i,r_i)$. We say that $B\subseteq \R$ is {\em effectively closed} or $\Pi^0_1$ over $\R$ if $\R\setminus B$ is effectively open.

{\em Effectively $G_\delta$ set} or $\Pi^0_2$, {\em effectively $F_\sigma$ set} or $\Sigma^0_2$ and $\Delta^0_2$ over $\R$ sets can be defined
in a similar fashion as in Definition \ref{effectively_open}.
\end{defi}

By Lemma \ref{lem:TTE_continuous}, any TTE-computable function $F$ is continuous,
hence the pre-images $F^{-1}[\{x\in \R: x\neq 0_{\R}\}]$ and $F^{-1}[\{0_{\R}\}]$ are open and closed respectively.
We verify that their indices can be effectively found\footnote{Similar results are done by computable analysts, see footnote \ref{footnote6}.}.
Let $\varphi_e(i)$ denote the standard universal partial recursive function.

\begin{lem} \label{lem:finding_index_for nonzero}
There is a recursive function $g: \N\to\N$ such that for any TTE-computable over $\R$ function $\lambda x. \Phi^x(e)$, the set
\[
\{x\in \R: \Phi^x(e)\neq 0\}=\bigcup_{i\in \N} B(c_{\varphi_{g(e)}(i)}, r_{\varphi_{g(e)}(i)}),
\]
hence is open.
\end{lem}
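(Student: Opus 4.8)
The plan is to exploit the continuity of TTE-computable functions established in Lemma~\ref{lem:TTE_continuous} together with the uniform fine-tuning from Lemma~\ref{finetuningR}, and to read off from the behaviour of the fine-tuned machine an effective enumeration of rational balls exhausting the open set $\{x : \Phi^x(e) \neq 0\}$. The key point is that continuity is not merely an abstract fact here: the proof of Lemma~\ref{lem:TTE_continuous} actually exhibits, given a stage $s$ at which a relevant computation converges, an explicit rational modulus $\frac{1}{m} < \min\{\frac{a-c}{2}, \frac{d-b}{2}\}$ coming from the dyadic interval structure of $T_\alpha$. This is exactly the data one needs to manufacture a basic ball $B(c_i, r_i)$ around a witnessing point.

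First I would fix, via Lemma~\ref{finetuningR}, that $\lambda x.\Phi^x(e)$ is represented by a fine-tuned MS-machine $M$; if it is not Cauchy preserving to begin with we pass to $g(e)$, but since the statement only asserts the existence of a recursive $g$ producing the index for the open set, I may as well absorb the fine-tuning into $g$. Next, observe that $x \in \{\,\Phi^x(e) \neq 0\,\}$ iff for some presentation $\alpha$ of $x$ there is a stage $s$ and an index $i_0$ such that $M^{\alpha\restriction s}(i_0)\!\downarrow = r$ with $|r| > 2^{-i_0+1}$ (the bound being chosen so that, since the output is a fast Cauchy sequence, the limit is genuinely nonzero). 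By the modulus argument in Lemma~\ref{lem:TTE_continuous}, at such a stage the node(s) of the governing length on $T_\alpha$ pin down a dyadic interval, yielding a rational center $c$ and radius $r>0$ with $B(c,r)$ contained in the set: every real in that ball has a presentation on which $M$ runs identically up to stage $s$ and hence produces the same nonzero-certifying output. Enumerating over all quadruples $(\sigma, s, i_0, \text{convergence data})$ where $\sigma$ ranges over finite Cauchy rational strings, running $M^\sigma(i_0)$ for $s$ steps, and emitting the associated ball code whenever the nonzero certificate appears, gives a partial recursive enumeration; standard padding makes it total, and $s$-$m$-$n$ produces the required recursive $g$ with $\varphi_{g(e)}$ listing the ball codes. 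Conversely, any $x$ in the set has a presentation $\alpha$ and a finite prefix $\sigma \prec \alpha$ witnessing convergence to a nonzero value, so $x$ lands in one of the enumerated balls; this gives the reverse inclusion and hence equality.

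The main obstacle I anticipate is the bookkeeping around \emph{which} prefix of $\alpha$ controls the computation. The fine-tuned machine does not read $\alpha$ directly but first builds $T_\alpha$ and then simulates $M^{\delta_L}$ and $M^{\delta_R}$ along its paths; a convergent output at stage $s$ depends on finitely much of $T_\alpha$, which in turn depends on finitely much of $\alpha$ (property~(4) of $T_\alpha$: $\alpha\restriction(\ell+4) \mapsto T_\alpha\restriction\ell$ is recursive). So I must track the correspondence carefully enough to know that the dyadic interval produced by the modulus argument really is determined by $\alpha\restriction s'$ for a computable $s'$, and that \emph{every} $\beta$ with $\lim\beta$ in the resulting ball shares the relevant initial segment of its tree. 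This is precisely the content of the two cases in the proof of Lemma~\ref{lem:TTE_continuous}, so the argument there can be quoted almost verbatim; the only new work is to make the choice of $(c,r)$ rational and uniformly effective in the enumeration index, which is routine since the dyadic endpoints $c,d$ and the interval endpoints $a,b=I_j(\alpha)$ are themselves computable from the finite data at hand.
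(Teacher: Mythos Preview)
Your proposal is sound in outline and would yield a correct proof, but it takes a different route from the paper's, and it is worth seeing where they diverge.

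You enumerate arbitrary finite Cauchy rational strings $\sigma$, run the \emph{fine-tuned} machine on them, and when a nonzero certificate appears you extract a rational ball via the continuity modulus of Lemma~\ref{lem:TTE_continuous}. The paper instead cuts out the middleman: since fine-tuning already reduces every computation to simulations of the \emph{original} machine along dyadic paths of $T_\alpha$, one can enumerate those dyadic paths directly. Concretely, for each $n\in\Z$ the paper builds a single recursive tree $R_n$ whose nodes are exactly the nondecreasing dyadic finite sequences in $[n,n+1]$ (so $\bigcup_n R_n$ contains every path that could ever arise as a surviving branch of some $T_\alpha$), and then simulates the original $M_e$ on each $\sigma\in R_n$ for $|\sigma|$ steps. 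Whenever $M_e^{\sigma}(j)$ converges to something not C-compatible with $0^{|\sigma|}$, the paper emits the open dyadic interval $\bigl(\sigma(|\sigma|),\,\sigma(|\sigma|)+2^{-|\sigma|}\bigr)$ rather than a modulus-derived ball. Soundness is then immediate: any nondyadic $x$ in that interval has $\sigma$ as an initial segment of the unique path through $T_\alpha$, so the fine-tuned machine on any presentation $\alpha$ of $x$ sees the same incompatibility with $0$. Dyadic $x$ are handled by a separate explicit pass (the paper's Step~3) over the canonical left/right sequences $\delta_L^d,\delta_R^d$.

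What each approach buys: the paper's enumeration of dyadic sequences avoids the bookkeeping you flag in your last paragraph entirely---there is no need to track how a finite Cauchy prefix $\sigma$ determines $T_\sigma$, no need to convert the modulus $\frac{1}{m}$ (which is really a radius about the unknown real $x_0=\lim\alpha$) into a rational ball, and no need to argue separately that when $T_\alpha$ has two surviving nodes but $T_\beta$ has one the single-path output is still bounded away from $0$. Your approach, on the other hand, is more generic: it would apply verbatim to any representation where one has an effective continuity lemma like Lemma~\ref{lem:TTE_continuous}, whereas the paper's shortcut is specific to the dyadic-tree fine-tuning. Both verifications share the same implicit totality assumption (the paper also writes ``Suppose $f(x)\downarrow$'' in its soundness check), so neither has an advantage there.
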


\begin{proof}
Let $M_e$ be the MS-machine that computes $\lambda x. \Phi^x(e)$.
The following uniform effective procedure gives us the open set from the index $e$, whose code will be $g(e)$:

Step 1. For each $n\in \Z$, define a recursive tree $R_n$ as follows: The root of $R_n$ is $n$.
Suppose that $\sigma\in R_n$ is defined and is of length $s$, then let $\sigma\concat (\sigma(s))$ and $\sigma \concat (\sigma(s)+\frac{1}{2^{s+1}})$ be the successors of $\sigma$ on $R_n$.  Clearly, each $R_n$ satisfies the following properties:
(1) For each $\sigma\in R_n$, $\sigma$ is a finite nondecreasing dyadic Cauchy sequences with $\sigma(j)\in [n,n+1]\cap D_j$
for every $j<|\sigma|$;
(2) for any $x\in [n,n+1]$ there is an infinite path $\delta$ of $R_n$ such that $\lim \delta=x$.
By simultaneously enumerate $(R_n: n\in \Z)$, we are in fact enumerating the trees that contain all paths on
$T_{\alpha}$ for all Cauchy $\alpha\in \Q^{\omega}$, where $T_{\alpha}$ is the tree produced by the fine-tuning procedure.

Step 2. For each $\sigma\in R_n$, let $s=|\sigma|$.  Mimic the computation $M_e^{\sigma}$ for $s$ steps.
If for some $j<s$, $M_e^{\sigma}(j) \downarrow[s]$ and is not C-compatible with $0^{s}$,
then enumerate the interval $(\sigma(s), \sigma(s)+\frac{1}{2^s})$ into $O$.  (The first two steps take care of nondyadic $x$.)

Step 3. For each dyadic rational number $d$, let $\delta_L^d=(d-\frac{1}{k}:k\in \N)$ and $\delta_R^d$ be the constant squence $(d,d,\dots)$.
We can enumerate all finite initial segments of $\delta_L^d$ and $\delta_R^d$ for all possible $d$.
For each $d$, let $\delta_L:=\delta_L^d$ and $\delta_R:=\delta_R^d$.
Mimic the computation $M_e^{\delta_L\restriction s}$ and $M_e^{\delta_R\restriction s}$ for $s$ steps.
If for some $j<s$, $M_e^{\delta_L\restriction s}(j) \downarrow[s]$,  $M_e^{\delta_R\restriction s}(j)\downarrow[s]$ and both are Cauchy and
their average is not C-compatible with $0^{j}$,
then enumerate the interval $(d-\frac{1}{2^s}, d+\frac{1}{2^s})$ into $O$.  That finishes the algorithm.

We now verify that this algorithm works.  If $x$ gets enumerated, say in the second step via witness $\sigma$.
Suppose $f(x)\downarrow$ and $\alpha$ is a Cauchy representative for $x$, then $M^{\alpha}_e\downarrow$.  By the fine-tune procedure,
$\sigma$ is an initial segment of an infinite path on the tree $T_{\alpha}$, thus $M^{\alpha}_e\neq 0$,
because its first $j$ bits are incompatible with $0^j$.  Similarly for those $x$ enumerated in the third step.

On the other hand, suppose $f(x)\neq 0$.  We consider the following two cases:
Case 1. $x$ is nondyadic.  Then we know there is a unique dyadic sequence $\delta$ such that $\delta(i)$ is the biggest dyadic in $D_i$
that is less than $x$.
So $M_e^{\delta}$ must be incompatible with $0$. By the stage that we discover the fact, $x$ is enumerated into $O$.
Case 2. $x$ is a dyadic, say $d$.  Then $x$ is enumerated into $O$ in step 3 by a similar argument.
\end{proof}

\begin{defi}
We say that a subset $A\subseteq \R$ is {\em recursive over $\R$}
if its characteristic function $\chi_{_A}$ is partial recursive over $\R$.
\end{defi}

As in Baire space, recursive over $\R$ sets are closed under complementation.

\begin{lem} \label{lem:open_R}
Every effectively open subset $A$ of $\R$ is recursive over $\R$. So is every effectively closed set.
\end{lem}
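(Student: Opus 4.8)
The plan is to reduce the membership problem for $A$ to a single application of the zero‑test: I will manufacture a TTE‑computable real function whose zero set is exactly $\R\setminus A$, which concentrates all of the representation‑sensitivity into one continuous ``potential'' function and keeps the whole computation total.

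Write $A=\bigcup_{i}B(c_i,r_i)$, where $\langle(c_i,r_i)\rangle_{i\in\omega}$ is a classically computable sequence of rational pairs with $r_i>0$. First I would introduce the function $F:\R\to\R$ defined by
\[
F(x)=\sum_{i\in\omega}2^{-i-1}\cdot\min\bigl(\max(0,\,r_i-|x-c_i|),\,1\bigr).
\]
Since each summand lies in $[0,2^{-i-1}]$ the series converges uniformly, so $F$ is total and continuous; and $F$ is uniformly TTE‑computable, because to approximate $F(x)$ to within $2^{-k}$ it suffices to evaluate the first $k+1$ summands — each computed from finitely many terms of a Cauchy representative of $x$ together with the uniformly computable data $c_i,r_i$ — and to discard the tail $\sum_{i>k}2^{-i-1}=2^{-k-1}$. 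Fine‑tuning the underlying $\Q^{\omega}\to\Q^{\omega}$ machine via Lemma \ref{finetuningR}, $F$ is TTE‑computable over $\R$ in the sense of Definition \ref{dfn:TTE_R}, hence $F(x)=\Psi(e_0;x)$ for some index $e_0$ by the Enumeration Theorem for TTE‑computable functions over $\R$. The point of this choice is the identity
\[
F(x)=0\iff\forall i\,\bigl(|x-c_i|\ge r_i\bigr)\iff x\notin A,
\]
valid because $\max(0,r_i-|x-c_i|)$ is strictly positive precisely when $x\in B(c_i,r_i)$.

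Granting this, $A$ is recursive over $\R$ at once. The characteristic function $\chi$ of $\{0_\R\}$ is one of the basic functions of Definition \ref{pcReal}, and cutoff subtraction $n\mapsto 1\dotdiv n$ is a classical primitive recursive function, hence partial recursive over $\R$ (just as in Lemma \ref{lift}). Therefore
\[
\chi_{_A}(x)=1\dotdiv\chi\bigl(\Psi(e_0;x)\bigr)
\]
is partial recursive over $\R$ by closure under composition with matched types: it takes the value $1$ when $F(x)\neq 0$, i.e.\ $x\in A$, and $0$ when $F(x)=0$, i.e.\ $x\notin A$. For the effectively closed set $\R\setminus A$, one concludes either from the remark that recursive over $\R$ sets are closed under complementation, or directly since $\chi_{_{\R\setminus A}}=1\dotdiv\chi_{_A}$. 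Equivalently, in machine language: a fine‑tuned MS‑machine has its master write a fine‑tuned index preparing $F(x)$ on the billboard, the slaves fill the zero‑test tape with a fast Cauchy sequence converging to $F(x)$ as in the proof of Lemma \ref{lem:zero_test_R}, and the zero‑test returns $1$ exactly when $x\notin A$.

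The computability/continuity estimate for $F$ is routine. The one genuinely delicate point — and the reason this does not simply transcribe the Baire‑space argument of Lemma \ref{lem:pi01isstronglycomputable} — is that over $\R$ the predicate ``$x\in B(c_i,r_i)$'' is only semidecidable from a Cauchy representative, and is sensitive to the choice of representative on the boundary $|x-c_i|=r_i$; replacing a literal search over representatives by the continuous real‑valued potential $F$ is exactly what restores representation‑independence while keeping $\chi_{_A}$ total, as ``recursive over $\R$'' requires.
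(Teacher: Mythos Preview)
Your proof is correct. Both you and the paper reduce membership in $A$ to a single zero-test by constructing a TTE-computable $F:\R\to\R$ with $F^{-1}(\{0_\R\})=\R\setminus A$, but the constructions differ. The paper builds $F$ as the uniform limit of piecewise linear approximants $f_n$, defined inductively through a case analysis on whether the newly enumerated interval $I_{n+1}$ is contained in, disjoint from, or partially overlaps the current $U_n=\bigcup_{i\leq n}I_i$; the overlap case requires a somewhat delicate ``$\varepsilon$-lifting'' of $f_n$ inside $I_{n+1}$, and the paper remarks explicitly that the naive choice $d_{\R\setminus A}$ ``will not work'' for general $A$. Your route sidesteps all of this by writing $F$ directly as the geometrically weighted series of truncated tent functions: the weights $2^{-i-1}$ and the truncation by $1$ supply an explicit modulus of uniform convergence, so TTE-computability follows at once from closure of TTE under $|\cdot|$, $\max$, $\min$, rational scaling, and finite sums. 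Your argument is shorter and more elementary; what the paper's construction buys is that every approximant is piecewise linear with rational breakpoints, which they need because they argue computability via their Claim~1 on piecewise linear maps rather than via a direct modulus estimate as you do.
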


Unlike in $\baire$, where the membership of a basic open set $\nbhd{\sigma}$ can be decided by a single slave,
in $\R$, it needs the collective effort of all slaves and a zero-test to determine the membership of $B(c,r)$ (see Claim 2 below).
That is why the proof is much more complicated than the proof of Lemma \ref{lem:pi01isstronglycomputable},
which is its counterpart over $\baire$.

\begin{proof}
For a rational interval $I=(p,q)$, we say that $L$ is a {\em linear function} on $I$ if the domain of $L$ is $[p,q]$ and $L(x)=\frac{s-r}{q-p}(x-p)+r$ for some rationals $s$ and $t$,
in other words, it is a line connecting $(p,r)$ and $(q,s)$ on the plane.
We say that $f$ is a {\em piecewise linear function},
if there are finitely many consecutive rational intervals $[p_0,p_1],\dots,[p_{n-2}, p_{n-1}]$ ($n\geq 2$)
such that $f\restriction [p_k,p_{k+1}]$ is a linear function $L_k$ and $L_k(p_{k+1})=L_{k+1}(p_{k+1})$.

{\bf Claim 1}. Any piecewise linear function $f$ is TTE computable.

{\bf Proof Sketch of Claim 1}. Given any input $\alpha$ representing $x\in \R$, let $y_n= f(\alpha(n))$.
Since the endpoints of the intervals are rational, it is recursive to decide $\alpha(n)$ belongs to which interval(s),
so that we know which linear function to apply.
And the continuity of $f$ ensures that if $\alpha\sim \beta$ then $(f(\alpha(n))\sim f(\beta(n))$.

By Claim 1, we can show that

{\bf Claim 2}. Any single rational interval $I=(p,q)$ is MS-computable.

{\bf Proof Sketch of Claim 2}. Consider the following ``zero-test preparing function'' for $I$
\[
f(x)=\left\{
       \begin{array}{ll}
         0, & \hbox{if $x\leq p$ or $x\geq q$;} \\
         \frac{1}{q-p}(x-p), & \hbox{if $p\leq x\leq \frac{p+q}{2}$.}\\
         -\frac{1}{q-p}(x-\frac{p+q}{2})+\frac{1}{2}, &\hbox{if $\frac{p+q}{2}\leq x\leq q$}
       \end{array}
     \right.
\]
which is TTE-computable by Claim 1.  So we let the slaves to compute $f(x)$ and write the result on the zero-test tape.
Since $x\in I$ if and only if $f(x)\neq 0$, by invoking the zero-test, we know if $x\in I$.

Note that the preparing function for a single open interval can be chosen to be the distance function $d_C$,
where $C$ is the closed set $\R\setminus I$.  But for open sets in general, the distance function will not work.
But luckily we do not require to find the distance if we just want to determine the membership of the open set.

{\bf Claim 3}.  Any effectively open subset $U$ of $\R$ is MS-computable.

{\bf Proof of Claim 3}.
Now let $U=\bigcup_n I_n$ be an effective open set and fix a recursive enumeration $(I_n)$.
Without loss of generality, we assume that each $I_n$ is bounded and we allow repetitions,
so that we can assume that $I_n$ is enumerated at stage $n$.  Since we cannot afford to have infinitely many zero-tests,
we need is a preparing function $f$ which works for the whole $U$.
In other words, we need a function $f$ satisfying (1) $f$ is TTE computable; (2) $x\in  U$ if and only if $f(x)\neq 0$;
(3) the algorithm for computing $f(x)$ does not depend on the presentation $\alpha$ of $x$.

We will define a sequence $(f_n)$ of piecewise linear functions and $f$ will be the pointwise limit of $f_n$.

Let $f_0$ be the zero-test preparing function for $I_0$ as described in the proof of Claim 2.
Suppose that $f_n$ have been defined and $U_n =\{x: f_n(x)\neq 0\}$ which is a finite union of rational intervals.
Look at $I_{n+1}$.

Case 1. $I_{n+1}\subseteq U_n$, then let $f_{n+1}=f_n$.

Case 2.  $I_{n+1}\cap U_n=\emptyset$, then let
\[
f_{n+1}(x)=\left\{
\begin{array}{ll}
f_n, & \hbox{if $x \notin I_{n+1}$;}\\
\frac{1}{2^n} g_n(x), & \hbox{if $x\in I_{n+1}$}
\end{array}\right.
\]
where $g_n(x)$ is the zero-test preparing function for $I_{n+1}$ defined as in the proof of Claim 2.
The factor $\varepsilon=\frac{1}{2^n}$ is needed,
because otherwise we may have some non-Cauchy sequence $y\in \Q^{\omega}$ with $y_s=0$ and $y_{s+1}>\frac{1}{2^s}$ as the outcome.

Case 3. Neither Case 1 nor Case 2.  Then define
\[
f_{n+1}(x)=\left\{
\begin{array}{ll}
f_n, & \hbox{if $x \notin I_{n+1}$;}\\
h_n(x), & \hbox{if $x\in I_{n+1}$}
\end{array}\right.
\]
where $h_n$ can be viewed as a ``$\frac{1}{2^n}$ lifting of $f_n$ within $I_{n+1}$''.
To be more precise, let $f$ be a piecewise linear function, $I=(p,q)$ be a rational interval and $\varepsilon>0$ is a rational number,
we say that $h:[p,q]\to \R$ is an $\varepsilon$-lifting of $f$ within $I$, if $h$ is obtained from $f$ as follows:
Let $p\leq q_0< \dots<q_k\leq q$ such that $(q_i,q_{i+1})$ are the domains of all linear pieces of $f$ between $p$ and $q$.
Let $r=\max\{0,f(p)\}$, $s_i=f(q_{i+1})+\varepsilon$ and $s=\max\{0,f(q)\}$.
Then $h$ is the piecewise linear function connecting $(p,r),(q_1,s_1), \dots, (q_k,s_k), (q,s)$ on $\R^2$.

Since $|f_{n+1}(x)-f_n(x)|\leq \frac{1}{2^n}$ for all $x\in \R$, the sequence $(f_{n})$ converges to $f$ uniformly.
Consequently $f$ is continuous.  The definition above induces a TTE-procedure to compute $f$:
Each slave $S_i$ just compute $f_i(\alpha(i))$.  Since the definition of $f$ does not depend on $\alpha$ and $f$ is continuous,
this procedure is well-defined and Cauchy preserving.
Finally, $x\in U$ iff $x\in U_i$ for some $i$ iff $f(x)\neq 0$.

In summary, $U$ can be computed using two master steps: Step 1: make the preparing function $f$ and write $f(\alpha)$ on the zero-test tape; Step 2: execute zero-test and read off the answer.
\end{proof}

\begin{thm} \label{thm:rec_in_R}
A subset $A$ of $\R$ is recursive over $\R$ if and only if it is $\Delta^0_2$ over $\R$.
\end{thm}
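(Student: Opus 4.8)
The plan is to prove the two implications separately, in each case transporting the corresponding Baire-space argument of Section~\ref{sec:characterization} to $\R$ and paying attention only to the extra Cauchy-preserving/continuity bookkeeping.

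For the direction ``$\Delta^0_2$ over $\R$ $\Rightarrow$ recursive over $\R$'' I would follow the proof of Lemma~\ref{lem:delta02isstronglycomputable}. If $A$ is $\Delta^0_2$ over $\R$, then both $A$ and $\R\setminus A$ are effectively $F_\sigma$, so there are recursive sequences of effectively closed sets $\langle C_i\rangle$ and $\langle D_i\rangle$ with $A=\bigcup_i C_i$ and $\R\setminus A=\bigcup_i D_i$, and the indices of the $C_i$ (as complements of effectively open sets) and of the $D_i$ are produced uniformly. By the uniform version of the algorithm in the proof of Lemma~\ref{lem:open_R}, membership ``$x\in C_i$'' and ``$x\in D_i$'' can each be decided by a fine-tuned MS-machine in a bounded number of master steps. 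I then design an MS-machine that dovetails these tests: it checks $x\in C_0$, and if so outputs $1$; otherwise it checks $x\in D_0$, and if so outputs $0$; otherwise it proceeds to $C_1$, then $D_1$, and so on. Since $x$ lies in exactly one of $A$, $\R\setminus A$, it lies in some $C_i$ or some $D_i$, so the computation halts, and by construction it outputs $\chi_A(x)$. Hence $A$ is recursive over $\R$.

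For the converse I would mimic the computation-tree analysis leading to Theorem~\ref{thm:MScomputable_equal_Delta2} and Theorem~\ref{thm:stronglycomputableisdelta02}, now over $\R$. Let $M$ be a fine-tuned MS-machine computing $\chi_A$; make all slaves restricted as in Section~\ref{NormalforBaire} and form the computation tree $\Gamma=\Gamma(M)$, each node $\sigma$ carrying a coded configuration and a triple $(i(\sigma),j(\sigma),k(\sigma))$. The only branchings arise from zero-test states: at such a node the sequence written on the zero-test tape is $\Phi^x_{\#}(k(\sigma))$ for a \emph{fine-tuned}, hence Cauchy-preserving, TTE preparation, so by Lemma~\ref{lem:finding_index_for nonzero} the ``nonzero'' branch corresponds to an effectively open set $O_{i^*}$ whose index is computable, and the ``zero'' branch to its complementary effectively closed set; the labels are updated exactly as in Cases~1--4 of the construction of $\Gamma$. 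For a halting node $\sigma$ the set $O_{i(\sigma)}\cap C_{j(\sigma)}$ is the locus of inputs following the branch to $\sigma$; these sets are pairwise disjoint, and, because $\chi_A$ is total, they cover $\R$. The output read at $\sigma$ is constant on $O_{i(\sigma)}\cap C_{j(\sigma)}$, so $A=\bigcup\{O_{i(\sigma)}\cap C_{j(\sigma)}:\sigma\in\Gamma\text{ halting with output }1\}$ and symmetrically for $\R\setminus A$. Since in $\R$ a basic open interval is a recursive increasing union of closed intervals, both effectively open and effectively closed sets are effectively $F_\sigma$, so each $O_{i(\sigma)}\cap C_{j(\sigma)}$ is effectively $F_\sigma$, and a recursive union of effectively $F_\sigma$ sets is again effectively $F_\sigma$; hence $A$ and $\R\setminus A$ are both $\Sigma^0_2$, i.e.\ $A$ is $\Delta^0_2$ over $\R$.

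I expect the main obstacle to be this backward direction. One must verify that the fine-tuning procedure really leaves intact the effectively-open/effectively-closed dichotomy at every zero-test --- this is exactly where Lemma~\ref{lem:finding_index_for nonzero} is used and where the Baire argument had nothing to check --- and that the resulting decomposition of $\R$ is genuinely $\Sigma^0_2$ in the metric topology, not merely a Boolean combination of effectively open sets as the zero-dimensionality of $\baire$ would allow; this is what forces the $F_\sigma$-representation of the basic open intervals to be invoked when estimating the complexity of the pieces $O_{i(\sigma)}\cap C_{j(\sigma)}$. The forward direction, by contrast, is essentially the same dovetailing as over $\baire$ once Lemma~\ref{lem:open_R} is in hand.
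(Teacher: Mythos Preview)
Your proposal is correct and follows essentially the same route as the paper: the forward direction dovetails through the effectively closed pieces using the two-step membership test of Lemma~\ref{lem:open_R}, exactly as in Lemma~\ref{lem:delta02isstronglycomputable}; the backward direction rebuilds the computation tree $\Gamma$ and invokes Lemma~\ref{lem:finding_index_for nonzero} to label each zero-test branch with an effectively open or closed set in $\R$, exactly as the paper indicates when it says the proof of Theorem~\ref{thm:stronglycomputableisdelta02} ``can go through in the context of~$\R$''. Your explicit remark that each $O_{i(\sigma)}\cap C_{j(\sigma)}$ is effectively $F_\sigma$ because rational open intervals are recursive unions of closed intervals is a detail the paper leaves implicit, but it is the right justification for concluding $\Sigma^0_2$ in the metric topology.
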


\begin{proof}  ``($\Leftarrow$)''
By the proof of Lemma \ref{lem:open_R}, it takes only two master steps to determine if $x$ is in a basic open set.
So the whole proof of Lemma \ref{lem:delta02isstronglycomputable} can go through in the context of $\R$.

``($\Rightarrow$)'' By Lemma \ref{lem:finding_index_for nonzero}, we can uniformly find the indices of $\{x\in\R: \Phi^x(e)\neq 0\}$.
So the whole proof of Theorem \ref{thm:stronglycomputableisdelta02} can go through in the context of $\R$.
\end{proof}

\section{Properties of MS-computable functions and sets}

\subsection{The decomposibility of MS-computable functions}

Analyzing the proofs of Lemma \ref{lem:delta02isstronglycomputable} and Theorem \ref{thm:stronglycomputableisdelta02},
we see that the same analysis holds for any total MS-computable function (rather than the characteristic function of a set) over $\mathcal{N}$ or $\R$.
In fact, one can derive the following:

\begin{cor}\label{cor:decomposability}
Let $F:\R\rightarrow\R$. Then $F$ is (total) MS-computable if and only if there is an effective partition $\{X_i\}$ of $\R$ into $\Delta^0_2$ sets
and an effective sequence of continuous functions $\{H_i\}$ such that for every $i$, $F\upharpoonright X_i=H_i\upharpoonright X_i$.

Furthermore, each $X_i$ can be taken to be the intersection of an effectively open set and an effectively closed set.
\end{cor}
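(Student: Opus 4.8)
The plan is to obtain this directly from the computation--tree analysis of Section~\ref{sec:characterization_R} --- the same adaptation of Theorem~\ref{thm:MScomputable_equal_Delta2} to $\R$ that is invoked in the proof of Theorem~\ref{thm:rec_in_R} --- the only genuinely new observation being that \emph{totality} of $F$ turns the disjoint family of pieces produced there into an honest partition of $\R$.

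For the forward direction I would fix a fine--tuned MS-machine $M$ computing $F$ and form the $\R$-analog of its computation tree $\Gamma(M)$, using the trees $T_\alpha$ of the fine--tuning procedure in place of the raw oracle. As in Theorem~\ref{thm:MScomputable_equal_Delta2}, this produces a recursive set $R$ of halting nodes and a recursive labelling $\sigma\mapsto(i(\sigma),j(\sigma),k'(\sigma))$ such that the sets $O_{i(\sigma)}\cap C_{j(\sigma)}$ for $\sigma\in R$ are pairwise disjoint --- every branching in $\Gamma(M)$ is caused by a zero--test, whose two outcomes cut out disjoint sets --- and such that whenever $F(x)\downarrow$ there is $\sigma\in R$ with $x\in O_{i(\sigma)}\cap C_{j(\sigma)}$ and $F(x)=\Phi^x(k'(\sigma))$. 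Since $F$ is total these sets also cover $\R$, so, reindexing $R$ recursively by $\omega$, we get an effective partition $\{X_i\}$ with each $X_i=O_i\cap C_i$ an intersection of an effectively open and an effectively closed set, hence $\Delta^0_2$; and $H_i:=\lambda x.\Phi^x(k'_i)$ is a uniformly TTE-computable, hence (by Lemma~\ref{lem:TTE_continuous}) continuous, sequence of functions with $F\upharpoonright X_i=H_i\upharpoonright X_i$.

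For the converse, assume $\{X_i\}$ and $\{H_i\}$ are given, where ``effective sequence of continuous functions'' is read as ``uniformly TTE-computable sequence'' (which by Lemma~\ref{lem:TTE_continuous} is automatically a sequence of continuous functions, and is exactly what the forward direction supplied). I would design a fine--tuned MS-machine that, on a representative $\alpha$ of $x$, runs $i=0,1,2,\dots$ and decides for each $i$ whether $x\in X_i$; this decision always terminates --- costing finitely many master steps via the preparing--function--plus--zero--test argument of the proof of Lemma~\ref{lem:open_R} when $X_i$ is given as $O_i\cap C_i$, and in general being carried out by the total (hence halting) MS-machine for $\chi_{X_i}$ furnished by Theorem~\ref{thm:rec_in_R}. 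As soon as the search locates an $i$ with $x\in X_i$ --- which happens for a unique $i$ because $\{X_i\}$ partitions $\R$ --- the machine writes the index of $H_i$ on the billboard, calls the slaves, and outputs $H_i(x)$, which equals $F(x)$. The resulting machine is Cauchy preserving because both the membership tests and the functions $H_i$ are.

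I expect the only real obstacle to be making sure nothing breaks in the passage from $\baire$ to $\R$ in the forward direction: one has to check that the computation--tree construction and the statement of Theorem~\ref{thm:MScomputable_equal_Delta2} still hold over $\R$ (disjointness survives because branching still comes only from zero--tests, and fine--tuning keeps every slave call and every zero--test preparation Cauchy preserving), and that the real--valued output is read off as $\Phi^x(k'(\sigma))$ through the output--reading function $U$ of the Normal Form Theorem for $\R$ rather than as a natural number as in Theorem~\ref{thm:stronglycomputableisdelta02}. The argument is not deep, but this is where essentially all of the verification lies.
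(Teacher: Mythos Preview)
Your proposal is correct and follows exactly the approach the paper indicates: the paper's own ``proof'' is simply the sentence preceding the corollary, which says that the analysis behind Lemma~\ref{lem:delta02isstronglycomputable} and Theorem~\ref{thm:stronglycomputableisdelta02} (hence Theorem~\ref{thm:MScomputable_equal_Delta2}) carries over verbatim to total MS-computable functions, and you have spelled out precisely that. Your added remark that totality is what upgrades the disjoint family of pieces to a genuine partition, and your care about Cauchy preservation in the passage from $\baire$ to $\R$, are the right points to flag.
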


Our investigation is related to a well-known result in descriptive set theory, the Jayne-Rogers theorem.
We can state the version for $\R$ as the following:

\begin{thm}[Jayne, Rogers \cite{JR1982}]
Let $f:\R\rightarrow \R$. Then $f^{-1}(O)$ is $\Delta^0_2$ for every open set $O$ if and only if
there is a partition of $\R$ into $\Delta^0_2$ sets $\{X_i\}$ such that $f\upharpoonright X_i$ is continuous for each $i$.
\end{thm}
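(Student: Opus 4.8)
The theorem is due to Jayne and Rogers, so I sketch how I would reconstruct the argument, organised so that the soft half runs parallel to the easy direction of the $\Delta^0_2$-characterisations already established (e.g.\ Theorem~\ref{thm:rec_in_R}, Corollary~\ref{cor:decomposability}), with the genuine work concentrated in one topological lemma.

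\emph{Decomposition $\Rightarrow$ $\Delta^0_2$ preimages (routine).} Suppose $\R=\bigsqcup_i X_i$ with each $X_i\in\Delta^0_2$ and each $f\upharpoonright X_i$ continuous, and let $O\subseteq\R$ be open. Then $(f\upharpoonright X_i)^{-1}(O)$ is relatively open in $X_i$, so it equals $X_i\cap U_i$ for an open $U_i\subseteq\R$. Open and closed subsets of $\R$ are $\Delta^0_2$, and $\Delta^0_2$ is closed under finite intersections, so $X_i\cap U_i\in\Delta^0_2$; hence $f^{-1}(O)=\bigcup_i(X_i\cap U_i)$ is $\Sigma^0_2$. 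Dually, using that $\{X_i\}$ partitions $\R$, $\R\setminus f^{-1}(O)=\bigcup_i\bigl(X_i\cap(\R\setminus U_i)\bigr)$ is $\Sigma^0_2$ as well, so $f^{-1}(O)$ is also $\Pi^0_2$, i.e.\ $\Delta^0_2$.

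\emph{$\Delta^0_2$ preimages $\Rightarrow$ decomposition.} I would peel $\R$ apart by a Cantor--Bendixson-style transfinite derivative on closed sets. Set $F^{(0)}=\R$; given a closed $F^{(\alpha)}$, let $G_\alpha=\bigcup\{O\text{ open}:f\upharpoonright(O\cap F^{(\alpha)})\text{ is continuous}\}$, put $F^{(\alpha+1)}=F^{(\alpha)}\setminus G_\alpha$, and take intersections at limits. Every $F^{(\alpha)}$ is closed, and \emph{provided the derivative strictly decreases whenever $F^{(\alpha)}\neq\emptyset$}, second countability of $\R$ forces $F^{(\gamma)}=\emptyset$ for some countable $\gamma$. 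Then the sets $X_\alpha:=G_\alpha\cap F^{(\alpha)}$ $(\alpha<\gamma)$ are automatically pairwise disjoint (for $\beta>\alpha$ one has $F^{(\beta)}\subseteq F^{(\alpha)}\setminus G_\alpha$) and cover $\R$; each $X_\alpha$ is (open)$\cap$(closed), hence $\Delta^0_2$; and $f\upharpoonright X_\alpha$ is continuous because continuity is local and $G_\alpha$ is a union of open sets on which $f\upharpoonright(\,\cdot\,\cap F^{(\alpha)})$ is continuous. This is exactly the required partition --- just as, in Corollary~\ref{cor:decomposability}, the pieces $O_{i(\sigma)}\cap C_{j(\sigma)}$ were read off the computation tree of an MS-machine; the point is that here no machine is available, so the partition has to be manufactured by the derivative, which is why the next lemma is needed.

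\emph{The key lemma (the main obstacle).} Everything reduces to: if $F\subseteq\R$ is nonempty and closed, then some nonempty relatively open $V\subseteq F$ has $f\upharpoonright V$ continuous (equivalently, $G_\alpha\cap F^{(\alpha)}\neq\emptyset$ whenever $F^{(\alpha)}\neq\emptyset$). This is where the hypothesis is used in full: $f^{-1}(O)\in\Delta^0_2$ for all open $O$ means $f^{-1}(\text{open})\in\Sigma^0_2$ \emph{and} $f^{-1}(\text{closed})\in\Sigma^0_2$, and --- since this property is hereditary to closed subspaces --- one may argue inside $F$; using only one half yields mere Baire class $1$, for which the conclusion genuinely fails. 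I would argue by contradiction: if $f\upharpoonright V$ is discontinuous on \emph{every} nonempty relatively open $V\subseteq F$, then, exploiting discontinuity at every scale, build a Cantor scheme $\{F_s:s\in2^{<\omega}\}$ of nonempty closed subsets of $F$ with vanishing diameters along which the $f$-values oscillate, arranged so that for a suitable basic open ball $B$ of the range the set $f^{-1}(B)\cap F$ and its complement in $F$ accumulate on each other cofinally near the branch point; a Baire-category argument then shows $f^{-1}(B)\cap F$ cannot be simultaneously $\Sigma^0_2$ and $\Pi^0_2$ in $F$ (equivalently, it is $\Sigma^0_2$-hard there), contradicting $\Delta^0_2$-ness. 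This lemma is the technical heart of the Jayne--Rogers theorem and is the step I expect to be hardest; note that one cannot bypass it by relativising the earlier machine characterisation, because ``$f^{-1}(O)\in\Delta^0_2$ for every open $O$'' carries no built-in uniformity in codes for the sets $O$, so it does not obviously make $f$ MS-computable relative to a single oracle.
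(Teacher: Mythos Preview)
The paper does not prove this theorem at all: it is quoted as a classical result of Jayne and Rogers (reference \cite{JR1982}) and is used only as background for the effective version due to Pauly and de~Brecht. There is therefore no ``paper's own proof'' to compare your proposal against.

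That said, your sketch is a faithful outline of the standard proof of the classical (boldface) Jayne--Rogers theorem. The easy direction is correct as written. For the hard direction, the transfinite closed-set derivative you describe is exactly the usual machinery: one peels off, at each stage, the largest relatively open piece on which $f$ is continuous, and second countability forces termination at a countable ordinal; the resulting pieces $G_\alpha\cap F^{(\alpha)}$ are each of the form (open)$\cap$(closed), hence boldface $\Delta^0_2$. You have also correctly located the one genuine obstacle --- the lemma that on every nonempty closed $F$ the restriction $f\upharpoonright F$ has a nonempty relatively open set of continuity --- and correctly noted that this is where both halves of the $\Delta^0_2$ hypothesis (preimages of open \emph{and} of closed sets are $\Sigma^0_2$) are needed, since Baire class~1 alone does not suffice. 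Your description of the contradiction argument for the key lemma is admittedly only a gesture; the actual construction (building a Cantor scheme witnessing that some $f^{-1}(B)\cap F$ is not $G_\delta$ in $F$) requires more care than you indicate, but the shape is right.
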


Effective versions of the Jayne-Rogers theorem were also studied and discussed in Pauly, de Brecht \cite{paulybrecht}.
Pauly and de Brecht showed that a certain effective version of the Jayne-Rogers theorem was true for computable metric spaces:

\begin{thm} [Pauly, de Brecht \cite{paulybrecht}]
A function $f:\mathbb{R}\to\R$ is effectively $\Delta^0_2$-measurable if and only if it is piecewise computable.
\end{thm}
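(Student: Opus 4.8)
By the results established above, both properties in the statement can be reformulated in terms of MS-computability over $\R$. Here ``effectively $\Delta^0_2$-measurable'' means, for $\R$ with its canonical basis $\{B(c,r):c,r\in\Q,\ r>0\}$, that there is a (classical) recursive function $g$ sending a code of a basic ball $B(c,r)$ to a $\Delta^0_2$ over $\R$ code of the preimage $f^{-1}(B(c,r))$; and ``piecewise computable'' means there is an effective partition $\{X_i\}$ of $\R$ into $\Delta^0_2$ over $\R$ sets and an effective sequence $\{H_i\}$ of TTE-computable functions with $f\res X_i=H_i\res X_i$. By Corollary~\ref{cor:decomposability} --- whose ``continuous pieces'' are TTE-computable (by construction each has the form $\lambda x.\Phi^x(k)$) and, conversely, continuous by Lemma~\ref{lem:TTE_continuous} --- being piecewise computable is exactly being total MS-computable over $\R$. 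So the plan is to prove the equivalence ``$f$ total MS-computable $\iff$ $f$ effectively $\Delta^0_2$-measurable''.

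\emph{From MS-computability to effective $\Delta^0_2$-measurability.} Let $f$ be total MS-computable. Corollary~\ref{cor:decomposability} gives an effective partition with pieces $X_i=O_i\cap C_i$ ($O_i$ effectively open, $C_i$ effectively closed) and TTE-computable $H_i$ with $f\res X_i=H_i\res X_i$. For a basic ball $O$,
\[
f^{-1}(O)=\bigcup_i\bigl(O_i\cap C_i\cap H_i^{-1}(O)\bigr).
\]
Composing $H_i$ with a rational piecewise-linear bump supported on $O$ (such bumps are TTE-computable, as noted in the proof of Lemma~\ref{lem:open_R}) and applying Lemma~\ref{lem:finding_index_for nonzero}, each $H_i^{-1}(O)$ is effectively open, with index uniform in those of $H_i$ and $O$; hence $O_i\cap C_i\cap H_i^{-1}(O)$ is $\Pi^0_1\cap\Sigma^0_1\subseteq\Delta^0_2$ uniformly, and $f^{-1}(O)$ is uniformly $\Sigma^0_2$ over $\R$. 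Running the same computation with the closed complement $\R\setminus O$ shows $f^{-1}(\R\setminus O)$ is uniformly $\Sigma^0_2$ as well, so $f^{-1}(O)$ is uniformly $\Delta^0_2$ over $\R$; that is, $f$ is effectively $\Delta^0_2$-measurable.

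\emph{From effective $\Delta^0_2$-measurability to MS-computability.} Given the recursive $g$ above I would build a fine-tuned MS-machine $M$ for $f$ and then quote Corollary~\ref{cor:decomposability}. On a representative $\alpha$ of $x$ and target precision $2^{-n}$, $M$ dovetails over an enumeration $(c_j)$ of $\Q$: for each $j$ it runs the always-halting MS-subroutine supplied by the $\Delta^0_2\Rightarrow$ recursive direction of Theorem~\ref{thm:rec_in_R}, deciding whether $x$ lies in the set with $\Delta^0_2$-code $g(c_j,2^{-n-2})$, namely $f^{-1}(B(c_j,2^{-n-2}))$, and it halts and outputs $c_j$ as soon as one subroutine answers ``yes''. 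The search terminates because every rational within $2^{-n-2}$ of $f(x)$ lies in $f^{-1}(B(c_j,2^{-n-2}))$ and such rationals exist; moreover every confirmed $c_j$ satisfies $|c_j-f(x)|<2^{-n-2}$, so the sequence of outputs is a fast-converging Cauchy sequence with limit $f(x)$. Different representatives of $x$ may confirm different rationals, but all are within $2^{-n-2}$ of $f(x)$, so the induced partial function on $\R$ is well-defined and equals $f$; applying the fine-tuning of Lemma~\ref{finetuningR} makes $M$ itself MS-computable. Corollary~\ref{cor:decomposability} then converts $M$ into the desired piecewise-computable decomposition of $f$.

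I expect the main obstacle to be the bookkeeping in the second implication. One must check that the dovetailed search is a genuine MS-computation: each membership-deciding subroutine is itself an MS-computation with slave calls and zero-tests, so one inlines finitely many of them (using closure of MS-computability under composition) and observes that only finitely many rounds and steps occur before the first ``yes'', so the master never diverges. One must also verify that the assembled $M$ can be made fine-tuned --- the $\Delta^0_2$-deciding subroutines are already representation-insensitive, and the only representation-dependent choice, which rational is confirmed first, affects the output only within the allowed tolerance --- and that ``effectively $\Delta^0_2$-measurable'' really provides a single recursive index function $g$ uniform over the basic balls. In contrast to the classical Jayne--Rogers theorem, whose hard direction needs a transfinite rank construction, that structural content is here already packaged in Corollary~\ref{cor:decomposability}, so what remains is purely the effective assembly of the machine.
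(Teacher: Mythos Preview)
First, note that the paper does not prove this theorem: it is quoted as a result of Pauly and de~Brecht and then combined with Corollary~\ref{cor:decomposability} to obtain the three-way equivalence in the following corollary. So there is no ``paper's own proof'' to compare against; the question is whether your attempt stands on its own.

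Your forward direction (piecewise computable $\Rightarrow$ effectively $\Delta^0_2$-measurable) is fine: the decomposition from Corollary~\ref{cor:decomposability} together with the uniform effectivity of $H_i^{-1}(O)$ via Lemma~\ref{lem:finding_index_for nonzero} gives the uniform $\Delta^0_2$ code for $f^{-1}(O)$ just as you say.

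The backward direction, however, has a genuine gap. What you actually construct is a total function $h:\N\times\R\to\Q$ that is partial recursive over $\R$ and satisfies $|h(n,x)-f(x)|<2^{-n-2}$: for each fixed $n$ you search through the rationals $c_j$, deciding membership in the $\Delta^0_2$ set $f^{-1}(B(c_j,2^{-n-2}))$ via Theorem~\ref{thm:rec_in_R}, and output the first hit. That much is correct. The problem is the step where you ``assemble'' the sequence $(h(n,x))_n$ into an MS-computation of $f$. An MS-machine computing a map $\R\to\R$ halts after \emph{finitely many} master steps; by the Normal Form Theorem over $\R$, every MS-computable $F$ has the shape $F(x)=U(\mu z\,T(e,x,z);x)$ with $U$ a single TTE call. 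Your procedure, by contrast, performs a fresh unbounded search with fresh zero-tests for every output coordinate $n$, so it needs unboundedly many zero-tests to produce the full Cauchy name of $f(x)$. Closure of the partial recursive class under composition, primitive recursion, and $\mu$ does not help here: those schemes iterate only finitely often for a given input, and there is no scheme that takes a partial recursive $h:\N\times\R\to\Q$ and returns the real $\lim_n h(n,x)$.

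Put differently, you have shown that $f$ is a uniform limit of MS-computable step functions, not that $f$ itself is MS-computable. Bridging that gap is exactly the non-trivial content of the effective Jayne--Rogers argument in Pauly--de~Brecht: one must produce, from the $\Delta^0_2$-measurability data, a \emph{single} countable $\Delta^0_2$ partition on each piece of which $f$ agrees with one fixed TTE-computable function, rather than a different approximant at every scale. Your final paragraph asserts that this structural work is ``already packaged in Corollary~\ref{cor:decomposability}'', but that corollary only tells you such a partition exists \emph{once you already know} $f$ is MS-computable; it does not manufacture MS-computability from your coordinatewise approximations.
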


We note that the notion of being piecewise computable is the same as the existence of the decomposition in Corollary \ref{cor:decomposability}. Thus, we obtain:

\begin{cor}
A function $f:\mathbb{R}\to\R$ is MS-computable if and only if it is effectively $\Delta^0_2$-measurable if and only if it is piecewise computable.
\end{cor}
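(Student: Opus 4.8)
The plan is to obtain the statement as a chain of equivalences, with essentially all of the substantive work already carried out in Corollary~\ref{cor:decomposability} and in the quoted theorem of Pauly and de Brecht. First I would recall that, by Corollary~\ref{cor:decomposability}, a total $f:\R\to\R$ is MS-computable if and only if there is an effective partition $\{X_i\}$ of $\R$ into $\Delta^0_2$ sets --- each $X_i$ the intersection of an effectively open set with an effectively closed set --- together with an effective sequence of continuous functions $\{H_i\}$ such that $f\upharpoonright X_i = H_i\upharpoonright X_i$ for every $i$.

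The key observation is then that this datum is precisely what it means for $f$ to be \emph{piecewise computable} in the sense of Pauly and de Brecht. To make the match exact one should unwind "effective sequence of continuous functions": tracing back through the proof of Theorem~\ref{thm:MScomputable_equal_Delta2}, each piece is of the form $\lambda x.\,\Phi^x(k')$ with $k'$ obtained recursively from the node $\sigma$, so the $H_i$ are in fact uniformly TTE-computable, which is exactly the notion of computability on the pieces used in the definition of piecewise computability. Hence Corollary~\ref{cor:decomposability} already says, verbatim, that $f$ is MS-computable if and only if it is piecewise computable. Applying the Pauly--de Brecht theorem \cite{paulybrecht} to the computable metric space $\R$ gives that $f$ is piecewise computable if and only if it is effectively $\Delta^0_2$-measurable, and concatenating the two equivalences yields the full three-way statement.

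The step I expect to require the most care --- and the only genuine obstacle --- is the definitional bookkeeping that reconciles the two frameworks: one must check that an "effective partition into $\Delta^0_2$ sets carrying an effective sequence of continuous pieces", as delivered by Corollary~\ref{cor:decomposability}, matches clause for clause the hypotheses in the Pauly--de Brecht notion of piecewise computability (partition versus countable cover, and the uniformity of the computable pieces), and that our effectively $\Delta^0_2$-measurable subsets of $\R$ coincide with theirs. Once those matches are confirmed no further argument is needed; in particular no new computability-theoretic construction beyond Corollary~\ref{cor:decomposability} is required.
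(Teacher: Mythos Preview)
Your proposal is correct and matches the paper's approach essentially line for line: the paper simply notes that the decomposition in Corollary~\ref{cor:decomposability} is exactly the notion of piecewise computability, and then invokes the Pauly--de Brecht theorem to close the chain of equivalences. Your remark that the only care needed is the definitional bookkeeping between the two frameworks is exactly the content of the paper's one-sentence justification preceding the corollary.
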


\subsection{Comparing TTE and MS-computability over $\baire$}

We now give some examples of nonrecursive sets over $\baire$ and $\R$.

\begin{prop}
  The set $A=\{x\in \baire: x$ has infinitely many zeros$\}$ is not MS-computable over $\baire$.
\end{prop}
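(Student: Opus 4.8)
The plan is to show that $A=\{x\in\baire: x \text{ has infinitely many zeros}\}$ fails to be $\Delta^0_2$ over $\baire$, and then invoke Theorem~\ref{thm:stronglycomputableisdelta02} (which says every recursive-over-$\baire$ set is $\Delta^0_2$) to conclude that $A$ is not MS-computable. So the whole task reduces to a descriptive-complexity computation: I claim $A$ is properly $\Pi^0_3$, in particular not $\Sigma^0_2$, hence not $\Delta^0_2$.

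First I would write $A$ in its natural form: $x\in A$ iff $\forall n\,\exists m>n\,(x(m)=0)$. The predicate ``$x(m)=0$'' defines a clopen set in $\baire$ (it depends only on $x\rs(m+1)$), so $\exists m>n\,(x(m)=0)$ is effectively open uniformly in $n$, and therefore $A=\bigcap_n O_n$ with $O_n$ effectively open is $\Pi^0_2$ in the lightface sense of Definition~\ref{effectively_open}. That gives an upper bound; the real content is the lower bound, namely that $A$ is not $\Sigma^0_2$ over $\baire$, i.e.\ not of the form $\bigcup_i\bigcap_j\nbhd{\sigma_{i,j}}$ (equivalently, its complement $\baire\setminus A=\{x: x \text{ has only finitely many zeros}\}$ is not $\Pi^0_2$, i.e.\ not effectively $G_\delta$).

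The cleanest route to the lower bound is a Baire-category / topological argument showing $A$ is not even $\bf\Sigma^0_2$ (boldface), which a fortiori rules out the lightface effective version and makes the proof uniform in how one codes things. The key step: suppose for contradiction $A=\bigcup_i C_i$ with each $C_i$ closed. The complement $\baire\setminus A$ is comeager — indeed it is a dense $G_\delta$, since for each $n$ the set $\{x:\forall m\ge n\,(x(m)\ne 0)\}$ is closed with empty interior's complement dense, and more directly $\baire\setminus A\supseteq\bigcap_k U_k$ where $U_k=\{x:\exists m>k\,(x(m)\ne 0)\text{ and... }\}$; the point to extract is that $A$ itself is meager, being a countable union of closed sets each of which I show is nowhere dense. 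Concretely, for any finite string $\sigma$, the basic neighborhood $\nbhd\sigma$ contains a point $x\supset\sigma$ with only finitely many zeros (extend $\sigma$ by all $1$'s), so $x\notin A$; hence no closed subset of $A$ can contain a basic open set, i.e.\ every closed $C_i\subseteq A$ is nowhere dense, so $A$ is meager. On the other hand $A$ is dense (extend any $\sigma$ by all $0$'s, landing in $A$) — but density alone is not the contradiction. The contradiction is obtained by a standard diagonalization against the $C_i$: since each $C_i$ is nowhere dense, I can build (by the usual fusion argument, choosing longer and longer strings $\sigma_0\prec\sigma_1\prec\cdots$ with $\nbhd{\sigma_{i+1}}\cap C_i=\emptyset$ and arranging $\sigma_{i+1}$ to append at least one extra $0$ beyond $\sigma_i$) a point $x=\bigcup_i\sigma_i$ which has infinitely many zeros, so $x\in A$, yet $x\notin C_i$ for any $i$, contradicting $A=\bigcup_i C_i$.

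The main obstacle, such as it is, is purely bookkeeping: making sure the fusion simultaneously (a) escapes each $C_i$ and (b) forces infinitely many zeros into the limit, which is handled by interleaving the two requirements (at even stages escape $C_{i}$, at odd stages append a $0$) — both are finite-extension requirements of the kind that survive nowhere-density, so there is no real conflict. Once $A$ is shown not to be $\bf\Sigma^0_2$, it is in particular not lightface $\Delta^0_2$ over $\baire$, and Theorem~\ref{thm:stronglycomputableisdelta02} finishes the proof. (If one prefers an entirely effective argument avoiding category, one can instead reduce a properly $\Sigma^0_2$-complete set of natural numbers — e.g.\ $\{e: W_e \text{ infinite}\}$ — to $\baire\setminus A$ by a recursive map $\N\to\baire$ sending $e$ to the characteristic-style sequence recording the enumeration of $W_e$, which is a continuous reduction witnessing $\Sigma^0_2$-hardness; I would present the category argument as the main proof since it is shorter and self-contained.)
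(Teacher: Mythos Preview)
Your argument is correct but takes a genuinely different route from the paper. The paper argues by direct reduction inside the MS-model: assuming an MS-machine $M$ decides $A$, it builds from $M$ an MS-machine deciding $\mathrm{Inf}=\{e:W_e\text{ is infinite}\}$ over $\N$ (the slaves, on input $e$, write a $0$ at position $i$ iff $W_{e,i}$ has more elements than $W_{e,i-1}$, and then $M$ is run on that sequence), contradicting Theorem~\ref{char_on_N}. You instead go through Theorem~\ref{thm:stronglycomputableisdelta02} and prove the stronger, purely topological fact that $A$ is not even boldface $F_\sigma$. Two small slips to fix: it is $A$, not $\baire\setminus A$, that is the dense $G_\delta$ and hence comeager --- and once you have that, the Baire Category Theorem finishes immediately, since your own argument shows any closed $C_i\subseteq A$ is nowhere dense, so an $F_\sigma$ decomposition would make $A$ meager; the fusion construction is correct but unnecessary. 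Also, in your parenthetical alternative, $\{e:W_e\text{ infinite}\}$ is $\Pi^0_2$-complete, not $\Sigma^0_2$-complete, and the natural reduction lands in $A$, not its complement --- which is precisely the paper's proof. The paper's route is shorter and stays inside the MS framework; yours gives a sharper descriptive statement and avoids the machine model entirely.
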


\begin{proof}
  We reduce the $\Pi^0_2$-complete set of natural numbers Inf$=\{e: W_e$ is infinite$\}$ to the set $A$.

  Suppose that $A$ is computed by the MS-machine $M$.  We convert $M$ to get another machine $M_I$ to compute Inf as follows:
  For any input $e\in \N$, first ask each slave $S_i$ to check if $W_{e,i}$ has more elements than $W_{e,i-1}$.  If the answer is yes,
  then write a zero on the zero-test tape, otherwise write a one.
  Now apply $M$ to get an answer, which will be the outcome of $M_I$. Clearly, $W_e$ is infinite if and only if the zero-test tape has infinitely many zeros.
  So $M_I$ correctly computes Inf, contradict Proposition \ref{char_on_N}.
\end{proof}

\begin{prop}
  The set $\Q$ is not MS-computable over $\R$.
\end{prop}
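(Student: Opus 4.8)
The plan is to deduce this from Theorem~\ref{thm:rec_in_R}: a subset of $\R$ has MS-computable characteristic function (equivalently, is recursive over $\R$) if and only if it is $\Delta^0_2$ over $\R$. So it suffices to show that $\Q$ is not $\Delta^0_2$ over $\R$, and in fact I will establish the stronger statement that $\Q$ is not even a (classical) $G_\delta$ subset of $\R$, which a fortiori rules out its being effectively $G_\delta$ and hence $\Delta^0_2$ over $\R$.

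First, for orientation, note that $\Q$ is effectively $F_\sigma$, i.e.\ $\Sigma^0_2$ over $\R$: fixing a recursive enumeration $(q_n)_{n\in\N}$ of $\Q$, each singleton $\{q_n\}$ is effectively closed (its complement is a recursively given union of rational balls $B(c;r)$), and $\Q=\bigcup_n\{q_n\}$. Thus if $\Q$ were $\Delta^0_2$ over $\R$ it would, by the definition of $\Delta^0_2$, additionally have to be effectively $G_\delta$, in particular a $G_\delta$ set in the ordinary topological sense.

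Second, I would rule this out with the Baire category theorem. Suppose $\Q=\bigcap_{n}U_n$ with each $U_n\subseteq\R$ open. Since $\Q$ is dense in $\R$, each $U_n\supseteq\Q$ is a dense open set; and for each $q\in\Q$ the set $\R\setminus\{q\}$ is also dense open. The countable family $\{U_n:n\in\N\}\cup\{\R\setminus\{q\}:q\in\Q\}$ then consists of dense open subsets of the complete metric space $\R$, so its intersection is dense, in particular nonempty; but that intersection is $\big(\bigcap_n U_n\big)\cap\big(\R\setminus\Q\big)=\Q\cap(\R\setminus\Q)=\emptyset$, a contradiction. Hence $\Q$ is not $G_\delta$, so it is not effectively $G_\delta$, hence not $\Delta^0_2$ over $\R$, and by Theorem~\ref{thm:rec_in_R} it is not recursive over $\R$; that is, $\chi_{\Q}$ is not MS-computable over $\R$.

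There is essentially no obstacle here: the only point needing a word is that the effective Borel notions of Section~\ref{sec:characterization_R} refine the classical ones, so a purely topological failure already settles the effective question. Alternatively one could bypass Theorem~\ref{thm:rec_in_R} and use the real-number analog of Theorem~\ref{thm:stronglycomputableisdelta02} --- an MS-computable set over $\R$ is $\Delta^0_2$ over $\R$, hence $G_\delta$ --- and run the same Baire category argument; routing through Theorem~\ref{thm:rec_in_R} is simply the shortest path.
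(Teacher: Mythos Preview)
Your proof is correct and follows exactly the paper's approach: the paper's own proof reads simply ``Because $\Q$ is not a $G_{\delta}$ set, the result follows from Theorem~\ref{thm:rec_in_R}.'' You have merely supplied the standard Baire category argument for the well-known fact that $\Q$ is not $G_\delta$, which the paper treats as folklore.
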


\begin{proof}
  Because $\Q$ is not a $G_{\delta}$ set, the result follows from Theorem \ref{thm:rec_in_R}.
\end{proof}

\begin{prop}
If $x\in \baire$ and $M(x)\downarrow=y$, then there is a TTE-computable function $f$ such that $f(x)=y$,
i.e., there is a Turing functional $\lambda x.\Phi^x$ such that $\Phi^x\downarrow=y$.
In particular, if $x$ is a total recursive function over $\N$ and $M(x)=y$, then $y$ is a total recursive function over $\N$.
However the map from the index of $x$ to the index of $y$ is $\emptyset'$-recursive.
\end{prop}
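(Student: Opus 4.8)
The plan is to read off all three assertions from Theorem~\ref{thm:MScomputable_equal_Delta2}, i.e.\ from the decomposition of an MS-machine through its computation tree. Let $F$ be the partial function computed by $M$, so that $F(x)\downarrow=y$, and let $R\subseteq\omega$ together with the recursive map $\alpha:R\to\omega^3$, $\sigma\mapsto(i(\sigma),j(\sigma),k'(\sigma))$, be the data supplied by that theorem, satisfying (a)--(c). Since $F(x)$ is defined, condition (b) gives a $\sigma\in R$ (unique by the disjointness in (a)) with $x\in O_{i(\sigma)}\cap C_{j(\sigma)}$ and $F(x)=\Phi^x(k'(\sigma))$. Setting $f:=\lambda x.\Phi^x(k'(\sigma))$, which is a Turing functional with the fixed index $k'(\sigma)$ and hence TTE-computable, we get $f(x)=y$; this is the first assertion, and we only needed such a $\sigma$ to exist, not to be found effectively. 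For the ``in particular'' clause, if $x=\varphi_e$ is total recursive then $y=\Phi^x(k'(\sigma))$ is total recursive, because $y(i)$ is obtained by running the oracle program with index $k'(\sigma)$ on input $i$ using the recursive oracle $x=\varphi_e$, and this halts for every $i$ since $y\in\baire$ means $\Phi^x(k'(\sigma))$ is total; an index for $y$ then follows recursively from $e$ and $k'(\sigma)$ by the $s$-$m$-$n$ theorem.

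For the last clause I would describe an $\emptyset'$-recursive procedure that takes an index $e$ for $x=\varphi_e$ to an index for $y$, defined whenever $M(x)\downarrow$. Since $R$ and $\alpha$ are fixed recursive objects, from $\sigma\in R$ one recursively obtains recursive sequences $\langle\sigma_n\rangle,\langle\tau_n\rangle$ with $O_{i(\sigma)}=\bigcup_n\nbhd{\sigma_n}$ and $C_{j(\sigma)}=\baire\setminus\bigcup_n\nbhd{\tau_n}$. Then ``$x\in O_{i(\sigma)}$'' is the $\Sigma^0_1$ predicate $\exists n\,(\sigma_n\prec\varphi_e)$ of $e$, and ``$x\in C_{j(\sigma)}$'' is the $\Pi^0_1$ predicate $\forall n\,(\tau_n\not\prec\varphi_e)$; both are decidable from $\emptyset'$. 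By (a) the sets $O_{i(\sigma)}\cap C_{j(\sigma)}$ for $\sigma\in R$ are pairwise disjoint, and by (b) exactly one of them contains $x$ once $M(x)\downarrow$, so an $\emptyset'$-search through $R$ finds this $\sigma$ and terminates; the procedure then outputs the index for $y$ built from $e$ and $k'(\sigma)$ exactly as in the previous paragraph. Composing these steps, the map from the index of $x$ to an index of $y$ is $\emptyset'$-recursive.

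The one step that requires any thought is this last clause, and the point to notice is that, although $y$ is genuinely recursive, determining which branch of the computation the fixed oracle $x$ drives us down amounts to deciding membership of $x$ in an effectively open and an effectively closed set---that is, exactly $\Sigma^0_1$ and $\Pi^0_1$ information about the index $e$, reducible to $\emptyset'$ but not in general recursive. One should also verify that the $\emptyset'$-search through $R$ is guaranteed to halt, and this is precisely what condition (b) of Theorem~\ref{thm:MScomputable_equal_Delta2} provides, together with (a), which guarantees that the located $\sigma$ is the correct one. The remaining ingredients---the construction of $f$, the composition showing $y$ is recursive, and the final $s$-$m$-$n$ step---are routine.
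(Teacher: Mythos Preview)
Your argument is correct, and it is essentially the same idea as the paper's: the paper's two-sentence sketch observes that only finitely many zero-tests occur in a halting computation and that fixing their outcomes turns the computation into a single TTE functional, which is exactly what the terminal node $\sigma\in R$ in Theorem~\ref{thm:MScomputable_equal_Delta2} records. Your version is more detailed---in particular you actually justify the $\emptyset'$-recursiveness clause, which the paper's sketch does not spell out---but the underlying mechanism (locating the branch of the computation tree that $x$ follows, then reading off a fixed TTE index) is the same.
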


\begin{proof}
(sketch) During the computation of $M(x)=y$, $M$ only uses finitely many zero-test steps.
We may code the answers of the zero test results as a single parameter, then the computation becomes TTE.
\end{proof}

However, there are some subtleties.
\begin{prop} \label{prop:subtleties}
\begin{enumerate}
\item[(a)] The singleton set $\{\chi_{\emptyset'}\}$ is not recursive over $\baire$ (as a subset of $\baire$),
even though $\emptyset'$ is recursive over $\baire$ (as a subset of $\omega$).
\item [(b)] Let $f\in \baire$ be such that $f(n)=0$ if $n\notin \emptyset'$,
and $f(n)=s$ if $n\in \emptyset'$ and $s$ is the least stage of $n$ entering $\emptyset'$.
Then $\{f\}$ is recursive over $\baire$. (This $\{f\}$ is a $\Pi^0_1$ singleton coding $\emptyset'$ in $\baire$.)
\end{enumerate}
\end{prop}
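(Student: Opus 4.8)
Both clauses go through one reduction. By Lemma~\ref{lem:delta02isstronglycomputable} and Theorem~\ref{thm:stronglycomputableisdelta02} a subset of $\baire$ is recursive over $\baire$ precisely when it is $\Delta^0_2$ over $\baire$, and for a \emph{singleton} $\{z\}\subseteq\baire$ this collapses to: $\{z\}$ is recursive over $\baire$ iff $\{z\}$ is effectively closed ($\Pi^0_1$) over $\baire$. Indeed, writing $\{z\}=\bigcup_i C_i$ as an effective union of effectively closed sets (the $\Sigma^0_2$, i.e.\ effectively $F_\sigma$, half of $\Delta^0_2$; see Definition~\ref{effectively_open}), each $C_i\subseteq\{z\}$ is $\emptyset$ or $\{z\}$, so choosing $i_0$ with $z\in C_{i_0}$ forces $C_{i_0}=\{z\}$ and makes $\{z\}$ effectively closed; the converse is Lemma~\ref{lem:pi01isstronglycomputable}. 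Hence (a) reduces to ``$\{\chi_{\emptyset'}\}$ is not effectively closed over $\baire$'' and (b) to ``$\{f\}$ is effectively closed over $\baire$''. (That $\emptyset'$ is recursive over $\baire$ as a subset of $\omega$ is immediate from Theorem~\ref{char_on_N}, $\emptyset'$ being c.e.\ and hence $\Delta^0_2$ as a set of natural numbers.)

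\emph{Part (a).} Suppose for contradiction $\baire\setminus\{\chi_{\emptyset'}\}=\bigcup_j\nbhd{\sigma_j}$ with $\langle\sigma_j\rangle$ recursive. The point to watch is that over $\baire$ alone a $\Pi^0_1$ singleton need \emph{not} have a computable element --- clause (b) is exactly such a case --- so the argument must use that $\chi_{\emptyset'}$ is $0$--$1$ valued and hence lies in the \emph{compact} effectively closed set $2^{\omega}$ (recursive over $\baire$, as observed after Lemma~\ref{lem:pi01isstronglycomputable}). Intersecting the cover with $2^{\omega}$ makes $2^{\omega}\setminus\{\chi_{\emptyset'}\}$ effectively open in $2^{\omega}$, so $\{\chi_{\emptyset'}\}=[T]$ for some recursive tree $T\subseteq 2^{<\omega}$. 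I would then run the classical argument that the unique path of a recursive subtree of $2^{<\omega}$ is computable: for each $n$, by compactness of $2^{\omega}$ (König's lemma) there is a level $\ell$ at which all length-$\ell$ nodes of $T$ agree on their first $n$ values, this value being $\chi_{\emptyset'}\restriction n$, and such an $\ell$ is located by an effective search. Then $\chi_{\emptyset'}$ is a recursive function and $\emptyset'$ computable --- contradiction. (For orientation: $\{\chi_{\emptyset'}\}$ \emph{is} effectively $G_\delta$ over $\baire$, as it equals $\{g:\forall n(n\in\emptyset'\to g(n)=1)\}\cap\{g:\forall n(n\notin\emptyset'\to g(n)=0)\}$ with the first set $\Pi^0_1$ and the second $\Pi^0_2$; so the obstruction lies solely on the effectively $F_\sigma$ side.)

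\emph{Part (b).} The plan is to present $\baire\setminus\{f\}$ as effectively open and then quote Lemma~\ref{lem:pi01isstronglycomputable}. Write $\baire\setminus\{f\}=\bigcup_n\{g:g(n)\neq f(n)\}$; it suffices to enumerate, uniformly in $n$, basic clopen sets with union $\{g:g(n)\neq f(n)\}$, and by the definition of $f$ this set splits into two pieces: first, all $\nbhd{\tau}$ with $|\tau|>n$ and $\tau(n)=m\geq 1$ such that $n$ does \emph{not} enter $\emptyset'$ precisely at stage $m$ (a recursive condition on $(n,\tau)$); second, all $\nbhd{\tau}$ with $|\tau|>n$ and $\tau(n)=0$, contributed only once $n$ is seen to enter $\emptyset'$ (enumerate $\emptyset'$, and as $n$ appears throw in every such $\nbhd{\tau}$). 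A short case check according to whether $n\in\emptyset'$, and if so its entry stage, confirms the union is exactly $\{g:g(n)\neq f(n)\}$. Hence $\baire\setminus\{f\}$ is effectively open, $\{f\}$ is effectively closed over $\baire$, and so $\{f\}$ is recursive over $\baire$; it codes $\emptyset'$ since $n\in\emptyset'$ iff $f(n)\geq 1$ (so $f\equiv_T\emptyset'$).

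\emph{Main obstacle.} The delicate point is exactly the gap in (a): it is tempting but wrong to conclude $\{\chi_{\emptyset'}\}$ is non-recursive over $\baire$ merely from $\chi_{\emptyset'}$ being non-computable, since (b) exhibits a non-computable real with a recursive-over-$\baire$ singleton. The real content of (a) is compactness of Cantor space, applied only after the reduction of the $\Sigma^0_2$ hypothesis to the $\Pi^0_1$ one; dually, the only work in (b) is organizing the complement of $\{f\}$ as a uniformly recursive union of clopen sets even though $f$ itself is merely $\Delta^0_2$.
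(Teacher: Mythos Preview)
Your proof is correct and follows essentially the same route as the paper: for (a) both arguments reduce a recursive-over-$\baire$ singleton to a $\Pi^0_1$ singleton (your $C_{i_0}$ is exactly the paper's ``taking the $\Sigma^0_2$-witness as a parameter''), restrict to Cantor space, and invoke compactness to force $\chi_{\emptyset'}$ computable. For (b) the paper builds the MS-machine directly---slave $S_i$ checks the first $i$ bits of the input against the stage-$i$ approximation to $\emptyset'$ and writes an error bit to the zero-test tape---whereas you display $\baire\setminus\{f\}$ as effectively open and quote Lemma~\ref{lem:pi01isstronglycomputable}; these are two presentations of the same $\Pi^0_1$ structure, and your version has the mild advantage of reusing the characterization already proved rather than re-implementing it.
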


\begin{proof}
(a)  Suppose that $\{\chi_{\emptyset'}\}$ is recursive over $\baire$.  Then by Theorem \ref{thm:stronglycomputableisdelta02},
$\{\chi_{\emptyset'}\}$ is a $\Sigma^0_2$-singleton, hence it is a $\Pi^0_1$-singleton (by taking the $\Sigma_2^0$-witness as a parameter) in the Baire space.  In other words,
there is a recursive tree $S$, with $\chi_{\emptyset'}$ as its unique infinite branch. However, $\chi_{\emptyset'}$ is $0$-$1$ valued, so by
restricting $S$ to the Cantor space, we get a recursive binary tree $T$ with $\chi_{\emptyset'}$ as its unique infinite branch, which is obviously impossible.

(b) Let $M$ be the following MS-machine: For any input $x$, let the $i$-th slave check the correctness of the first $i$ bits of $x$ up to step $i$.
To be more precise, for each $j<i$, if $x(j)=0$ then check whether $j\notin W_{j,i}$; if $x(j)=s\neq 0$, then check if $s$ is the least stage that $j\in W_{e,s}$.
If the $i$-slave found an error, then writes a one on the zero-test tape, otherwise, writes a zero.  Clearly $x=f$ if and only if the sequence written on the zero-test tape is
the zero sequence.  By applying zero-test, the master knows if $x=f$.
\end{proof}

Proposition \ref{prop:subtleties} tells us: (1) A set $A\subseteq \omega$ being a recursive set is different from its characteristic function $\chi_{_A}$ being a MS-computable singleton in $\baire$.
(2) It is possible to have two elements $x$ and $y$ in the Baire space, which are TTE-equivalent, i.e., there are Turing functionals $\Phi$ and $\Psi$ with $\Phi^x=y$ and $\Psi^y=x$,
but $\{x\}$ is MS-computable, whereas $\{y\}$ is not.

\subsection{Comparing MS-computability over $\baire$ and over $\R$}

Despite the computable sets in $\baire$ and in $\R$ being $\Delta^0_2$-definable within the structure,
their differences are striking. One of the reasons is that $\baire$ is not locally compact whereas $\R$ is.

The following folklore is well-known:
\begin{fact} \label{fact:hyp}
For each recursive ordinal $\alpha$, there is some $f\in\baire$ such that $f\equiv_T\emptyset^{(\alpha)}$ and $f$ is a $\Pi^0_1$-singleton.
\end{fact}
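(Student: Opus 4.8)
The plan is to exhibit, for each recursive ordinal $\alpha$, a recursive tree $T\subseteq\omega^{<\omega}$ whose unique infinite branch $f$ codes (and is Turing equivalent to) $\emptyset^{(\alpha)}$. The natural device is a notation $a\in\mathcal{O}$ for $\alpha$ in Kleene's system of ordinal notations, and the associated iterated jump hierarchy $H_a$. I would use the well-known fact that the hierarchy can be laid out along a recursive sequence so that, given a notation $a$, the set $H_a$ has a $\Pi^0_1(a)$ ``history'' presentation: an element $f$ of Baire space is a legitimate approximation history if, level by level along the notation, it records at stage $s$ the correct finite information about each $H_b$ for $b<_{\mathcal O} a$, where the correctness of each such level is checked against the (computable-in-the-previous-level) jump construction.

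First I would fix a notation $a$ for $\alpha$ and recall the standard definition of $H_a$ by effective transfinite recursion: $H_{1}=\emptyset$, $H_{2^b}=H_b'$, and for limit notations $H_{3\cdot5^e}=\{\langle n,k\rangle: n\in H_{\varphi_e(k)}\}$. Second, I would define $f=f_a\in\baire$ to be the canonical ``complete'' course-of-values object: $f(\langle b,n\rangle)=H_b(n)$ for $b<_{\mathcal O}a$ (with some harmless padding convention on the remaining coordinates). Plainly $f\equiv_T\emptyset^{(\alpha)}$, since $f\geq_T H_a$ by one coordinate section, and $f\leq_T H_a$ because the whole $\alpha$-hierarchy is uniformly computable from its top. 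Third, and this is the crux, I would show $\{f\}$ is $\Pi^0_1$ by writing down a recursive predicate $R(\sigma)$ on finite strings such that the tree $T=\{\sigma: R(\sigma)\}$ has $f$ as its unique branch: $R(\sigma)$ says that $\sigma$ is consistent with the defining equations above, i.e. on each coordinate $\langle 2^b,n\rangle$ that $\sigma$ decides, the value agrees with the value forced by the $\Sigma^0_1$-in-$H_b$ definition of $H_b'$ as read off from the (already-decided) $H_b$-coordinates — and symmetrically for limit coordinates via $\varphi_e$. Each such local constraint is a $\Sigma^0_1$ (``a computation has halted'') or co-$\Sigma^0_1$ condition relative to finitely many earlier coordinates, and since $\sigma$ is finite, only finitely many constraints are active, so $R$ is recursive. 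Uniqueness of the branch follows by transfinite induction along $<_{\mathcal O}a$: the constraints pin down $H_1$ outright, and given that a branch is correct on all $H_b$ for $b<_{\mathcal O} c$, the halting/non-halting constraints force the correct value of $H_c$ on every coordinate.

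The main obstacle is the limit-notation case: there the value at a coordinate $\langle 3\cdot5^e, \langle n,k\rangle\rangle$ refers to $H_{\varphi_e(k)}$ for a notation $\varphi_e(k)$ that the string $\sigma$ may not yet ``see'' the relevant coordinates of, so I must be careful that the consistency predicate $R$ only ever imposes a constraint once the needed earlier coordinates appear in $\sigma$, and that the bookkeeping (which coordinate depends on which, via the recursive $\leq_{\mathcal O}$-order below $a$) is itself recursive — which it is, because $\{b: b<_{\mathcal O}a\}$ is $\Pi^0_1$ but the \emph{predecessor structure} needed here can be arranged recursively once a path through $\mathcal O$ up to $a$ is fixed, and $a$ is a single fixed notation. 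A clean way to sidestep delicacy is to replace ``the'' hierarchy by a fixed recursive $\omega$-presentation of the linear order $\{b:b\le_{\mathcal O}a\}$ of order type $\alpha+1$ and do an ordinary (non-effective-transfinite) recursion along that recursive order, so that coordinate dependencies are governed by a recursive relation. I would then simply remark that all of this is entirely standard hyperarithmetic bookkeeping — hence ``folklore'' — and that the passage to $\R$ or to $\baire$ as a subspace is immaterial since $\{f\}$ is already exhibited as a $\Pi^0_1$ subset of $\baire$, which by Lemma~\ref{lem:pi01isstronglycomputable} is in particular recursive over $\baire$, though for Fact~\ref{fact:hyp} only the $\Pi^0_1$-singleton and the Turing-degree computations are needed.
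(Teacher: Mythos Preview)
Your direct construction has a genuine gap. As defined, $f(\langle b,n\rangle)=H_b(n)\in\{0,1\}$, so with any finite-valued padding $f$ lives in some $k^\omega\subseteq\baire$. But the unique infinite path through a recursive finitely-branching tree is always recursive (restrict your tree $T$ to $k^{<\omega}$ and apply K\"onig's lemma effectively). Hence $\{f\}$ cannot be a $\Pi^0_1$-singleton unless $f$ is recursive, contradicting $f\equiv_T\emptyset^{(\alpha)}$ for $\alpha\geq 1$. The paper makes exactly this point in Proposition~\ref{prop:subtleties}(a): $\{\chi_{\emptyset'}\}$ is not a $\Pi^0_1$-singleton. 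Correspondingly, your argument that $R$ is recursive cannot go through: when $\sigma$ asserts $H_{2^b}(n)=1$ you would need to verify the $\Sigma^0_1$-in-$H_b$ fact that $\Phi_n^{H_b}(n)$ halts, and no finite amount of information decides this --- a string that lies by writing a $1$ is never killed by your consistency check.

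The standard repair is to record not the characteristic function but the \emph{stage of entry}: set $f(\langle b,n\rangle)$ to be the least $s$ with $n\in H_{b,s}$ (under a fixed recursive approximation to the jump relative to the previous level), and $0$ otherwise. Then a positive claim ``enters at exactly stage $s$'' is recursively verifiable against the earlier coordinates of $\sigma$, and a zero claim is recursively refutable as soon as a halting computation appears; this makes $R$ genuinely recursive and forces $f$ to be unbounded, sidestepping the compactness obstruction. This is exactly Proposition~\ref{prop:subtleties}(b) for $\alpha=1$, and in general is the Jockusch--McLaughlin device. The paper's own proof simply cites two theorems: $\emptyset^{(\alpha)}$ is a $\Pi^0_2$-singleton (Sacks \cite{Sac90}), and every $\Pi^0_2$-singleton is Turing-equivalent to a $\Pi^0_1$-singleton (Jockusch--McLaughlin \cite{JM1969}); the latter is precisely the general form of the modulus-coding idea your argument needs.
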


\begin{proof}
By Sacks \cite[Theorem II.4.2]{Sac90}, for each computable ordinal $\alpha$, $\emptyset^{(\alpha)}$ is a $\Pi^0_2$-singleton.
By Jockusch and McLaughlin, \cite[Theorem 3.1]{JM1969}, we can replace each $\Pi^0_2$-singleton with a Turing equivalent $\Pi^0_1$-singleton.
\end{proof}

We compare the singletons that are MS-computable in $\baire$ and in $\mathbb{R}$.
Each MS-computable singleton in $\mathbb{R}$ must be a recursive real, and by Fact \ref{fact:hyp},
the MS-computable singletons in $\baire$ are far from being recursive:

\begin{lem} \label{lem:singleton}
\begin{itemize}
\item [(i)] $\{f\}\subseteq\baire$ is MS-computable if and only if $f$ is a $\Pi^0_1$-singleton.
\item [(ii)] $\{x\}\subseteq\mathbb{R}$ is MS-computable if and only if $x$ is a recursive real.
\end{itemize}
\end{lem}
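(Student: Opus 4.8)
The plan is to route every implication through the notion of an \emph{effectively closed} set, invoking the already established machinery: Lemma~\ref{lem:pi01isstronglycomputable} and Theorem~\ref{thm:stronglycomputableisdelta02} over $\baire$, and Lemma~\ref{lem:open_R} and Theorem~\ref{thm:rec_in_R} over $\R$. Recall that a subset of $\baire$ is effectively closed precisely when it is the set of infinite branches of a recursive tree; thus ``$f$ is a $\Pi^0_1$-singleton'' is just another way of saying ``$\{f\}$ is effectively closed over $\baire$''.

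For the ``if'' directions, first suppose $f$ is a $\Pi^0_1$-singleton. Then $\{f\}$ is effectively closed over $\baire$, so by Lemma~\ref{lem:pi01isstronglycomputable} it is recursive over $\baire$, i.e.\ MS-computable. Next suppose $x$ is a recursive real, with recursive fast converging Cauchy sequence $\gamma$. Then
\[
\R\setminus\{x\}=\bigcup\{\,B(c,r): c,r\in\Q,\ r>0,\ |c-x|>r\,\},
\]
and the relation ``$|c-x|>r$'' is r.e.\ uniformly in $(c,r)$ via $\gamma$ (it holds iff $|c-\gamma(m)|>r+2^{-m}$ for some $m$). Hence $\{x\}$ is effectively closed over $\R$, and by Lemma~\ref{lem:open_R} it is recursive over $\R$, i.e.\ MS-computable.

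For the ``only if'' directions, suppose the singleton is MS-computable. By Theorem~\ref{thm:stronglycomputableisdelta02} (over $\baire$) or Theorem~\ref{thm:rec_in_R} (over $\R$) it is $\Delta^0_2$, hence effectively $F_\sigma$, so it equals $\bigcup_i C_i$ for a uniform sequence $(C_i)$ of effectively closed sets (take complements in the effective $G_\delta$ representation of its complement). Since each $C_i$ is contained in a one-point set whose union over $i$ is that same one-point set, some $C_{i_0}$ already equals the whole singleton; thus the singleton is effectively closed. Over $\baire$ this says exactly that $f$ is a $\Pi^0_1$-singleton, finishing part (i).

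It remains, for part (ii), to deduce from ``$\{x\}$ effectively closed over $\R$'' that $x$ is a recursive real. Write $\R\setminus\{x\}=\bigcup_k B(c_k,r_k)$ for a recursive sequence $(c_k,r_k)$, fix (non-uniformly) an integer $N$ with $x\in(-N,N)$, and set $F_s=[-N,N]\setminus\bigcup_{k<s}B(c_k,r_k)$. Each $F_s$ is a finite union of closed intervals with rational endpoints, computable from $s$, and is nonempty since it contains $x$; nonemptiness and diameter of such a set are decidable. Moreover $F_0\supseteq F_1\supseteq\cdots$ with $\bigcap_s F_s=\{x\}$, so by compactness of $[-N,N]$ (the finite intersection property applied to the decreasing closed sets $F_s\cap\{y:|y-x|\ge\varepsilon\}$) for every $\varepsilon>0$ there is an $s$ with $\mathrm{diam}(F_s)<\varepsilon$. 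Letting $s_n$ be the least $s$ with $\mathrm{diam}(F_s)<2^{-n-2}$ and $\beta(n)$ the left endpoint of the leftmost interval of $F_{s_n}$, the sequence $\beta$ is recursive, fast converging, and has limit $x$; so $x$ is a recursive real. This last step is the only real obstacle: the other three implications are short consequences of the structure theorems, whereas here one must run the compactness argument to know the search for $s_n$ terminates, and check that the finite-stage objects $F_s$ are genuinely decidable so that $\beta$ is effective.
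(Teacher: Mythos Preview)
Your proof is correct and follows the same overall route as the paper: both directions go through the $\Delta^0_2$ characterization and the observation that a $\Sigma^0_2$-singleton is already $\Pi^0_1$. The only noteworthy difference is in the final step of (ii): where you run an explicit compactness argument on the sets $F_s$ to build a recursive Cauchy name for $x$, the paper simply observes that an effectively closed singleton in $\R$ is both left-r.e.\ and right-r.e.\ (again a compactness fact), hence recursive. Your version is more self-contained; the paper's is shorter by appealing to that standard characterization.
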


\begin{proof}
(i): Each $\Pi^0_1$-class in $\baire$ is MS-computable by Lemma \ref{lem:delta02isstronglycomputable}.
On the other hand, if $\{f\}$ is MS-computable then it is a $\Delta^0_2$ class by Theorem \ref{thm:stronglycomputableisdelta02},
and each $\Sigma^0_2$-singleton is also $\Pi^0_1$.

(ii): The right-to-left direction is obvious. Now suppose that $\{x\}$ is MS-computable, hence it is effectively $\Delta^0_2$,
and hence effectively closed. However, each effectively closed singleton in $\mathbb{R}$ is clearly both left-r.e.~and right-r.e.

\end{proof}

A set $X\subseteq Y$ is said to {\em determine} a class of functions with domain $Y$,
if for any two functions $F$ and $G$ from this class, $F\upharpoonright X=G\upharpoonright Y$ implies that $F=G$.
For instance, the class of TTE-computable functions is determined by the class of dyadic rationals.

Next, we compare MS-computability on $\baire$ and on $\mathbb{R}$ by comparing the least complicated set that determines MS-computability.
Again, we see that MS-computable functions on $\mathbb{R}$ can be represented much more simply than MS-computable functions on $\baire$:

\begin{prop}
\begin{itemize}
\item [(i)] The class of MS-computable functions on $\baire$ is determined by the class of all $f\in\baire$ with $f\leq_T \mathcal{O}$ (the Kleene's $\mathcal{O}$).
Furthermore, for any recursive ordinal $\alpha$, the class $\Delta^0_\alpha$ does not determine the class of MS-computable functions on $\baire$.
\item[(ii)] The class of MS-computable functions on $\mathbb{R}$ is determined by the class of all $\Delta^0_2$ reals (in fact, low reals), but not by the class of all computable reals.
\end{itemize}
\end{prop}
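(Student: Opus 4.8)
The plan is to establish the four assertions separately: each ``determines'' direction by contraposition, each ``does not determine'' direction by an explicit pair of MS-computable functions.

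For part~(i): to see that $\{f\in\baire : f\leq_T\mathcal{O}\}$ determines, suppose $F\neq G$ are MS-computable over $\baire$. By Theorem~\ref{thm:MScomputable_equal_Delta2} and its corollary each of $F,G$ is presented by a recursive decomposition into pieces $O_i\cap C_j$ on which the function agrees with a fixed TTE functional, so the disagreement set $D=\{x\in\baire : F(x)\neq G(x)\}$ --- the convention being that $\neq$ also covers the case where exactly one side converges --- is a nonempty arithmetical, hence $\Sigma^1_1$, subset of $\baire$, and the Gandy Basis Theorem supplies a witness $f\in D$ with $f\leq_T\mathcal{O}$. For the failure of $\Delta^0_\alpha$ to determine ($\alpha$ a recursive ordinal) I would invoke Fact~\ref{fact:hyp} to get $g\equiv_T\emptyset^{(\alpha)}$ with $\{g\}$ a $\Pi^0_1$-singleton; by Lemma~\ref{lem:singleton}(i) the characteristic function $\chi_{\{g\}}\colon\baire\to\N$ is MS-computable, and since $g$ is not $\Delta^0_\alpha$ it agrees with the constant $0$ function on every $\Delta^0_\alpha$ element of $\baire$ while differing from it at $g$, so $F=\chi_{\{g\}}$ and $G\equiv 0$ witness the failure.

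For the first half of part~(ii): given MS-computable $F\neq G$ over $\R$, I would use the $\R$-analog of Corollary~\ref{cor:decomposability} to partition $\dom F$ and $\dom G$ effectively into $\Delta^0_2$ pieces of the form $O\cap C$ (with $O$ effectively open and $C$ effectively closed over $\R$), on each of which the function agrees with a TTE-computable map, continuous by Lemma~\ref{lem:TTE_continuous}. If $F(x)\downarrow\neq G(x)\downarrow$ for some $x$, let $H_1,H_2$ be the TTE maps on the pieces through $x$; by Lemma~\ref{lem:finding_index_for nonzero} applied to $H_1-H_2$ the set $\{y:H_1(y)\neq H_2(y)\}$ is effectively open, so intersecting it with the two pieces gives a nonempty set of the form $O'\cap C'$ ($O'$ effectively open, $C'$ effectively closed over $\R$) containing $x$; shrinking to a rational closed interval around $x$ inside $O'$ leaves a nonempty bounded effectively closed set, which contains a low real $x_0$ by the Low Basis Theorem, and there $F(x_0)=H_1(x_0)\neq H_2(x_0)=G(x_0)$. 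The case in which exactly one of $F(x),G(x)$ converges I would handle by the same localization, the extra constraint (membership in, or outside, the relevant piece, plus convergence of the relevant TTE functional) being again effectively open or effectively closed, so one still lands inside a nonempty bounded effectively closed set. Since low reals are $\Delta^0_2$, the $\Delta^0_2$ reals determine as well. For the second half of part~(ii) I would embed a $\Pi^0_1$ class $P\subseteq 2^\omega$ with no computable member effectively and homeomorphically onto a copy $C\subseteq[0,1]\subseteq\R$, so that $C$ is a nonempty effectively closed subset of $\R$ with no computable real member; by Lemma~\ref{lem:open_R} the function $\chi_C$ is MS-computable over $\R$, and it agrees with the constant $0$ function on every computable real but differs from it on the nonempty set $C$.

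The step I expect to be the main obstacle is the first half of~(ii): one must keep the disagreement set low enough in the arithmetical hierarchy --- in particular control the complexity of the domains of partial MS-computable functions in the convergence-versus-divergence case --- so that the Low Basis Theorem, not merely the Gandy Basis Theorem, applies and actually yields a low (hence $\Delta^0_2$) witness rather than just a witness recursive in $\mathcal{O}$. By contrast, the bookkeeping in the ``does not determine'' directions is routine, amounting to the facts that $\emptyset^{(\alpha)}\notin\Delta^0_\alpha$ and that there is a nonempty $\Pi^0_1$ class with no computable member.
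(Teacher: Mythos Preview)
Your plan matches the paper's proof essentially step for step: Gandy/Kleene basis for the first half of~(i), $\Pi^0_1$-singletons of high degree for the second half, the $\Delta^0_2$ decomposition (Corollary~\ref{cor:decomposability}) plus a low-basis argument for the first half of~(ii), and an effectively closed set with no computable member for the second half. Two small remarks. First, for~(i) your one-singleton construction ($\chi_{\{g\}}$ versus the zero function) is a slight simplification of what the paper does---the paper instead takes \emph{two} $\Pi^0_1$-singletons $\{F\},\{G\}$ with $F\equiv_T\emptyset^{(\alpha+1)}$ and $G\equiv_T\emptyset^{(\alpha+2)}$ and compares $\chi_{\{F\}}$ with $\chi_{\{G\}}$; your version is cleaner and avoids any quibble about whether the constant zero function counts. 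Second, the obstacle you flag in~(ii)---controlling the complexity when exactly one of $F(x),G(x)$ diverges---is real, but the paper does not confront it: it invokes Corollary~\ref{cor:decomposability}, which is stated only for \emph{total} MS-computable functions, and silently reads the proposition in that setting. So your worry is addressing more than the paper actually proves; if you restrict to total functions the low-basis step goes through exactly as you outline (intersect the two $\Delta^0_2$ pieces with a small rational interval on which the continuous $H,\hat H$ are separated, and extract a low point from the resulting nonempty $\Pi^0_1$ class).
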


\begin{proof}
(i): If $F$ and $G$ are MS-computable, then $F(\alpha)\leq_T\alpha'$, and thus $F(\alpha)\neq G(\alpha)$ is an arithmetical predicate.
By Kleene's Basis Theorem (see \cite[Theorem 1.3 on Chapter 3]{Sac90}), this is witnessed by some $\alpha\leq_T \mathcal{O}$.
On the other hand, by Fact \ref{fact:hyp} and Lemma \ref{lem:singleton}, for any recursive ordinal $\alpha$,
there are MS-machines $M_d$ and $M_e$ computing the sets $\{F\}$ and $\{G\}$ for some $F\equiv_T\emptyset^{(\alpha+1)}$ and some $G\equiv_T\emptyset^{(\alpha+2)}$.
So they differ only at the points $F$ and $G$ and in particular, they are equal on all inputs recursive in $\emptyset^{(\alpha)}$.

(ii): By Corollary \ref{cor:decomposability},
if $F$ and $\hat{F}$ are MS-computable and $F(x)\neq \hat{F}(x)$ for some real $x$,
then for some $\Delta^0_2$ sets $X,\hat{X}$, and some effectively continuous $H$ and $\hat{H}$,
we have $x\in X\cap \hat{X}$, $F(x)=H(x)$ and $\hat{F}(x)=\hat{H}(x)$. Since $H(x)\neq\hat{H}(x)$,
let $I$ be a small enough rational interval containing $x$ such that $\left(H\upharpoonright I\right)\cap \left(\hat{H}\upharpoonright I\right)=\emptyset$.
Now consider $X\cap\hat{X}\cap I$; this is a non-empty $\Delta^0_2$-class in $\mathbb{R}$,
and therefore must contain some low member, say $\hat{x}$.
But this means that $F(\hat{x})=H(\hat{x})\neq \hat{H}(\hat{x})=\hat{F}(\hat{x})$.  On the other hand,
there is an effectively closed set $C$ which has no recursive member,
any MS-machine which computes $C$ will agree with constant zero function on all recursive real.
\end{proof}

\section{Concluding Remarks}
There is an interesting debate about whether algorithms should be defined in terms of abstract machines
(Gurevich \cite{Gurevich:2000} \cite{Gurevich:2012}) or in terms of ``recursor'' (Moschovakis \cite{Moschovakis:2001}).
Vardi made some inspiring remarks in his short article \cite{Vardi:2012}.  He mentioned the de Broglie's {\em wave particle duality}
in physics and claimed
\begin{quote}
  An algorithm is both an abstract state machine and a recursor, and neither view by itself fully describes what an algorithm is.
This algorithmic duality seems to be a fundamental principle of computer science.
\end{quote}
This duality occurred in the classical definition of algorithms over natural numbers, and it occurred again in our analysis of
computability over real numbers.  Our analysis seemed to suggest that the classical correspondence between the recursion schemes
and Turing machines can be applied to other domains.
Both the recursion scheme and the Turing machine can be viewed as control units above some domain-dependent primitive functions or operations.

However, our work heavily depends on the underlying algebraic and topological structure of real numbers,
and the machine part also relies on representing a real number as an $\omega$-sequence.
Comparing to the natural and elegant formalizations of computability over natural numbers,
in particular the works by G\"odel \cite{Godel:1934} and Turing \cite{Turing:1936}, more insights are needed.
The ultimate question remains to be: Is there a natural, elegant and general definition of algorithms over all domains?  If so, what is it?


\begin{thebibliography}{18}

\bibitem{Blum.Shub.ea:1989}
Blum, L.,  Shub, M. and Smale, S.
\newblock On a theory of computation over the real numbers: NP completeness,
   recursive functions and universal machines.
\newblock {\em Bull.\ Amer.\ Math.\ Soc.\ } (N.S.) {\bf 21}, no.\ 1, 1--46, 1989.

\bibitem{Brattka:2003}
Brattka, V.,
\newblock The emperor's new recursiveness: the epigraph of the exponential function in two models of computability,
\newblock in {\em Words, languages \& combinatorics, {III} ({K}yoto, 2000)},
63--72, World Sci. Publ., River Edge, NJ, 2003.

\bibitem{Godel:1931}
G\"odel, Kurt.
\newblock {\"U}ber formal unentscheidbare S\"atze der {\sl Principia mathematics} und verwandter Systeme I.
\newblock \emph{Monatshefte f\"ur Mathematik und Physik}, {\bf 38}, 173--198, 1931.

\bibitem{Godel:1934}
G\"odel, Kurt.
\newblock {\em On undecidable propositions of formal mathematical systems} (mimeographed lecture notes, taken by Stephen C. Kleene and J. Barkley Rosser) 1934;
reprinted with revisions in {\em The undecidable: basic papers on undecidable propositions, unsolvable problems, and computable functions} M. Davis (ed.) 39 -- 74, Raven Press, NY, 1965.

\bibitem{Gurevich:2000}
Gurevich, Yuri.
\newblock Sequential Abstract State Machines capture Sequential Algorithms.
\newblock {\em ACM Transactions on Computational Logic}
Volume 1, Number 1, 77-111, 2000.

\bibitem{Gurevich:2012}
Gurevich, Yuri.
\newblock {\em ``What is an algorithm?''} in {\em SOFSEM: Theory and Practice of Computer Science} (eds. M. Bielikova et al.),
Springer LNCS {\bf 7147}, 31--42, 2012.


\bibitem{JR1982}
Jayne, J. E., and Rogers, C. A.
\newblock First level Borel functions and isomorphisms.
\newblock {\em J. Math. Pures Appl.},  (9) {\bf 61} no. 2, 177--205,1982.

\bibitem{JM1969}
Jockusch, C. G., Jr. and McLaughlin, T. G.
\newblock Countable retracing functions and $\Pi_{2}^{0}$ predicates
\newblock \emph{Pacific J. Math.}, {\bf 30}, 67--93, 1969.

\bibitem{Kleene:1936}
Kleene, Stephen C.
\newblock General recursive functions of natural numbers.
\newblock {\em Mathematische Annalen} {\bf 112}, 727 -- 742, 1936.

\bibitem{Moschovakis:1969}
Moschovakis, Yiannis N.
\newblock Abstract first order computability I,
\newblock {\em Trans. Amer. Math. Soc.} {\bf 138}, 427--464, 1969.

\bibitem{Moschovakis:2001}
Moschovakis, Yiannis N.
\newblock {\em ``What is an algorithm?''}, in {\em Mathematics Unlimited---2001 and beyond} (eds. B. Engquist and W. Schmid), Springer,  919--936, 2001.


\bibitem{paulybrecht}
Pauly, Arno, and de Brecht, Matthew.
\newblock Non-deterministic computation and the Jayne-Rogers Theorem.
\newblock {\em Electronic Proceedings in Theoretical Computer Science.}  {\bf 143}, 87 -- 96..





\bibitem{Sac90}
Gerald~E. Sacks.
\newblock \emph{Higher Recursion Theory}.
\newblock Perspectives in Mathematical Logic. Springer-Verlag, Berlin, 1990.

\bibitem{Soare:1987}
Robert~I. Soare.
\newblock {\em Recursively Enumerable Sets and Degrees}.
\newblock Springer--Verlag, Heidelberg, 1987.

\bibitem{Turing:1936}
Turing, A. M.
\newblock On Computable Numbers, with an Application to the Entscheidungsproblem.
\newblock \emph{Proc. London Math. Soc.}, S2-42  (1): 230--265, 1936.

\bibitem{Turing:1939}Turing, A.M.
\newblock Systems of logic based on ordinals,
\newblock \emph{Proc. London Math. Soc.}, 45 Part 3: 161–-228, 1939.

\bibitem{Vardi:2012}
Vardi, Moshe Y.
\newblock What is an algorithm?
\newblock {\em Communications of the ACM}, {\bf 55} No.~3: p.5, 2012.

\bibitem{Weihrauch:2000}
Weihrauch, Klaus.
\newblock Computable Analysis, an introduction.
\newblock \emph{Springer-Verlag}, Berlin, 2000.


\bibitem{Yang:ta}
Yang, Y.
\newblock A Turing machine like model for computation on real numbers.
\newblock To appear in the Lecture Notes Series. Institute for Mathematical Sciences. National University of Singapore.
World Scientific Publishing.
\end{thebibliography}
\end{document}